\newcommand{\op}{\oplus}
\DeclarePairedDelimiter{\abs}{\lvert}{\rvert}
\newcommand{\pdeg}{{\sf pdeg}}
\newcommand{\pr}{\mathcal{P}}
\newcommand{\xv}{\mathbf{x}}
\newcommand{\yv}{\mathbf{y}}
\newcommand{\lin}{\mathcal{L}}
\newcommand{\x}{\bm{\hat{x}}}
\newcommand{\var}[1]{\mathbf{#1}}
\newcommand{\sym}[2]{\widehat{sym}^{#1}_{#2}}
\newcommand{\F}{\mathbb{F}_2}
\newcommand{\lf}{\bf{L_{n_2}}}
\newcommand{\M}{\mathbb{MM}}
\newtheorem{construction}{Construction}
\newtheorem{result}{Result}
\title{
On Boolean Functions with Low Polynomial Degree and 
Higher Order Sensitivity
}
\titlerunning{Low Polynomial Degree and Higher Order Sensitivity}
\author{Subhamoy Maitra}
{Applied Statistics Unit, Indian Statistical Institute, Kolkata, India}
{subho@isical.ac.in}
{}
{}
\author
{Chandra Sekhar Mukherjee}
{Indian Statistical Institute, Kolkata, India}
{chandrasekhar.mukherjee07@gmail.com}
{}
{}
\author{Pantelimon Stanica}
{Naval Postgraduate School, Monterey, USA}
{pstanica@nps.edu}
{}
{}
\author{Deng Tang}
{Southwest Jiaotong University, Chengdu, China}
{dtang@foxmail.com}
{}
{}
\authorrunning{Maitra, Mukherjee, Stanica and Tang}
\begin{document}
\nolinenumbers

\maketitle
\keywords{Higher Order Sensitivity, Resiliency, Maiorana-McFarland Construction, Polynomial Degree, Sensitivity, Separation.}
\begin{abstract}
Boolean functions are important primitives in different domains of cryptology, complexity and coding theory. 
In this paper, we connect the tools from cryptology and complexity theory in the domain of 
Boolean functions with low polynomial degree and high sensitivity.
It is well known that the polynomial degree of of a Boolean function and its resiliency are directly connected. 
Using this connection we analyze the polynomial degree-sensitivity values through the lens of resiliency, demonstrating 
existence and non-existence results of functions with low polynomial 
degree and high sensitivity on small number of variables (upto 10).
In this process, borrowing an idea from complexity theory, we show that one can implement resilient Boolean functions on a large number of 
variables with linear size and logarithmic depth. 
Finally, we extend the notion of sensitivity to higher order and note that the existing construction idea of Nisan and Szegedy (1994) can provide 
only constant higher order sensitivity when aiming for polynomial degree of $n-\omega(1)$. 
In this direction, we present a construction with low ($n-\omega(1)$) polynomial degree and 
super-constant $\omega(1)$ order sensitivity exploiting Maiorana-McFarland constructions, 
that we borrow from construction of resilient functions.
The questions we raise identify novel combinatorial problems in the domain of Boolean functions.
\end{abstract}

\section{Introduction \label{sec:1} }
Real polynomial degree ($\pdeg(f)$) and sensitivity ($s(f)$) are two central properties of Boolean functions in complexity theory, and have 
been studied extensively over the past three decades. These notions have important implications in the domain of query complexity~\cite{BW02} 
(and not only), where finding functions with lower polynomial degree than sensitivity generates more candidates for obtaining super-linear 
separation between two of the query complexity models, the classical deterministic and exact quantum models. A detailed study of these properties 
and the relations can be found in~\cite{BW02}.

Determining the maximum possible separation between sensitivity and polynomial degree 
is an open problem that has been studied widely. Implicitly, the problem reduces to finding the separation between the number of variables 
and the real polynomial degree in a fully sensitive function (a function $f$ on $n$ variables with $s(f)=n$). 
Informally, the sufficient and necessary condition for obtaining full sensitivity in a function $f$ on $n$ variables is to have 
an input point $\xv \in \{0,1\}^n$ such that $f(\xv)=\overline{f(\xv^i)}, 1 \leq i \leq n$, where $\xv^i$ is obtained 
by altering the value of the $i$-th bit of $\xv$. Thus, we fix the output corresponding to some $n+1$ input points, of the 
total $2^n$ input points. Interestingly, this greatly restricts the real polynomial degree of the function. 
Without any restriction, a function that depends on all of its variables can have $\pdeg(f)$ 
as low as $\mathcal{O}(\log n)$. However, one of the seminal papers~\cite{NW95} in the study of 
Boolean functions dictate that $\pdeg(f)= \Omega\left({s(f)}^{\frac{1}{2}} \right)$.
Furthermore, apart from the famous recursive amplification method (which is also known as the function composition), 
there does not exist any known method to obtain functions with non-constant separation 
between $s(f)$ and $\pdeg(f)$. In fact, the maximum separation 
known between $s(f)$ and $\pdeg(f)$ is achieved by finding 
a $6$ variable function with $s(f)=6$ and $\pdeg(f)=3$ (due to Kushilevitz~\cite{NS94})
and then recursively amplifying it. This results in a function $f$ on $n=6^d$ 
variables with full sensitivity ($n$)
and polynomial degree of $3^d$ so that $s(f)={\pdeg(f)}^{\log_36} \approx {\pdeg(f)}^{1.63}$~\cite{NW95}.
On the other hand, the real polynomial degree is intrinsically connected to the cryptographically important 
property of resiliency. Specifically if a function $f$ on $n$ variables has $\pdeg(f)=m$
then the function $g= f \op \lin_n$ is $(n-m-1)$-resilient, where $\lin_n$ is the all variable (symmetric) linear function. 
In this paper we refer to $g$ and $f$ as each other's dual. In this regard, we define the  dual sensitivity ($ds(f)$) property,
where a function $g$ has full dual sensitivity if and only if there exists a point 
$\xv \in \{0,1\}^n$ such that $g(\xv)=g(\xv^i), 1 \leq i \leq n$.
This results in a one-to-one connection between the $\pdeg(f)$-$s(f)$
relationship and resiliency order-$ds(g)$ relationship, where $g$ is the dual of $f$. We should remark here that the resilient Boolean functions 
have received a lot of interest in construction of symmetric ciphers as evident from~\cite{KM06,MS99,SM02}.

In this paper, we show that the techniques from complexity theory and cryptology can supplement each other with respect to combinatorial aspects of Boolean functions which are important in their own interests,  and extend the notion of sensitivity to higher order towards a better understanding how fixing of outputs with respect to flipping of more than one input bits can effect the lower bound on the polynomial degree of a function. 
That is, we use different properties of resiliency and polynomial degree to obtain 
results and constructions that apply to both paradigms of cryptology and complexity theory.

\subsection{Contribution and Organization}
Our paper is centered around two related concepts. In the first part we exploit the well known connection between resiliency and polynomial degree.

We observe an one to one connection between ``low polynomial degree-high sensitivity'' and ``high resiliency-dual sensitivity'' of Boolean functions, via their Fourier spectrum based definitions. We use this connection to search for separation between $s(f)$ and $\pdeg(f)$ 
for functions on all small number of variables through the resiliency approach.
We find new classes of functions with maximum $s(f)-pdeg(f)$ separation and obtain super-linear separation between $n$ and $\pdeg(f)$ for second and third order sensitive functions. Specifically we find the following which were not known earlier.
\begin{itemize}
\item There exists second order six variable functions with $\pdeg(f)=3$. That is the maximum known separation for $n$ and $\pdeg$ is same for first and second order sensitivity.
\item There does not exist any $7$ variable fully sensitive function with $\pdeg(f)=3$.
\end{itemize}
Further, We analyze the famous recursive amplification (RA) method. 
We show that its generalization allow us to obtain additional classes of functions with super-linear separation between $s(f)$ and $\pdeg(f)$. 
Next, using the RA method we design efficient circuits 
(linear size and logarithmic depth (in $n$) for highly resilient 
functions $(n-o(n))$, improving upon best known results. 
We further obtain bounds on cryptographic properties of 
these functions, such as nonlinearity and describe different trade-offs. 

We use our resiliency based search method to obtain second  order sensitive $6$ variable functions with  $\pdeg(f)=3$. Coupled with the modified recursive amplification method that we propose in Section~\ref{sec:3} (Further elaborated in Theorem~\ref{th:ranew} in Appendix~\ref{app:sec:3} ), this gives us second order functions $f^u$ with $n=6^u$ variables and $\pdeg(f^u)=n^{\frac{\log 3}{\log 6}}$, matching the best known bound between $n$ and $\pdeg(f)$ for first order sensitivity. This raises the question of whether asymptotic separation between $n$ and $\pdeg(f)$ is a strictly decreasing 
function when plotted against sensitivity order.

In the second part, we take a deeper look on higher ($k$-th) order sensitivity. 
Sensitivity of a function $f: \F^n \rightarrow \F$, at a point $\xv \in \F^n$,  is the number of bits of $\xv$ so that, upon flipping any one of them, the output of the function also flips. Extending this, a function is called $k$-th order sensitive, if there is a point $\xv$ such that changing any $1 \leq i \leq k$ of the input bits changes the output of the function. This is a much stronger notion than simple sensitivity, where a fully sensitive function of present literature ($s(f)=n$) is a first order one. We are interested in understanding the level of restriction higher order sensitivity (dual sensitivity) has on polynomial degree (resiliency).
\begin{itemize}
\item In this direction we observe for constant sensitivity order, we can design functions with sub-linear polynomial degree $o(n)$ using the recursive amplification method. However, even for functions with $\pdeg(f)= \mathcal{O}\left(n-o(n) \right)$,
we can only have constant sensitivity order through this method.  
\item Finally, we show that the Maiorana-McFarland (MM) construction can be used to show super-constant separation between $n$ and $\pdeg(f)$ for functions with super-constant ($\omega(1)$) sensitivity order. We first design a first order sensitive function with $\pdeg(f)=n \log n$, which is worse than the recursive amplification method as mentioned above. However, by concatenating non-linear symmetric function in one of the half spaces of the MM function, we obtain the desirable trade-off. We obtain $k$-th order sensitive functions, with $\pdeg(f)=n- \frac{\log \left( \frac{n}{2}-k \right) - \log (k)}{k}$.
This gives us functions with $s(f)=\omega(1)$ and $\pdeg(f)=n-\omega(1)$. For example, 
if we set $k= \log \log (n)$, we get $\log \log (n)$-order sensitive functions with polynomial 
degree of $\approx n- \frac{\log n}{\log \log n}$.
\end{itemize}
The paper is organized as follows. In Section~\ref{sec:1-3} we recall the definitions of sensitivity and polynomial degree and the connection of polynomial degree with resiliency. Then we briefly describe the concept dual sensitivity, followed 
by that of higher order sensitivity and higher order dual sensitivity with the fundamental equivalences. 
In Section~\ref{sec:2} we obtain the search based results on sensitivity and higher order sensitivity vs. polynomial degree.
In Section~\ref{sec:3} we study the recursive amplification method and obtain the related results. 
Section~\ref{sec:4} is dedicated towards finding functions with higher order sensitivity and lower than $n$ ($n-w(1)$) polynomial degree.
We conclude the paper in Section~\ref{sec:5}.

\subsection{Preliminaries \label{sec:1-3}}
The definitions of resiliency and polynomial degree are based on the Fourier spectrum of a Boolean function~\cite{bool}.
A function $f$ has polynomial degree $k$ iff its Fourier spectrum values $W_f(\xv)$ are $0$ for all $\xv \in \{0,1\}^n$ with hamming weight  $wt$ of $\xv$ being greater than $k$.
A function $f$ is $k$-resilient if and only if 
its Fourier spectrum values are $0$ for all $\xv: wt(\xv) \leq k$.
We call a function to be $k$-th order resilient when it is $k$-resilient but not $k+1$-resilient.
Given a function $f$,  $W_f(\xv)= W_{f \op \lin_n}(\bar{\xv})$
where $\lin_n$ is the linear function on $n$ variables with $\bar{\xv}$
obtained by flipping each bit of $\x \in \F^n$. 
These structural arguments gave rise to the famous result 
connecting the resiliency order and polynomial degree of Boolean functions. 
\begin{theorem}[\cite{bool}, page 150]
\label{th:old}
If a function $g$ is $k$-th order resilient then
the function $f= g \op \lin_n$ will have a polynomial degree equal to $n-k-1$, where $\lin_n= \oplus_{i=1}^n x_i$. 
\end{theorem}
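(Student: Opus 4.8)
The plan is to argue entirely through the Fourier-spectrum characterizations of resiliency and polynomial degree recalled just above, together with the complementation identity $W_f(\xv) = W_{f \op \lin_n}(\bar{\xv})$. The one idea doing all the work is that complementing the index vector $\xv$ interchanges the low-weight region, where the Fourier coefficients of a resilient function vanish, with the high-weight region, where the Fourier coefficients of a low-degree function vanish.

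First I would unpack the hypothesis that $g$ is $k$-th order resilient into its two components: (i) $W_g(\yv) = 0$ for every $\yv$ with $wt(\yv) \le k$, and (ii) there is at least one vector $\zv$ with $wt(\zv) = k+1$ and $W_g(\zv) \ne 0$ — for otherwise $g$ would already be $(k+1)$-resilient, contradicting that its resiliency order is exactly $k$. Keeping both parts is what will let us conclude the exact value $n-k-1$ rather than merely an upper bound on $\pdeg(f)$.

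Next I would prove the upper bound $\pdeg(f) \le n-k-1$. For any $\xv$ with $wt(\xv) \ge n-k$ we have $wt(\bar{\xv}) = n - wt(\xv) \le k$, so $W_g(\bar{\xv}) = 0$ by (i); since $W_f(\xv) = W_g(\bar{\xv})$ by the cited identity, $W_f(\xv) = 0$. As every $\xv$ of weight exceeding $n-k-1$ thus carries a zero coefficient, the degree characterization gives $\pdeg(f) \le n-k-1$. For the matching lower bound, take the witness $\zv$ from (ii) and set $\xv = \bar{\zv}$: then $wt(\xv) = n-(k+1) = n-k-1$ and $W_f(\xv) = W_g(\zv) \ne 0$, so $f$ has a nonzero Fourier coefficient of weight exactly $n-k-1$, forcing $\pdeg(f) \ge n-k-1$. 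Combining the two bounds yields $\pdeg(f) = n-k-1$.

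There is no genuine obstacle here; the argument is a two-line weight-complementation computation on top of the stated definitions and identity. The only point that needs care is reading ``$k$-th order resilient'' strictly as ``$k$-resilient but not $(k+1)$-resilient'', since it is exactly the non-$(k+1)$-resiliency that supplies the nonzero coefficient at weight $n-k-1$ and upgrades the degree bound to an equality.
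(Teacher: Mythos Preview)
Your proof is correct and follows precisely the approach the paper itself signals: the paper does not give a formal proof of this theorem (it is cited from~\cite{bool}), but the sentence immediately preceding it records the key identity $W_f(\xv)=W_{f\op\lin_n}(\bar{\xv})$ and remarks that ``these structural arguments gave rise to'' the result. Your argument is exactly that structural argument carried out in full, including the care to use the strict reading of ``$k$-th order resilient'' to pin down equality rather than just $\pdeg(f)\le n-k-1$.
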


Next, we study the notion of sensitivity and also, that 
of dual sensitivity, which we define to better understand the connection between
resiliency and polynomial degree.

\subsection{Sensitivity}
Sensitivity $s(f)$ is one of the most studied properties of Boolean function.
For any $\xv \in \F^n$, we let  $\xv^i$ to be $\xv$  
with the $i$-th bit of $\xv$ flipped (complemented).
The sensitivity of a Boolean function $f: \F^n \rightarrow \F$ at a point $\xv$
can be defined as $s(f,\xv)=|i \in [n]: f(\xv) \neq f(\xv^i)|$,
and the sensitivity of a function is 
$
s(f)=\max_{\xv \in \F^n} s(f,\xv).
$
It is natural to consider the situation where we want the function to have the same 
value even if multiple input bits of $\xv$ are flipped regardless of their position. 
In this direction, we define the $k$-th order sensitivity of a Boolean function. 

{\Large \bf $k$-th order sensitivity: }
For  any set $S \subseteq [n]$ and the input point $\xv \in \F^n$ we define
$\xv^{(S)}$ as the input point obtained by flipping the $j$-th bit of $\xv$ for all $j \in S$.
Sensitivity is defined around the notion of flipping any single component corresponding to 
a given input where the output of the function remains unchanged. 
In this regard we define $k$-th order sensitivity of a function in the following manner.

\begin{definition}
We call a function $f: \F^n \rightarrow \F$ $k$-th order sensitive if there exists
$\xv \in \F^n$ such that 
$
f(\xv) \neq f \left( \xv^{(S)} \right),~ \forall S \subseteq [n],\ 1 \leq \abs{S} \leq k.  
$
\end{definition}

That is, $f$ is $k$-th order sensitive if there exists an input so that flipping any $i \leq k$ of 
the component bits of the input, changes the function's output.
Thus a first order sensitive function is simply a function with $s(f)=n$. 
The main implication of $k$-th order sensitivity is that it indeed further restricts how 
low the degree of the real polynomial corresponding to the function can be. 
Without any restrictions we know $\pdeg(f)$ can be as low as $\log n$ for functions that depend on $n$ variables. If we fix $s(f)=n$ then the polynomial degree is $\Omega \left( \sqrt{n} \right)$. The paper is centered around obtaining functions with $\omega(1)$-order sensitivity and $n-\omega(1)$-polynomial degree.

\subsection{Dual sensitivity}
Given a function $f$ on $n$ variables with polynomial degree $m$, 
we call the function $g=f \op \lin_n$, the dual of $f$, which has $n-m-1$ resiliency. 
The dual sensitivity of a function $f: \F^n \rightarrow \F$ at a point 
$\xv$ is defined as $ds(f, \xv) = \abs{i \in [n] : f(\xv) = f(\xv^i)}$.
The dual sensitivity of $f$  is $ds(f)=\max_{ \xv \in \F^n} ds(f, \xv).$
This notion can be extended to $k$-th order dual sensitivity in the following manner.

\begin{definition}
We say a function $f$ is $k$-th order dual sensitive if there exists $\xv \in \F^n$,
such that, for all $j: 1 \leq j \leq k$ we have:
\begin{itemize}
\item If $j \equiv 0 \bmod 2$, then $f(\xv) \neq f \left(\xv^{(S)} \right),\ \forall S \subseteq [n] \text{ with } \abs{S}=j$.
\item If $j \equiv 1 \bmod 2$, then $f(\xv) = f \left(\xv^{(S)} \right),\ \forall S \subseteq [n] \text{ with } \abs{S}=j$.
\end{itemize}
\end{definition}

That is, a function is $k$-th order dual sensitive if there is an input point such that if we flip the values of 
any odd number $\le k$ of input bits then the function's output remains unchanged and if we flip any even number $\le k$  of input bits
then the function's output gets complemented. 

\begin{proposition}
\label{prop:ds}
A function $f$ on $n$ variables is $k$-th order sensitive if and only if its dual $g=f \op \lin_n$ is  $k$-th order dual sensitive.  
\end{proposition}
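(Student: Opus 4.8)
The plan is to prove this by directly translating the defining condition of $k$-th order sensitivity for $f$ at a point into the defining condition of $k$-th order dual sensitivity for $g$ at the same (or a naturally related) point, using the elementary action of $\lin_n$ under bit flips. First I would record the key pointwise identity: since $\lin_n(\xv)=\bigoplus_{i=1}^n x_i$, flipping the bits in a set $S$ changes $\lin_n$ by exactly the parity of $|S|$, i.e. $\lin_n\bigl(\xv^{(S)}\bigr)=\lin_n(\xv)\op(|S|\bmod 2)$. Hence $g\bigl(\xv^{(S)}\bigr)=f\bigl(\xv^{(S)}\bigr)\op\lin_n(\xv)\op(|S|\bmod 2)$, and comparing with $g(\xv)=f(\xv)\op\lin_n(\xv)$ gives the clean relation
\[
g(\xv)\op g\bigl(\xv^{(S)}\bigr)=\bigl(f(\xv)\op f\bigl(\xv^{(S)}\bigr)\bigr)\op(|S|\bmod 2).
\]
Everything then follows by a case split on the parity of $|S|$.

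Next I would prove the forward direction: suppose $f$ is $k$-th order sensitive, witnessed by $\xv$, so $f(\xv)\neq f\bigl(\xv^{(S)}\bigr)$ for every nonempty $S$ with $|S|\le k$. I claim the same point $\xv$ witnesses $k$-th order dual sensitivity of $g$. Take $S$ with $|S|=j\le k$. If $j$ is even, the identity above gives $g(\xv)\op g\bigl(\xv^{(S)}\bigr)=(f(\xv)\op f\bigl(\xv^{(S)}\bigr))\op 0=1$, i.e. $g(\xv)\neq g\bigl(\xv^{(S)}\bigr)$, as required. If $j$ is odd, we get $g(\xv)\op g\bigl(\xv^{(S)}\bigr)=1\op 1=0$, i.e. $g(\xv)=g\bigl(\xv^{(S)}\bigr)$, again as required. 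This establishes that $g$ is $k$-th order dual sensitive.

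For the converse, I would run the argument in reverse: assume $g=f\op\lin_n$ is $k$-th order dual sensitive at some point $\xv$, and show $f$ is $k$-th order sensitive at $\xv$. For $|S|=j\le k$ with $j$ odd, dual sensitivity gives $g(\xv)=g\bigl(\xv^{(S)}\bigr)$, so the identity yields $f(\xv)\op f\bigl(\xv^{(S)}\bigr)=0\op 1=1$; for $j$ even, dual sensitivity gives $g(\xv)\neq g\bigl(\xv^{(S)}\bigr)$, so $f(\xv)\op f\bigl(\xv^{(S)}\bigr)=1\op 0=1$. Either way $f(\xv)\neq f\bigl(\xv^{(S)}\bigr)$ for all nonempty $S$ with $|S|\le k$, so $f$ is $k$-th order sensitive. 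Since $\lin_n$ is an involution under $\op$ (that is, $(f\op\lin_n)\op\lin_n=f$), the two implications are genuinely symmetric, and one can equivalently just prove one direction and invoke this involution to get the other.

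There is essentially no hard part here; the whole proof is the parity bookkeeping for $\lin_n$ under bit flips, which is the same computation underlying Theorem~\ref{th:old}. The only point that needs a little care is making sure the parity conventions in the definition of $k$-th order dual sensitivity (odd $\Rightarrow$ equal, even $\Rightarrow$ complemented) line up with the $|S|\bmod 2$ term, and checking the $j=1$ base case is consistent with the fact that a first order sensitive function is exactly one with $s(f)=n$ and a first order dual sensitive function is exactly one with $ds(f)=n$. I would state the pointwise identity as a displayed equation, then present the forward direction in one short paragraph and close with the remark about the $\lin_n$-involution to get the converse for free.
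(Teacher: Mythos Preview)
Your argument is correct. The paper actually states Proposition~\ref{prop:ds} without proof, treating it as an immediate consequence of the definitions, so there is no ``paper's proof'' to compare against; your write-up supplies exactly the parity computation that the authors are implicitly relying on, and your key identity $g(\xv)\op g\bigl(\xv^{(S)}\bigr)=\bigl(f(\xv)\op f(\xv^{(S)})\bigr)\op(|S|\bmod 2)$ together with the involution remark is the natural and complete justification.
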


\begin{note}
We use the following notations:
\begin{itemize}
\item An $(n,k,p)$-function is a Boolean function of $n$ variables that is $k$-th order sensitive and  has real polynomial degree at most $p$.

\item An $[n,k,m]$-function is a Boolean function of $n$ variables that is $k$-th order dual sensitive and  $m$-resilient. 
\end{itemize}
\end{note}
Thus we have the following proposition.
\begin{proposition}
\label{prop:tf}
Thus, if $f$ is an $(n,k,p)$-function then $g= f \op \lin_n$ is an $[n,k,n-p-1]$-function.
\end{proposition}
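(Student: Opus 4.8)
The plan is to split the claim into its two components and verify each separately, since the hypothesis ``$f$ is an $(n,k,p)$-function'' is itself the conjunction ``$f$ is $k$-th order sensitive'' and ``$\pdeg(f) \le p$'', and the conclusion ``$g$ is an $[n,k,n-p-1]$-function'' is the conjunction ``$g$ is $k$-th order dual sensitive'' and ``$g$ is $(n-p-1)$-resilient''. Throughout, $g = f \op \lin_n$, and we use that duality is an involution ($\lin_n \op \lin_n \equiv 0$), so equivalently $f = g \op \lin_n$.

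First I would dispatch the dual-sensitivity half, which is immediate: by hypothesis $f$ is $k$-th order sensitive, so Proposition~\ref{prop:ds} applied to $f$ yields directly that $g = f \op \lin_n$ is $k$-th order dual sensitive. Nothing more is needed here.

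Next I would establish the resiliency half from the Fourier-spectrum identity $W_f(\xv) = W_{f \op \lin_n}(\bar{\xv}) = W_g(\bar{\xv})$ recalled in Section~\ref{sec:1-3} (the same weight-complementation symmetry that underlies Theorem~\ref{th:old}). Since $\pdeg(f) \le p$, we have $W_f(\xv) = 0$ for every $\xv$ with $wt(\xv) > p$. Now fix an arbitrary $\yv \in \F^n$ with $wt(\yv) \le n - p - 1$. Then $wt(\bar{\yv}) = n - wt(\yv) \ge p+1 > p$, hence $W_g(\yv) = W_f(\bar{\yv}) = 0$. As this holds for every $\yv$ of weight at most $n-p-1$, the Fourier-based definition of resiliency gives that $g$ is $(n-p-1)$-resilient. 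Combining the two halves, $g$ is an $[n,k,n-p-1]$-function, as claimed.

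I do not anticipate a genuine obstacle: the statement is essentially a bookkeeping corollary of Proposition~\ref{prop:ds} together with the complementation symmetry of the Fourier transform. The only point requiring a little care is the direction of the degree-resiliency passage — we start from the \emph{inequality} $\pdeg(f)\le p$ (matching the $(n,k,p)$ definition) and correspondingly obtain a \emph{lower bound} on the resiliency order of $g$ (namely $\ge n-p-1$, matching the $[n,k,m]$ notation), rather than an exact-order equality; the argument above is phrased so that this is exactly what falls out.
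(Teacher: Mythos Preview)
Your proposal is correct and matches the paper's intent. The paper does not actually spell out a proof of Proposition~\ref{prop:tf}; it states the result as an immediate consequence of Proposition~\ref{prop:ds} together with the Fourier-complementation identity underlying Theorem~\ref{th:old}, and your two-part argument is precisely the natural unpacking of that implicit reasoning. Your care in handling the ``at most $p$'' versus ``at least $(n-p-1)$-resilient'' direction is appropriate and slightly more explicit than what the paper writes.
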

Let us now move onto the search based results.

\section{Search On Small Variables \label{sec:2}}
As we shall observe in Section~\ref{sec:3}, 
upon some modification, the recursive amplification method can be used to obtain $k$-th order sensitive functions $f^u$ on $d^u$ variables
with polynomial degree of $p^u$, starting from a function $f$ on $d$ variables and $\pdeg(f)=p$. Here $p$ is called the base function. 
Thus results of low $\pdeg$ of $k$-th order sensitive functions on small variables directly generate super-linear separations between $n$ and $\pdeg$ for $k$-th order sensitive functions. 

For example, if we obtain a fully sensitive (first oder sensitive) function on $7$ variables with $\pdeg(f)=3$, or a $10$ variable first order sensitive function with polynomial degree $4$, then it would improve upon the best known separation between $s(f)$ and $\pdeg(f)$. 
In this direction, 
the functions on up to $5$ variables can be exhaustively searched to obtain all existing combinations. However, for functions on $6$ and more variables, an exhaustive search is not possible given
the size of search space ($2^{64}$ for $n=6$, $2^{128}$ for $n=7$ and so on), and we instead use the properties of resiliency and dual sensitivity to completely exhaust the case of fully sensitive functions for $n=6$ and $n=7$ in terms of obtaining all functions and proving non existence respectively.

\subsection{$[4,-,0]$-functions:}
It can be checked with a simple search that there does not exist any fully sensitive function on $4$ variables with $\pdeg(f)=2$. In fact, if that would have been the case then we could 
use the recursive amplification method to obtain a function $f^u$ on $4^u$ variables with $s(f^u)=4^u$ and $\pdeg(f^u)=2^u$, which would give us quadratic separation between sensitivity
and polynomial degree, demonstrating a tight lower bound.
Thus we look into the $(4,-,3)$ functions, which are duals of $[4,-,0]$ functions. 
We have the following counts.
\begin{itemize}

\item There are approximately $ 3760 (\approx 2^{11.87})$ many  $[4,1,0]$-functions. 
\item Only  $256(=2^8)$ of these functions are $[4,2,0]$-functions and there are no $[4,3,0]$-functions.

\end{itemize}

\subsection{$[5,-,1]$-functions:}
In the case of $5$ variable functions, the possible polynomial degree   is between $3$ and $4$.
As we have already observed the case of $0$-resiliency ($n-1$ polynomial degree in the dual) 
for $3$ and $4$, we compute the $1$-resilient functions in this case, and get the following counts:
\begin{align*}
& \#\, [5,1,1]\text{-functions}=12304 (\approx 2^{13.58} ), &\#\, [5,3,1]\text{-functions}&=0\\
& \#\, [5,2,1]\text{-functions}=2464  (\approx 2^{11.2} ), & \#\,[5,4,1]\text{-functions}&=0.
\end{align*}

Finally let us look into the case of $6$ variable functions, for which we have the best base function
for first order sensitivity, which is the Kushilevitz function.  

\subsection{$[6,-,2]$-functions:}
\label{sec:4-6}

There are total $2^{64}$ Boolean functions on $6$ variables, and checking the resiliency
and sensitivity of all possible functions requires computational resources that is unattainable. 
We instead use properties of dual sensitivity and resiliency to obtain all possible $[6,1,2]$-functions
by concatenating the truth tables of two $5$ variable functions.  
Any $6$ variable function $f$ can be written as $f(x_1,\ldots ,x_6)=(1 \op x_6)f_2(x_1,\ldots ,x_5) \op x_6f_2(x_1,\ldots ,x_5)$
Where $f_1$ and $f_2$ are functions on $5$ variables. Then we have the following constraints on the properties of $f$.

\begin{enumerate}

\item If $f$ is $2$-resilient then either both $f_1$ and $f_2$ are $1$-resilient or both are $2$-resilient~\cite{MS99}. 

\item If $f$ is fully dual sensitive then at least one of $f_1$ and $f_2$ are fully dual sensitive. 
This is easy to see as if neither $f_1$ nor $f_2$ are dual sensitive then there is no input point 
for which the whole function can have full dual sensitivity. 

\end{enumerate}

Now we have only $2^{13}$ many $[5,1,1]$-functions and $2^{18}$ many $[5,0,1]$-functions, which reduces
the effective search space to approximately $2^{33}$ from the naive $2^{64}$. 
Using these constraints we get the full characterization of $[6,-,2]$-functions, which was not previously reported.

\begin{itemize}

\item We find that there are $33632 (\approx 2^{15.03} )$ many $[6,1,2]$-functions.
Here it should be noted that the dual of any such function is a $(6,1,3)$-function. 
We can use the modified recursive amplification technique of Theorem~\ref{th:ranew}
on all such functions to obtain $(6^u,1,3^u)$-functions,
which gives us the best known separation between sensitivity and polynomial degree, 
same as the function by Kushilevitz~\cite{NS94}.

\item We also get $192  (\approx 2^{7.6} )$ many $[6,2,2]$-functions, and this gives us the
maximum super-linear separation between number of variables and real polynomial degree 
in second order sensitive functions, which is $\pdeg(f)=n^{\frac{\log 3}{\log 6}}$,
which is also the currently best known separation for first order sensitivity.

Furthermore, there is no $[6,>2,2]$-functions.
\end{itemize}

\subsection{Nonexistence of $(7,1,3)$-functions and Searching the Rotation Symmetric Functions}
The existence or non-existence of a $(7,1,3)$-function is central to understanding the maximum 
separation between $s(f)$ and $\pdeg(f)$. If there does exist a $(7,1,3)$-function then we can
obtain a $(7^u,1,3^u)$-function using the recursive amplification method, which gives $s(f)={\pdeg(f)}^{\frac{\log 7}{\log 3}}$, improving on the best known result. 
However the total number of functions on $7$ variables is 
$2^{128}$ and therefore checking all functions for this 
profile through brute force is not computationally possible. 
Against this background we use a mixed integer linear program (MILP) to investigate the existence of  such a function. If $f$ is a $7$ variable Boolean function with $s(f)=7$ and $\pdeg(f)=3$ iff
there is a vector ${\bf x}\in\mathbb{F}_2^7$ such that $f({\bf x})\oplus ({\bf x}^i)=1$ for every $1\leq i \leq 7$ and $W_f({\bf u})=0$ for every ${\bf u}\in\mathbb{F}_2^7$ with $wt({\bf u} \geq 4$. For every ${\bf x}\in \mathbb{F}_2^7$. We run the MILP and it returns no solution for all choice of $\xv$. This shows there are no $(7,1,3)$-functions. One can refer to Appendix~\ref{app:713} for a formal description of the constraints.

Even with our strategy, it not possible to search for all fully sensitive and higher order sensitive functions on more than $7$ variables 
because of the  size of the search space. In this regard we search $8,9$ and $10$ variable rotation symmetric functions, which is another 
cryptographically important class of functions to obtain with fully sensitive (first order sensitive functions) using least possible 
polynomial degree (maximum resiliency in the dual function). Our findings can be found in 
Appendix~\ref{app:rot}. Let us now proceed towards the recursive amplification method. 

\section{The Recursive Amplification Method \label{sec:3}}
We have noted that fixing the value of a function corresponding to $n+1$ input points to make a function fully sensitive, will restrict the polynomial degree to $\Omega({\sqrt{n}})$.
The best known results in this paradigm is derived through the recursive amplification method,
which is also the function composition method. This is a well known technique that is used to obtain super-linear separation between $n$ and $\pdeg(f)$ and
is also used to obtain super-linear separation between $s(f)$ and $\pdeg(f)$. 
In this section we use this technique and obtain the following results:
\begin{enumerate}
\item A slight modification of the recursive amplification method to obtain super-linear separation between $s(f)$ and $\pdeg(f)$ by starting from any candidate base function. 

\item We build highly resilient functions with good nonlinearity, $\mathcal{O}(n)$ circuit size and $\mathcal{O}(\log n)$ circuit depth.

\item We obtain super-linear separation between number of variables($n$) and
polynomial degree ($\pdeg(f)$) for functions with constant order sensitivity.
\end{enumerate}

Recursive amplification was used to obtain the largest known separation
between sensitivity and polynomial degree of Boolean functions~\cite{BW02}, as well as the first example of 
separation between exact quantum query complexity and deterministic query complexity~\cite{AMB13}, among other 
separation results. 
Let $f$ be a function on $d$ variables $x_1, x_2 \ldots x_d$ with polynomial degree $p$. 
Then the recursive amplification method generates the function $f^u$ on $d^u$ variables as:
\begin{enumerate}
\item $f^1=f$.
\item $f^{i+1}\left(x_1, \ldots ,x_{d^{i+1}} \right)= f\left( f^i\left(x_1, \ldots, x_{d^i} \right), \ldots ,
f^i \left(x_{(d-1)d^i+1}, \ldots , x_{d^{i+1}} \right) \right)$. 
\end{enumerate}
Then for any $f$ we have $\pdeg(f^u)=p^u$.
Thus if the sensitivity also gets amplified, we could start with any $d$ variable function with and obtain $f^u$ with super-linear $s(f^u)-\pdeg(f^u)$ whenever $s(f)>\pdeg(f)$. 
However, sensitivity is not always amplified in the similar manner,
and $s(f^u)$ can be arbitrarily low. To this end we propose a construction so that we can get super-linear separation between $s()$ and $\pdeg()$ starting from any function. Furthermore the results also follow for higher-order sensitivity. 
Let ${\yv}^{i} \in \F^{d^i}$ be obtained by concatenating $d$ copies of  ${\yv}^{i-1}$.
Then we define the amplification method w.r.t a base function $f$ on $d$ variables as
$
f^i= f\Big( f^{i-1}(x_1,\ldots ,x_{d^{i-1}}) \op f^{i-1}(\yv^{i-1}) \op y_1
,\ldots, 
f^{i-1}(x_{jd^{i-1}+1}, \ldots ,x_{(j+1)d^{i-1}} ) \op f^{i-1}(\yv^{i-1}) \op y_j
, \ldots, 
f^{i-1}(x_{(d-1)d^{i-1}+1}, \ldots, x_{d^i}) \op f^{i-1}(\yv^{i-1}) \op y_d \Big).
$.
Then $f^u$ is a $k$-th order sensitive function on $n=d^u$ variables and 
${\pdeg(f^u)=\pdeg(f)}^u$,
with $k$-th order sensitivity achieved at the input point ${\yv}^u$.
One can refer to Theorem~\ref{th:ranew} in Appendix~\ref{app:sec:3} 
for the formal representation and proof. 

Now we look into the functions of $4$ variables and then discuss how the recursive amplification 
method can be used to obtain highly resilient functions with good nonlinearity. 

\subsection{Low cost resilient functions with recursive amplification}
Let us consider a function $f$ on $d$ variables with $\pdeg(f)=p<d$, where $d$ is a constant.
Then we can recursively amplify the function to obtain a function  $f^u$ on $n=d^u$ variables 
with $\pdeg(f^u)=p^u=n^{\frac{\log p}{\log d}}$. 
Now if we add the all variables linear function to it we get an $n$ variable function with 
$n-n^{\frac{\log p}{\log d}}-1$ resiliency. However, there already exists many methods of obtaining 
Boolean functions with high resiliency and other cryptographically important properties such 
as high nonlinearity. 

Here the advantage of the recursive amplification method is the circuit size for building such functions. 
Building efficient low depth circuits for 
cryptographically important functions with large number of input variables is a challenging problem. 
In this regard the work by Sarkar et al.~\cite[2003]{SM03} is important. This work shows how to start with an
$m$-resilient function on some $d$ variables and generate an $m+u$-resilient function on $n=d+u$ variables 
that requires $\mathcal{O}(u)$ depth, which is effectively $\mathcal{O}(n)$ as $d$ is constant for any given 
construction. In fact, this has been the best known result in this direction for almost two decades in 
building efficient circuits for resilient functions on large variables starting from base functions. Improving on this, we have the following result.
\begin{result}
\label{th:recres:1}
Given a function $f$ on $d$ variables with $\pdeg(f)=p<d$ we can obtain a function 
on $g^u$ on $n=d^u$ variables with resiliency $n-n^{\frac{\log p}{\log d}}-1$ such that 
there is a circuit of linear size and logarithmic depth (in $n$) for it.
Here $g^u$ is the dual of $f^u$ where $f^u$ is the function on $n^u$ variables obtained by recursively amplifying $f$.
\end{result}
The proof can be found in Theorem~\ref{th:recres} in the appendix,
followed by Figure~\ref{fig:rec} that gives an example of building a $9$ variable function using $4$ instances of the circuit $C_f$ corresponding to a $3$-variable function $f$.
We refer to Appendix~\ref{app:crypto-prop} for elaborate discussion on the nonlinearity lower bounds we have derived for these highly resilient functions, along with algebraic degree-resiliency trade-offs. 
Finally, we show that we can obtain super-linear separation between $n$ and $\pdeg(f)$ for functions with any constant order sensitivity. This raises the interesting problem of understanding the nature maximum super-linear separation possible between $n$ and $\pdeg(f)$ with increasing, constant order sensitivity $k$. Specifically we have the following result. 

\begin{result}
Given any constant $k$ there exists a $k$-th order sensitive function $f$ on $n$ variables such that 
$\pdeg(f)= n^{\frac{\log k}{\log k+1}}$, if $k$ is even and 
$\pdeg(f)= n^{\frac{\log k+1}{\log k+2}}$, if $k$ is odd.
\end{result}
One may refer to Section~\ref{app:const-amp} for the detailed formal explanation. 

\section{Higher Order Sensitivity \label{sec:4}}
Until now we have discussed functions with a constant higher order sensitivity, and have found classes of functions $f$ defined on the number of variables $n$ for which we could obtain super-linear separation 
between $n$ and $\pdeg(f)$ using the recursive amplification method. However, we cannot obtain any $t(n)$-order sensitive (or dual sensitive) function, where $t(n)$ is an increasing function on $n$ using any recursive amplification process, whenever
we intend the function to have less than $n$ polynomial degree. 

\begin{theorem}
\label{th:sc}
The general recursive amplification process cannot obtain a function that has super-constant
order of sensitivity where the polynomial degree of less than the number of variables, 
where the recursive amplification process is defined as 
\begin{itemize}

\item A base function $f^1$ on some $d$ variables.

\item $f^k=f^1\Big(\hat{f}^{k-1}, \ldots ,\hat{f}^{k-1}  \Big)$ 
where $\hat{f}^{k-1} \in \{f^{k-1}, \overline{f^{k-1}}  \}$.
\end{itemize}
\end{theorem}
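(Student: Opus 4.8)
The plan is to track a single would-be witness point through the recursion and show that, past some constant level, it cannot simultaneously satisfy the ``all flips of size $\le k$ change the output'' requirement for growing $k$. The key observation is that an amplified function $f^k = f^1(\hat f^{k-1},\ldots,\hat f^{k-1})$ depends on its $d^k$ inputs only through the $d$ block-values $b_1,\ldots,b_d \in \F$ returned by the inner copies; for a fixed input $\xv$, the output is determined by the vector $\mathbf{b}(\xv) = (b_1,\ldots,b_d)$. Now suppose $f^k$ were $t$-th order sensitive at a point $\xv$ with $t = t(d^k) \to \infty$. I would first argue that for $k$ large enough, $t > d$; then there exist two distinct subsets $S \neq S'$ of $[d^k]$, each contained in a single inner block and with $|S|,|S'| \le t$, such that flipping $S$ and flipping $S'$ both flip the \emph{same} one block-value $b_j$ (indeed a single block has $2^{d^{k-1}}$ achievable flips but only the one-bit datum ``did $b_j$ change'' matters for the outer function, so many distinct flip-sets induce the same change of $\mathbf b$). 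Then $f^k(\xv^{(S)}) = f^k(\xv^{(S')})$, and since the $t$-th order sensitivity condition forces both of these to equal $\overline{f^k(\xv)}$, there is no contradiction yet — so the real leverage has to come from comparing flips that change $\mathbf b$ against flips that do not.

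The sharper step is this: within one inner block $B_j$ (on $d^{k-1}$ variables), consider the restriction $g := \hat f^{k-1}$ evaluated at the sub-point $\xv|_{B_j}$. Flipping a subset $T \subseteq B_j$ changes $b_j$ iff $g(\xv|_{B_j}) \ne g((\xv|_{B_j})^{(T)})$. For $f^k$ to be $t$-th order sensitive at $\xv$, we need: for every $T \subseteq B_j$ with $1 \le |T| \le t$, flipping $T$ changes $f^k$, i.e. flipping $b_j$ changes $f^1$ at $\mathbf b(\xv)$ \emph{and} $g$ is sensitive to $T$ at this sub-point. In particular the restriction $g$ must itself be $t$-th order sensitive at $\xv|_{B_j}$ — but $g$ lives on $d^{k-1}$ variables and is (up to complementation, which does not affect sensitivity) an amplified function of one fewer level. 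Iterating this descent $k$ times, the base function $f^1$ on $d$ variables would have to be $t(d^k)$-th order sensitive; since $t(d^k) > d$ for all large $k$ and a function on $d$ variables can be at most $d$-th order sensitive, we get a contradiction. I would also handle the degenerate possibility that some block is constant on the flip-neighborhood (so $b_j$ never changes): then that coordinate of $\mathbf b$ is frozen and $f^k$ reduces to an amplification on $d-1$ blocks, to which the same argument applies with $d$ replaced by a smaller constant, so the bound $t \le d$ is only strengthened.

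Separately I would dispose of the ``polynomial degree less than $n$'' hypothesis, which is what rules out the trivial escape $f = \lin_n$: the statement is really that a super-constant-order sensitive amplified function is forced to have $\pdeg(f^k) = n$, equivalently (by Theorem~\ref{th:old} and Proposition~\ref{prop:ds}) its dual cannot even be $0$-resilient. So the clean way to phrase the conclusion is: any amplified $f^k$ with $\pdeg(f^k) < d^k$ has sensitivity order bounded by the constant $d$. The argument above gives exactly this, since the descent shows the order is at most the arity of the base function regardless of degree, so in particular it cannot be super-constant. I expect the main obstacle to be bookkeeping in the descent step — carefully verifying that the restriction of the amplified function to a single block, at the sub-point obtained from a sensitivity witness, really is again a (possibly complemented) level-$(k-1)$ amplified function \emph{and} inherits the $t$-th order sensitivity at that sub-point, rather than something weaker; once that lemma is stated cleanly the induction is immediate.
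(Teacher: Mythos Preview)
Your descent is essentially the paper's (very terse) induction run in contrapositive: the paper asserts that if the base $f^1$ on $d$ variables has sensitivity order below $d$ then so does every $f^u$, and you push a $t$-th order witness for $f^k$ down through the blocks to land at a witness for $f^1$. That core step is correct and matches the paper.

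The slip is at the end of the descent. You claim it forces $f^1$ to be $t(d^k)$-th order sensitive and that this contradicts ``a function on $d$ variables can be at most $d$-th order sensitive,'' but there is no contradiction there: on $d$ variables, being $t$-th order sensitive with $t>d$ is \emph{exactly} being $d$-th order sensitive, which is perfectly possible --- take $f^1=\mathrm{AND}_d$, whose amplification $\mathrm{AND}_{d^k}$ is $d^k$-th order sensitive. So your later sentence ``the descent shows the order is at most the arity of the base function regardless of degree'' is false, and the trivial escape you name is the wrong one ($\lin_n$ is only first-order sensitive; the real escape is $\mathrm{AND}$/$\mathrm{OR}$). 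The repair, which the paper leaves implicit, is to route the degree hypothesis through the \emph{base}: $\pdeg(f^k)<d^k$ forces $\pdeg(f^1)<d$ since degree is multiplicative under block composition, and a $d$-variable function that is $d$-th order sensitive is, up to affine shift, the complement of a single-point indicator and hence has $\pdeg=d$. Thus $f^1$ is \emph{not} $d$-th order sensitive, and now your descent --- read as ``$f^k$ is $d$-th order sensitive $\Rightarrow$ $f^1$ is $d$-th order sensitive'' --- gives that every $f^k$ has sensitivity order at most $d-1$, the required constant bound. Once you connect the degree hypothesis this way, your argument and the paper's coincide.
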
 
\begin{proof}
Let us consider any function $f$ on $d$ variables and $t(n)$-order sensitivity where $t(n)=\Omega(1)$. 
Then $f^n$ is a function on $d^n$ variables and there exists $n_0 \in \mathbb{N}$ such that
$t(d^{n_0})>d$. However it is easy to see (via induction) that any function built with a base function on $d$ variables and sensitivity order less than $d$ cannot be $d$-th order sensitive.
\end{proof}

Thus the recursive amplification process does not help us anymore when we consider 
$\omega(1)$-order sensitive functions. In this regard we next explore the class of 
Maiorana-McFarland (MM)  constructions, a heavily studied class for cryptographic and coding theoretic purposes.

Here, we use it from the perspective of polynomial degree-sensitivity to obtain results in the domain of 
functions with non-constant order of sensitivity. The layout of this section is as follows. 
We first discuss the Maiorana-McFarland construction and then obtain separation between $n$ and 
$\pdeg(f)$ while making the function first order sensitive (fully sensitive). We extend this 
construction while discussing first order sensitivity only for ease of understanding. 
We analyze the polynomial structure of these functions and obtain logarithmic separation between $n$ and $\pdeg(f)$. 
Finally we show that this construction can be modified for super-constant orders of sensitivity (upto $o(\log n)$ )
with only a few tweaks. 

\subsection{Maiorana-McFarland construction}
The Maiorana-McFarland (MM) construction~\cite{MM1} is based on dividing the input variable space into two parts
and attaching different linear functions from one subspace to each point in the other subspace, defined as follows. 

\begin{definition}
A function $f:\F^n \rightarrow \F$ is called an MM function 
if it can be expressed as
$f(\xv,\yv)=\left( \phi(\xv) \cdot \yv \right) \op g(\xv)$, where
\begin{itemize}

\item $\xv \in \F^{n_1}$ , $\yv \in \F^{n_2}$ , $n=n_1+n_2;$

\item $\phi$ is a mapping of the form $\phi: \F^{n_1} \rightarrow \F^{n_2};$

\item $g: \F^{n_1}\rightarrow \F$ is an arbitrary Boolean function defined on the subspace $\F^{n_1}$.
\end{itemize}
\end{definition}
The Boolean functions due to Maiorana-McFarland construction can be visualized in different ways. 
We view them as different linear functions defined on $\yv$ attached to activating values in $\xv$. 
Let there be an MM Boolean function with any arbitrary map $\phi: \F^{n_1} \rightarrow \F^{n_2} $ 
and some Boolean function $g: \F^{n_1} \rightarrow \F$. 
Corresponding to any $\var{a} \in F^{n_1}$, the quantity $\phi(\var{a}) \cdot \yv$ 
is essentially the outcome of the linear equation $\displaystyle \bigoplus_{\phi(\var{a})_i=1} \yv_i$. Thus $\phi(\var{a}) \cdot \yv \op g(\var{a})$ equals  
$\left( \displaystyle \bigoplus_{\phi(\var{a})_i=1} y_i  \right) \op g(\var{a})$ for all $\yv$ in $\F^{n_2}$.
Let us denote this function defined on $\F^{n_2}$ as $Lin_{\phi(\var{a}),g(\var{a})}$.
We now describe two real polynomial structures.
\begin{enumerate}

\item $\pr_{\var{a}}:\F^{n_1} \rightarrow \mathbb{R}, ~ \var{a} \in \F^{n_1}$ 
is defined as $\pr_{\var{a}}(\xv)= \left( \prod\limits_{\var{a}_i=0}(1 - x_i) \right)
\left( \prod\limits_{\var{a}_i=1} x_i \right) $, so that
$\pr_{\var{a}}(\xv) = 1 \text{ if $\xv=\var{a}$ and $0$ otherwise }$

\item[]

\item $Lin_{\phi({\var{a}}),g(\var{a})}$
corresponding to each $\var{a} \in \F^{n_1}$.
Any linear function on $\yv$ can be expressed as  
$\oplus_{b_i=1} y_i \op c, \var{b} \in \F^{n_2}, c \in \F$.
Then the 
$\lin_{(\var{b},c)}(\yv) = 
\frac{1}{2} -  \frac{ (-1)^c}{2}   \prod\limits_{b_i=1} (1-2y_i)$.

\end{enumerate}
Then we have the following real polynomial w.r.t to any MM type function.

\begin{proposition}
\label{prop:1} 
Given an MM function $f(\xv,\yv)= ( \phi(\xv) \cdot \yv ) \op g(\xv)$ on $n$ variables 
with $\xv \in \F^{n_1}$ and $\yv \in \F^{n_2}$,
the corresponding real polynomial can be defined as
\begin{equation}
\label{eq:2}
p(\xv,\yv)=  \displaystyle \sum_{\var{a} \in \F^{n_1}} 
\pr_{\var{a}}(\xv) \,
 \lin_{(\phi(\var{a}),g(\var{a}))} (\yv).
\end{equation}
\end{proposition}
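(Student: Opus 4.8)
The plan is to invoke the uniqueness of the real multilinear representation of a Boolean function: every $h\colon\{0,1\}^n\to\mathbb{R}$ has exactly one representation as a multilinear polynomial in $x_1,\dots,x_n$, so to prove \eqref{eq:2} it suffices to check that its right-hand side (i) is multilinear in the $n$ variables of $\xv$ and $\yv$, and (ii) agrees with $f$ at every Boolean point.

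Step (i) would be immediate from the definitions recalled just above the statement. Each $\pr_{\var{a}}(\xv)$ is a product of the $n_1$ affine factors $1-x_i$ or $x_i$, one per coordinate of $\xv$, hence multilinear in $\xv$; and $\lin_{(\var{b},c)}(\yv)$ is a scalar times a product of the affine factors $1-2y_i$ over a subset of the coordinates of $\yv$, hence multilinear in $\yv$. Since the blocks $\xv$ and $\yv$ are disjoint, every summand $\pr_{\var{a}}(\xv)\,\lin_{(\phi(\var{a}),g(\var{a}))}(\yv)$ is multilinear in all $n$ variables, and therefore so is the finite sum over $\var{a}\in\F^{n_1}$.

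For step (ii) I would fix a Boolean point $(\var{a}^{*},\yv)$ with $\var{a}^{*}\in\F^{n_1}$ and $\yv\in\F^{n_2}$. By the stated property of $\pr_{\var{a}}$, we have $\pr_{\var{a}}(\var{a}^{*})=1$ if $\var{a}=\var{a}^{*}$ and $0$ otherwise, so the sum collapses to the single term $\lin_{(\phi(\var{a}^{*}),g(\var{a}^{*}))}(\yv)$, and it remains to show that for all $\var{b}\in\F^{n_2}$, $c\in\F$ and Boolean $\yv$ the quantity $\lin_{(\var{b},c)}(\yv)=\tfrac12-\tfrac{(-1)^c}{2}\prod_{b_i=1}(1-2y_i)$ is the $\{0,1\}$-truth value of the linear function $\bigl(\bigoplus_{b_i=1}y_i\bigr)\op c$. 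This is the one small computation: on $y_i\in\{0,1\}$ one has $1-2y_i=(-1)^{y_i}$, so $\prod_{b_i=1}(1-2y_i)=(-1)^{\sum_{b_i=1}y_i}=(-1)^{\bigoplus_{b_i=1}y_i}$, whence $\lin_{(\var{b},c)}(\yv)=\tfrac12\bigl(1-(-1)^{\,c\,\op\,\bigoplus_{b_i=1}y_i}\bigr)$, which equals $0$ when the parity $c\op\bigoplus_{b_i=1}y_i$ is even and $1$ when it is odd --- exactly the truth value of that linear function. Taking $\var{b}=\phi(\var{a}^{*})$ and $c=g(\var{a}^{*})$ then gives $\lin_{(\phi(\var{a}^{*}),g(\var{a}^{*}))}(\yv)=(\phi(\var{a}^{*})\cdot\yv)\op g(\var{a}^{*})=f(\var{a}^{*},\yv)$, as required.

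There is essentially no obstacle in this argument; the only place demanding care is the sign-and-parity bookkeeping in step (ii) --- namely that restricting $1-2y_i$ to $\{0,1\}$ yields $(-1)^{y_i}$ and that a product of such $\pm1$ factors turns an ordinary sum of exponents into an XOR-parity. Once (i) and (ii) are in place, uniqueness of the multilinear representation identifies the polynomial in \eqref{eq:2} as the real polynomial of $f$, and as a bonus $\pdeg(f)$ can then be read directly off \eqref{eq:2} as the largest Hamming weight of an $\xv\cup\yv$ monomial appearing with a nonzero coefficient, which is the form in which the subsequent sections will use it.
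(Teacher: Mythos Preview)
Your proof is correct. The paper does not give an explicit proof of this proposition; it states the indicator property of $\pr_{\var{a}}$ and the formula for $\lin_{(\var{b},c)}$ immediately beforehand and treats the proposition as self-evident from those definitions. Your argument---checking multilinearity and pointwise agreement, then invoking uniqueness of the multilinear representation---is precisely the natural verification that the paper leaves implicit.
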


Let us now note down a simple result that this polynomial structure entails. 

\begin{note}
\label{note:2}
For any $\var{b} \in \F^{n_2}$ we have  $\lin_{\var{b},c} + \lin_{\var{b},\overline{c}} =1 $ and
$\pdeg(\lin_{\var{b},c})=wt(\var{b})$.
\end{note}

The structure of the rest of this section is  as follows. 
First we describe some sufficient condition that allows a MM type function to have $s(f)=n$.
Next we obtain a MM type function with $n-1$ polynomial degree and then extend this technique to obtain a function with $\pdeg(f)=n-\log n$. Finally we extend this notion to higher order
sensitivity by adding non-linear functions on $\F^{n_2}$, which is one of the main results of the paper.

We first ensure $s(f)=n$. We add two restriction to a MM type function.
\begin{enumerate}
\begin{multicols}{2}
\item 
$\phi(\bm{1}_{n_1})=\bm{1}_{n_2}$ 
and 
$g(\bm{1}_{n_1})=0$.

\columnbreak

\item 
$wt \left( \phi(\bm{1}_{n_1}^i) \right) \equiv 1 \bmod 2$ 
and 
$g(\bm{1}_{n_1}^i) \equiv n_2  \bmod 2$.

\end{multicols}
\end{enumerate}
Then for all such functions we have $s(f)=n$. We denote this class of MM functions as $\M_n$. 
Refer to the formal presentation in Lemma~\ref{lemma:1} in Appendix~\ref{app:sec:4:1}. Now we describe the construction for the first separation. 

\subsection{First Order Sensitive Functions With Lower Polynomial Degree \label{sec:4:2}}

First we show a construction for getting $\pdeg(f)=n-1$. 
\begin{construction}
\label{th:1:c}
Let $f \in \M_n$ be an MM function defined on $n=n_1+n_2$ variables 
so that $n_1 \leq n_2 \leq n_1+1$  with $n_2 \equiv 0 \bmod 2$.
If $f$ is defined using $\phi(\bm{1}_{n_1-2}00)  = \bm{1}_{n_2}$, $g(\xv)= \prod_{i=1}^{n_1-2}x_i(1-x_{n-1})(1-{x_n})$, 
the sensitivity of $f$ is $n$ and the polynomial degree is at most $n-1$.
Refer to Appendix~\ref{app:sec:4:2} for the proof, along with examples and a count on the number of such functions. 
\end{construction}

Let us now better understand how the polynomial structure of the MM type functions can be modified so that the modified polynomial still represents a Boolean function, but with lower polynomial degree. 

\subsection{Interpreting real polynomial terms via the MM construction}

We saw in Proposition~\ref{prop:1} 
that the real polynomial corresponding to any MM type function
can be expressed as 
$
p(\xv,\yv)=    \sum_{\var{a} \in \F^{n_1}} 
\pr_{\var{a}}(\xv) 
\left(\lin_{(\phi(\var{a}),g(\var{a}))} (\yv) \right).
$
We know that $  \pr_{\var{a}}(\xv)= \left(\prod_{i: a_i=1} x_i\right) \left(\prod_{j: a_j=0}(1-x_j)\right)$.

If  $\var{z}=(z_1, \ldots ,z_k) \in \F^{k}$,
it is easy to see  
$
  \sum_{\var{a} \in \F^{k}}
\left( \big( \prod_{i: a_i=0} z_i \big) \big( \prod_{j: a_j=0}(1-z_j)\big) \right)=1.
$
Corresponding to a Boolean function defined on $n$ variables 
$\xv= (x_1,x_2,\ldots ,x_n)$,
we define three non-empty mutually disjoint sets 
$S_1,S_2$ and $S_3$ such that $S_1 \cup S_2 \cup S_3=[n]$
with $\abs{S_i}=s_i$.
Let the variables indexed by elements in $S_i$ be denoted as $x_{i_j}$
and $\var{z} \in \F^{s_3}$ be represented as $(z_1,z_2,\ldots ,z_{s_3})$.
Then the real polynomial 
$  \left( \prod_{i \in S_1} x_i \right) \left( \prod_{j \in S_2} (1-x_j) \right)$
can be represented as 
$\left( \prod_{i \in S_1} x_i \right) \left( \prod_{j \in S_2} (1-x_j) \right)
\left(   \sum_{\var{z} \in \F^{s_3}} \left( \big( \prod_{i: z_i=0} x_{3_i} \big) \big( \prod_{j: z_j=0}(1-x_{3_j})\big) \right) \right)
$
 that is, 
$
\sum_{\var{z} \in \F^{s_3}}
\left(
( \prod_{i \in S_1} x_i )( \prod_{j \in S_2} (1-x_j))
(  \prod_{i: z_i=0} x_{3_i} ) ( \prod_{j: z_j=0}(1-x_{3_j})) \right)
$.
This implies that corresponding to an MM type function defined on $n=n_1+n_2$
variables, the polynomial 
$ 
\lin_{(\var{b},c)}(\yv) 
( \prod_{i = 1}^{k_1} x_i ) ( \prod_{j=k_1+1}^{k_2} (1-x_j) ) 
$
can be interpreted as 
$$
  \sum_{\var{a} \in \F^{n_1-k_1-k_2}}
\left( 
( \prod_{i=1}^{k_1} x_i ) ( \prod_{j=k_1+1}^{k_2} (1-x_j) ) 
( \prod_{a_i=1} x_{k_1+k_2+i} ) ( \prod_{a_j=0} (1-x_{k_1+k_2+j}) ) 
\right) 
\lin_{(\var{b},c)}(\yv).
$$
We represent it as 
$
  \sum_{ \var{a} \in \F^{n_1-k_1-k_2}} 
\pr_{(\bm{1}_{k_1}\bm{0}_{k_2}\var{a})}(\xv)\lin_{(\var{b},c)}(\yv).
$
Thus, we get 
\begin{equation}
\label{eq:4}
\left( 
( \prod_{i=1}^{k_1} x_i ) ( \prod_{j=k_1+1}^{k_2} (1-x_j) ) 
\right) 
\lin_{(\var{b},c)}(\yv)
=
  \sum_{ \var{t} \in \F^{n_1-k_1-k_2}} \left(
\pr_{(\bm{1}_{k_1}\bm{0}_{k_2}\var{t})}(\xv)\lin_{(\var{b},c)}(\yv) \right).
\end{equation}

These considerations imply the following result.
\begin{proposition}
\label{prop:2}
Let $f$ be an MM function 
$f(\xv,\yv)=\xv \cdot \phi(\yv)+g(\xv)$
such that if $\xv=\bm{1}_{k_1}\bm{0}_{k_2}\var{t}$ then $\phi(\xv)=0$
and $g( \bm{1}_{k_1}\bm{0}_{k_2}\var{t})=0,~ \forall \var{t} \in \F^{n_1-k_1-k_2}$.
Then the polynomial corresponding to the Boolean function $f$ can be written as
$
p(\xv,\yv)=    \sum_{\var{a} \in \F^{n_1}, \var{a} \neq \bm{1}_{k_1}\bm{0}_{k_2}\var{t} } 
\pr_{\var{a}}(\xv) 
\left(\lin_{(\phi(\var{a}),g(\var{a}))} (\yv) \right),
$
and the polynomial 
$p'(\xv,\yv)=p(\xv,\yv)+ \left( ( \prod_{i=1}^{k_1} x_i ) ( \prod_{j=k_1+1}^{k_2} (1-x_j) ) 
\right) 
\lin_{(\var{b},c)}(\yv)$
can be written as
$
p'(\xv,\yv)=
  \sum_{\var{a} \in \F^{n_1}} 
\pr_{\var{a}}(\xv) \,
\lin_{(\phi'(\var{a}),g'(\var{a}))} (\yv) 
$,
where
\begin{multicols}{2}
\begin{align*}
  \phi'(\xv)&=
    \begin{cases}
      \phi(\xv) & \text{if $\xv \neq \bm{1}_{k_1}\bm{0}_{k_2} \var{t}$ for some $\var{t}$}\\
      b & \text{otherwise,}
    \end{cases}     
\end{align*}

\columnbreak

\begin{align*}
    g'(\xv)&=
    \begin{cases}
      g(\xv) & \text{if $\xv \neq \bm{1}_{k_1}\bm{0}_{k_2} \var{t}$ for some $\var{t}$}\\
      c & \text{otherwise}
    \end{cases}
\end{align*}
\end{multicols}
and this represents another MM type function $f'$ which differs from 
$f$ only in the points $\bm{1}_{k_1}\bm{0}_{k_2}\var{t}$.
\end{proposition}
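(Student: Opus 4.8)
The identity follows by manipulating the real-polynomial form of Proposition~\ref{prop:1}, so the work is one substitution together with bookkeeping of index ranges. First I would apply Proposition~\ref{prop:1} to $f$, writing its real polynomial as $p(\xv,\yv) = \sum_{\var{a} \in \F^{n_1}} \pr_{\var{a}}(\xv)\,\lin_{(\phi(\var{a}),g(\var{a}))}(\yv)$. By hypothesis $\phi(\bm{1}_{k_1}\bm{0}_{k_2}\var{t}) = \bm{0}$ and $g(\bm{1}_{k_1}\bm{0}_{k_2}\var{t}) = 0$ for every $\var{t}$, and since $\lin_{(\bm{0},0)}(\yv) = \tfrac12 - \tfrac12 = 0$ identically (cf.\ Note~\ref{note:2}), every term indexed by an $\var{a}$ of the form $\bm{1}_{k_1}\bm{0}_{k_2}\var{t}$ vanishes. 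This already gives the first displayed form of $p$, namely that it is supported on $\{\var{a} \in \F^{n_1} : \var{a} \neq \bm{1}_{k_1}\bm{0}_{k_2}\var{t}\}$.

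Next I would substitute Equation~\eqref{eq:4} into the definition of $p'$: the added summand $\big(\prod_{i=1}^{k_1}x_i\big)\big(\prod_{j=k_1+1}^{k_2}(1-x_j)\big)\,\lin_{(\var{b},c)}(\yv)$ equals $\sum_{\var{t} \in \F^{n_1-k_1-k_2}}\pr_{(\bm{1}_{k_1}\bm{0}_{k_2}\var{t})}(\xv)\,\lin_{(\var{b},c)}(\yv)$. Adding this to the reduced form of $p$ produces a single sum whose ``old'' terms are indexed by the $\var{a}$ that are \emph{not} of the shape $\bm{1}_{k_1}\bm{0}_{k_2}\var{t}$ and whose ``new'' terms are indexed by exactly the $\var{a}$ that are; these two index sets partition $\F^{n_1}$, so merging them gives $p'(\xv,\yv) = \sum_{\var{a} \in \F^{n_1}}\pr_{\var{a}}(\xv)\,\lin_{(\phi'(\var{a}),g'(\var{a}))}(\yv)$ with $\phi',g'$ exactly the piecewise maps in the statement: $\phi$ and $g$ off the special slice, and the constants $\var{b},c$ on it.

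Finally I would apply Proposition~\ref{prop:1} once more, this time to the MM function $f'(\xv,\yv) = (\phi'(\xv)\cdot\yv)\op g'(\xv)$: its real polynomial is precisely the expression just obtained, so $p'$ represents $f'$ (in particular $p'$ is $\{0,1\}$-valued, with no separate argument needed). Since $\phi'(\xv) = \phi(\xv)$ and $g'(\xv) = g(\xv)$ whenever $\xv \neq \bm{1}_{k_1}\bm{0}_{k_2}\var{t}$, we get $f'(\xv,\yv) = f(\xv,\yv)$ on all such points, so $f$ and $f'$ can differ only on the slice $\xv = \bm{1}_{k_1}\bm{0}_{k_2}\var{t}$ (where $f \equiv 0$ while $f'$ restricts to $\var{b}\cdot\yv\op c$), as claimed. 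The only points that need care are verifying that $\lin_{(\bm{0},0)}$ is the zero polynomial (so the special terms of $p$ really drop out) and that the ``old'' and ``new'' index sets are genuinely complementary; the remaining steps are routine substitution.
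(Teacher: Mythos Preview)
Your proposal is correct and follows exactly the route the paper takes: the paper derives Equation~\eqref{eq:4} and then simply writes ``These considerations imply the following result'' before stating the proposition, so your write-up just makes explicit the substitution and bookkeeping that the paper leaves to the reader. The only minor remark is that your appeal to Note~\ref{note:2} for $\lin_{(\bm{0},0)}\equiv 0$ is slightly indirect (that note gives $\pdeg(\lin_{(\bm{0},0)})=0$, i.e.\ it is constant, but the value $0$ comes from the defining formula $\tfrac12-\tfrac{(-1)^0}{2}\cdot 1=0$); this is a trivial check either way.
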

Using this result we can attempt to obtain an MM type Boolean function with 
a pre-decided real polynomial structure. We start with the real polynomial 
of a particular MM type function, and then modifying its corresponding 
polynomial by adding   
$  \left( ( \prod_{i \in_1}^{k_1} x_i ) ( \prod_{j=k_1+1}^{k_2} (1-x_j) ) 
\right)  \lin_{(\var{b},c)}(\yv),$
keeping in mind the respective necessary constraints we have discussed 
in terms of $\phi$ and $g$.
This gives us another function $f'$ whose structure and 
its properties can be recovered from $f$.
Using this combinatorial approach we next have the following result.
\begin{result}
\label{thm:3:r}
There exists a  Boolean function $f \in \M_n$ 
with $\pdeg(f)=n -\Theta (\log n)$.
\end{result}
One can refer to Appendix~\ref{log-sep} for the buildup, along with the proof.
Finally we extend our constructions and results for super-constant orders of sensitivity. 

\subsection{Extending to super-constant higher order sensitivity via the MM construction}

We have so far observed the situation where we have defined a function
on $n$ variables in the MM class as 
$f(\xv,\yv)=\left( \phi(\xv) \cdot \yv \right) \op g(\xv)$ where $\xv \in \F^{n_1}$
and $\yv \in \F^{n_2}$ with $n_1+n_2=n$. The simplest interpretation is 
choosing a linear function in $\yv$ (or its complement depending on $g$) 
corresponding to each point $\xv \in \F^{n_1}$.
We can extend this to nonlinear functions in $\yv$ being fixed with respect to the points in $\yv$, with $g_{\var{a}}$ being the non-linear function on $\yv$ to be evaluated when $\xv=\var{a}$.
Then the real polynomial corresponding to the function $f(\xv,\yv)$ can be written as
$p(\xv,\yv)= 
\displaystyle\sum_{\var{a} \in \F^{n_1}}
\pr_{\var{a}}(\xv) \hat{g}_{\var{a}}(\yv)
$
where 
$\hat{g}_{\var{a}} : \F^{n_2} \rightarrow \mathbb{R}$ 
is the real polynomial corresponding to the function $g_{\var{a}}$.
Now let us discuss some sufficient conditions to obtain $k$-th order sensitivity 
by choosing the proper $g_{\var{a}}$ functions.

\subsection{Obtaining $k$-th order sensitivity an reducing polynomial degree:}

We start by defining a function in $\F^{n_2}$ that is $k$-th order sensitive itself. 
We define this function as ${\sf sym}_{n_2}^k: \F^{n_2} \rightarrow \F, k \leq n$. 
The algebraic normal form of the function contains all degree $i, 1 \leq i \leq k$
monomials.  
For an example 
${\sf sym}^2_4(\yv)= 
y_1 \op y_2 \op y_3 \op y_4 \op
y_1y_2 \op y_1y_3 \op y_1y_4 
\op y_2y_3 \op y_2y_4 \op y_3y_4$.
Next we observe the sensitivity order of this function.

\begin{lemma}
\label{lem:sym}
The function ${\sf sym}^k_{m}$ is a function defined on $m$ variables.
$k$-th order sensitive around the all zero input point $\bm{0}_m$. 
\end{lemma}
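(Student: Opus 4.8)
The plan is to show that at the all-zero input $\bm{0}_m$, flipping any set $S$ of at most $k$ coordinates changes the value of ${\sf sym}^k_m$. First I would evaluate ${\sf sym}^k_m(\bm{0}_m)$: every monomial in the ANF contains at least one variable, so each term vanishes at $\bm{0}_m$, giving ${\sf sym}^k_m(\bm{0}_m)=0$. Next I would fix a nonempty $S \subseteq [m]$ with $\abs{S}=j$, $1 \le j \le k$, and evaluate ${\sf sym}^k_m(\bm{0}_m^{(S)})$, i.e. the value at the point whose support is exactly $S$. A monomial $\prod_{i \in T} y_i$ (with $1 \le \abs{T} \le k$) evaluates to $1$ at this point precisely when $T \subseteq S$, and to $0$ otherwise. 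Hence ${\sf sym}^k_m(\bm{0}_m^{(S)}) = \sum_{T \subseteq S,\ 1 \le \abs{T} \le k} 1 \bmod 2$, where the sum is over all nonempty subsets of $S$ of size at most $k$.

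The key computation is therefore to count, modulo $2$, the number of nonempty subsets of a $j$-element set of size at most $k$. Since $j \le k$, every nonempty subset of $S$ has size at most $k$, so this count is simply $2^j - 1$, the number of all nonempty subsets of $S$. For $j \ge 1$ we have $2^j - 1 \equiv 1 \bmod 2$ (it is odd), so ${\sf sym}^k_m(\bm{0}_m^{(S)}) = 1 \ne 0 = {\sf sym}^k_m(\bm{0}_m)$. This holds for every $S$ with $1 \le \abs{S} \le k$, which is exactly the definition of $k$-th order sensitivity at the point $\bm{0}_m$, completing the argument. That ${\sf sym}^k_m$ is a function on $m$ variables is immediate from its ANF, which involves only $y_1, \dots, y_m$ (and it genuinely depends on each, since the degree-$1$ part contains every $y_i$).

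I do not expect a serious obstacle here: the whole proof hinges on the elementary observation that $j \le k$ forces the "size at most $k$" restriction to be vacuous on subsets of $S$, collapsing the count to $2^j-1$, which is odd. The only point requiring a little care is making sure the bound $1 \le \abs{S} \le k$ (rather than $\abs{S} = k$) is what is used, and that the parity argument is phrased over $\F_2$; both are routine. If one wanted to also record the value at points of support larger than $k$ (not needed for the lemma), the count would become $\sum_{i=1}^{k}\binom{j}{i}$, which need not be odd — this is precisely why the sensitivity order is exactly $k$ and not higher, and could be added as a remark.
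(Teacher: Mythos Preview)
Your proposal is correct and follows essentially the same argument as the paper: evaluate at $\bm{0}_m$ to get $0$, then for any flip set $S$ of size $j\le k$ count the monomials that survive, observe that the size-$\le k$ restriction is vacuous so the count is $\sum_{i=1}^{j}\binom{j}{i}=2^j-1$, which is odd. Your write-up is slightly more explicit about why the bound $j\le k$ makes the subset count collapse, and your closing remark on why the sensitivity order is exactly $k$ is a nice addition not present in the paper.
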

The proof can be found in Section~\ref{app:lem:sym}. 
Next we define an MM type function $\M^k_n$ with nonlinear functions in $\yv$, which is $k$-th order sensitive. 
\begin{construction}
\label{lem:korder:c}
Any function $f: \F^{n_1+n_2} \rightarrow \F$ with the  algebraic normal form
$\displaystyle f(\xv,\yv)=\bigoplus_{\var{a} \in \F^{n_1}} \left( ac_{\var{a}}(\xv) \cdot g_{\var{a}}(\yv) \right),$
where $g_{\bm{1}_{n_1}}= {\sf sym}^k_{n_2}$ 
and $g_{\var{a}}= 1$ for all $\var{a} \in \F^{n_1} : n_1 >wt(\var{a}) \geq n_1-k$
is $k$-th order sensitive.
\end{construction}
Finally we extend the technique of Section~\ref{log-sep} to obtain 
non-constant separation between number of variables and real polynomial 
degree in functions with super-constant order of sensitivity. 
\begin{construction}
\label{th:f10:c}
There exists a $k$-th order sensitive function in $\M_n^k$ with $n- \frac{\log \left( \frac{n}{2} -k \right)- \log{k}}{k}$
real polynomial degree.
\end{construction}
The formal proofs of Constructions~\ref{lem:korder:c} and~\ref{th:f10:c}
can be found in Appendix~\ref{mm:sup1}.
To reflect on the implications of this result, 
we can have $\log \log n$-order sensitive functions with 
$\pdeg(f) \approx n- \frac{\log n}{\log \log n}$. In fact as long as 
$k=o(\log n)$, we have $\pdeg(f)=n-\omega(1)$. These results could not 
be achieved via the recursive amplification constructions.

\section{Conclusion \label{sec:5}}
In this paper we have studied the interplay of resiliency and polynomial with respect to sensitivity, and have also extend the notion of sensitivity to higher order sensitivity. In this direction based on properties of resilient functions, we have obtained new classes of $6$-variable first order sensitive functions with $\pdeg(f)=3$, while also obtaining the same result for second order sensitivity. Which indicates that the function of minimum polynomial degree vs. sensitivity order and may not be a strictly increasing functions. 

Next we have studied the recursive amplification method and have designed slight modifications that allow us to start with base function, removing the restrictions of the simple function composition method. Furthermore, we use the resiliency-polynomial degree connection to design efficient circuits with linear size and logarithmic depth to realize highly-resilient ($n-o(n)$) functions. Our result improves on the best result known in this domain.

Finally we observe that for constant orders of sensitivity, we can have functions with $o(n)$ polynomial degree using the recursive amplification method. Against this backdrop we take the MM constructions and first obtain first order sensitive function with $\pdeg(f)=n-\Theta (\log n)$. Then we modify the MM construction with nonlinear function concatenation, and obtain functions with 
$n-\omega(1)$ polynomial degree and $\omega(1)$ order sensitivity.
Specifically, we show construction of $k$-th order sensitive function with $\pdeg(f)=n- \frac{\log \left( \frac{n}{2} -k \right)- \log{k}}{k}$. Our results enrich the domain of cryptographically important Boolean functions as long as lay down important combinatorial problems that should further enhance our understanding of real polynomial degree of Boolean functions.

\appendix

\section{Notes on Search Based Results \label{app:sec:2}}

\subsection{Nonexistence of $7$ variable fully sensitive function with Polynomial Degree Value of $3$ \label{app:713}}

We know that, for any $n$-variable Boolean function $f$,  $W_f(\var{u})=0$ for every $\var{u} \in \F^n$ with 
$wt(\var{u}) \geq n-m$ if and only if  $pdeg(f)\leq n-m-1$.
Thus, $f$ is an $n$-variable Boolean function with full sensitivity and polynomial degree $k$ if and only if 
$f \oplus 1$ is an $n$-variable Boolean function with full sensitivity and polynomial degree $k$.
Therefore, if there are $7$-variable Boolean functions with sensitivity $7$ and polynomial degree $3$,
there exist a $7$-variable Boolean function $f$ with sensitivity $7$, polynomial degree $3$, and 
a point $\xv \in \F^7$ such that $f(\xv)=0$ and $f(\xv \oplus e_i)=1$ for all $1 \leq i \leq 7$ where $e_i$'s are pairwise distinct vectors with Hamming weight $1$.
Assume that $f(\overline{i})=y_i$, where $\overline{i}$ is the binary expansion of $i$, and $y_{i'}=0$ when $\overline{i'}=\xv$ and $y_{i_1}=y_{i_2}=\ldots={y_{8}}=1$
when $e_j=\overline{i_j}$, then we can search if there exist such functions with sensitivity $7$ and polynomial degree $3$.
\begin{eqnarray*}
\begin{array}{llll}
\text{Minimize}& y_1&\\
\text{subject to} && \\
&\sum_{i=1}^{128}(1-2y_i)(-1)^{\var{u} \cdot \overline{i}}=0& wt(\var{u})\geq 4\\
&y_i\in\{0,1\}&1\leq i \leq 128 \\
&y_{i'}=0& \\
&y_{i_1}=y_{i_2}=\ldots={y_{8}}=1& 
 \end{array}.
\end{eqnarray*}

With the help of Gurobi, we checked for every $\xv \in \F^7$ and 
received a negative outcome in all cases, concluding that there is no $7$ variable fully sensitive function with $\pdeg(f)=3$.

\subsection{Rotation symmetric function for up to $10$ variables \label{app:rot}}
First we check the rotation symmetric functions on $7$ variables to obtain functions 
of third order sensitivity and then study rotation symmetric functions on $ns>7$ variables. 
We thus obtain the following results:

\begin{itemize}

\item There exists $12$ many $[7,1,3]$-functions (respectively $(7,5,3)$-functions). 
Recursively amplifying this function gives us a $(7^u, 5^u, 3)$-function, an instance
of $\pdeg(f)=n^{\frac{\log 7}{\log 5}}$. This is the best separation we are able to find for
third order sensitive functions. It will be interesting to observe if one can obtain better 
separation in this case.

\item There exists only $12$ functions in the rotation symmetric class that are $[8,1,3]$-functions
and none are $[8,2,3]$-functions. 
There cannot exist any $[8,\ell,4]$-function, where $\ell>0$.

\item We also find $29$ many $[9,1,4]$ functions rotation symmetric functions, out of which $27$ are also $[9,2,4]$-functions.  
Furthermore, there cannot exist any $[9,\ell,5]$-function, where $\ell>0$. Here one should note that one can also obtain 
$[9,1,4]$ and $[9,2,4]$-functions by recursively amplifying a $(3,1,2)$ or a $(3,2,2)$-function respectively 
and then taking its dual, and these were the only known $(9,1,4)$-functions before now. 
However, we obtain  $[9,1,4]$-rotation symmetric functions with a nonlinearity of $224$, 
where as the nonlinearity of the functions obtained through recursive amplification is $192$.
Thus, we obtain previously unknown $(9,1,4)$-functions. The advantage of using the recursive amplification process is its efficient circuit size and depth. 

\item There does not exist any $[10,1,5]$-rotation symmetric function. It should be noted that 
if we can obtain a $[10,1,5]$-function (provided such a function exists) then that would 
improve on the best known separation between sensitivity and polynomial degree. This is because
we can then get a $(10,1,4)$-function and then recursively amplify the function using the 
modified amplification process described in Theorem~\ref{th:ranew} to get a $(10^u,1,4^u)$-function, thus giving $s(f)=\left(\pdeg(f)\right)^{\frac{\log 10}{\log 4}}$ and this would be an improvement on the best known result. Furthermore we took all $(9,1,4)$-functions that we constructed and
used the reverse construction~\cite{SM03} where we concatenate the reverse of the truth table of an even
resilient function to itself to get a function with one more variable and one more resiliency, but this construction only gave us $(10,0,5)$-function and the sensitivity was not maintained.
\end{itemize}

This concludes the study of sensitivity-polynomial degree (and higher order sensitivity-polynomial degree) study
of functions on up to $10$ variables. 

\section{Results in Section~\ref{sec:3} \label{app:sec:3} }

\begin{theorem}
\label{th:ranew}
Let $f$ be a $k$-th order sensitive function on $d$ variables with $\pdeg(f_1)=p$ and 
$\yv=(y_1,y_2, \ldots, y_d)$ being the input with respect to which the function exhibits $k$-th order sensitivity.
We define the function  $f^u$ on $d^u$ variables such that:
\begin{enumerate}

\item ${\yv}^1=\yv$

\item ${\yv}^{i} \in \F^{d^i}$ is obtained by concatenating $d$ copies of  ${\yv}^{i-1}$.

\item $f^1=f$

\item 
$
f^i= f\Big( f^{i-1}(x_1,\ldots ,x_{d^{i-1}}) \op f^{i-1}(\yv^{i-1}) \op y_1
,\ldots, 
f^{i-1}(x_{jd^{i-1}+1}, \ldots ,x_{(j+1)d^{i-1}} ) \op f^{i-1}(\yv^{i-1}) \op y_j
, \ldots, 
f^{i-1}(x_{(d-1)d^{i-1}+1}, \ldots, x_{d^i}) \op f^{i-1}(\yv^{i-1}) \op y_d \Big).
$
\end{enumerate}
Then $f^u$ is a $k$-th order sensitive function on $n=d^u$ variables and $\pdeg(f)=p^u$
with $k$-th order sensitivity achieved at the input point ${\yv}^u$.
\end{theorem}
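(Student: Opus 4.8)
The plan is to establish the two claims — the polynomial degree and the $k$-th order sensitivity — separately and by induction on $u$, since the degree claim is a standard consequence of function composition while the sensitivity claim is where the modified construction earns its keep.

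For the polynomial degree, I would first argue that for each coordinate $j$ the inner expression $h_j := f^{i-1}(x_{jd^{i-1}+1},\ldots) \oplus f^{i-1}(\yv^{i-1}) \oplus y_j$ is a Boolean function whose real polynomial has degree at most $\pdeg(f^{i-1}) = p^{i-1}$: the XOR with the \emph{constant} $f^{i-1}(\yv^{i-1})$ (it is a constant because $\yv^{i-1}$ is a fixed point) either leaves the polynomial alone or replaces it by $1-(\cdot)$, and the XOR with the single variable $y_j$ is an affine operation $q \mapsto q + y_j - 2 q y_j$ which multiplies degree by at most the degree of $y_j$, namely $1$, so the total is still $\le p^{i-1}$. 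Then, substituting $d$ such degree-$p^{i-1}$ Boolean functions into the outer $f$ of degree $p$ yields degree at most $p \cdot p^{i-1} = p^{i}$, and the base case $\pdeg(f^1)=p$ closes the induction. (One should note $p^i$ is in fact attained, but only the upper bound is needed; I would cite the standard composition fact already invoked in the paper.)

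For the $k$-th order sensitivity, I would show by induction that $f^i$ is $k$-th order sensitive \emph{at the point $\yv^i$}. The key computation is the value of $f^i$ on $\yv^i$ and on its perturbations $(\yv^i)^{(S)}$ with $1 \le |S| \le k$. Evaluating at $\yv^i$: since $\yv^i$ is $d$ concatenated copies of $\yv^{i-1}$, each block input equals $\yv^{i-1}$, so each inner term is $f^{i-1}(\yv^{i-1}) \oplus f^{i-1}(\yv^{i-1}) \oplus y_j = y_j$; hence $f^i(\yv^i) = f(y_1,\ldots,y_d) = f(\yv) = f^1(\yv)$, a fixed value (say $v$) independent of $i$. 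Now take $S \subseteq [d^i]$ with $1 \le |S| \le k$. The flipped coordinates fall into the $d$ blocks; in block $j$ the restricted perturbation is some $S_j$ with $\sum_j |S_j| = |S| \le k$, so each $|S_j| \le k$. If $S_j = \emptyset$ the $j$-th inner term is still $y_j$ (plus a possible flip of the bit $y_j$ itself if $j$, viewed as one of the last $d$ "outer" coordinates... — here I must be careful: the coordinates $y_1,\ldots,y_d$ appearing explicitly in level $i$ are among the $d^i$ inputs, so flipping them is part of $S$ as well). The clean way: first handle $S$ that flips only coordinates lying strictly inside the recursive blocks and not among the explicit $y_j$'s, then account additively for the explicit $y_j$ flips. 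For a block with $1 \le |S_j| \le k$, by the induction hypothesis $f^{i-1}((\yv^{i-1})^{(S_j)}) = \overline{f^{i-1}(\yv^{i-1})}$, so the $j$-th inner term becomes $\overline{f^{i-1}(\yv^{i-1})} \oplus f^{i-1}(\yv^{i-1}) \oplus y_j = 1 \oplus y_j = \overline{y_j}$. Thus the outer argument vector is obtained from $(y_1,\ldots,y_d)$ by flipping exactly those coordinates $j$ for which $S_j \ne \emptyset$ (this is a nonempty set of size at most $k$, since the blocks touched are at most $|S| \le k$ of them and at least one is touched), and therefore $f^i((\yv^i)^{(S)}) = f((\yv)^{(T)})$ for some $\emptyset \ne T \subseteq [d]$ with $|T| \le k$, which by $k$-th order sensitivity of $f$ at $\yv$ equals $\overline{f(\yv)} = \overline{v}$. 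Finally I would fold in the explicit-$y_j$ flips: flipping an explicit $y_j$ has the same effect as toggling whether coordinate $j$ of the outer argument is complemented, so it only changes which $T$ arises — and as long as the \emph{overall} number of distinct outer coordinates perturbed is between $1$ and $k$, we still land on $\overline{v}$; the one case needing thought is when the block-internal flips and an explicit $y_j$ flip in the \emph{same} coordinate cancel, but a short parity argument shows the resulting $T$ is still nonempty whenever $S \ne \emptyset$ and $|T| \le |S| \le k$.

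The main obstacle I anticipate is precisely this bookkeeping: the coordinates $y_1,\ldots,y_d$ appear twice in the level-$i$ formula — once as the literal tail terms and once, implicitly, because $\yv^i$ repeats $\yv^{i-1}$ inside every block and $\yv^{i-1}$ itself ends (recursively) in copies of $\yv$. Making rigorous that an arbitrary $S$ of size $\le k$ produces an outer perturbation set $T$ with $1 \le |T| \le k$, handling the cancellation cases, and verifying the induction hypothesis is stated in the exactly reusable form (``$k$-th order sensitive at $\yv^i$'', not merely ``$k$-th order sensitive''), is the delicate part; everything else is routine.
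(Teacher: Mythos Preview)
Your overall induction scheme and the core of your sensitivity argument are exactly what the paper does, but you have misread the construction in one important way, and that misreading is the source of every complication you flag as ``the main obstacle.''

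In the definition of $f^i$, the symbols $y_1,\ldots,y_d$ are the coordinates of the \emph{fixed} point $\yv\in\F^d$ at which $f$ is $k$-th order sensitive, and $f^{i-1}(\yv^{i-1})$ is likewise a fixed bit. They are \emph{constants}, not input variables. The function $f^i$ has exactly the $d^i$ inputs $x_1,\ldots,x_{d^i}$; the $j$-th argument fed to the outer $f$ is therefore just $f^{i-1}(\text{block}_j)\oplus c_j$ with $c_j\in\{0,1\}$ a constant. Consequently there are no ``explicit $y_j$ flips,'' no cancellation between a block-internal flip and a $y_j$ flip, and no parity argument needed to keep $T$ nonempty. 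Your own clean case --- ``$S$ touches only the recursive blocks'' --- is already the general case: if $S\ne\emptyset$ with $|S|\le k$, then the set of blocks with at least one flipped bit has size between $1$ and $|S|\le k$, each such block contributes $\overline{y_j}$ by the induction hypothesis, and the outer input is $\yv^{(T)}$ with $1\le|T|\le k$, giving $\overline{f(\yv)}$. That is precisely the paper's proof.

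The same misreading slightly distorts your degree argument: you treat $y_j$ as a variable and invoke $q\mapsto q+y_j-2qy_j$. Since $y_j$ is a constant, the inner expression is either $f^{i-1}(\text{block}_j)$ or its complement, so its real degree is exactly $p^{i-1}$, and the usual composition bound gives $\pdeg(f^i)\le p\cdot p^{i-1}=p^i$ immediately.
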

\begin{proof}
Here we call the function $f$ as the base function.
Let us denote by $[\xv]_k$ any input point 
that can be obtained by flipping at least $1$ and at most $k$ variables 
of $\xv \in \F^n$.
Thus if a function $f$ is $k$-th order sensitive at the point $\yv$ 
then $f([\yv]_k)=\overline{f(\yv)}$ by definition.
We now prove the result using induction on $u$.
The result holds for $u=1$ by definition.
Assume the result holds for $u-1$ and
we need to show that the function $f^u$ has $k$-th order sensitivity at $\yv^u$. 
The value of the function at $\yv^u$ is 
\begin{align*}
f^u(\yv^u)=&f\Big( f^{u-1}(\yv^{u-1}) \op f^{u-1}(\yv^{u-1}) \op y_1
,\ldots
,f^{u-1}(\yv^{u-1}) \op f^{u-1}(\yv^{u-1}) \op y_d \Big)
\\=&
f(\yv).
\end{align*}

Let us now select any $i \leq k$ variables whose value we wish to flip
resulting in an input point of the form ${[\yv^u]}_k$. 
We define the $d$ tuple $S=(s_1,s_2,\ldots, s_d)$ where $s_i$ denotes the 
number of bits to be flipped between $x_{(i-1)d^{u-1}+1}$ and $x_{id^{u-1}}$.
Thus $0 \leq s_i \leq k,~\forall i$.
If $s_i=0$ then 
\begin{align*}
&f^{u-1}(x_{id^{u-1}+1}, \ldots ,x_{(i+1)d^{u-1}} ) \op f^{u-1}(\yv^{u-1}) \op a_i
\\&=f^{u-1}(\yv^{u-1}) \op f^{n-1}(\yv^{u-1}) \op a_i=
a_i.
\end{align*}
If $1 \leq s_i \leq k$
then 
\begin{align*}
&f^{u-1}(x_{id^{u-1}+1}, \ldots ,x_{(i+1)d^{u-1}} ) \op f^{u-1}(\yv^{u-1}) \op a_i
\\&=f^{u-1}([\yv^{u-1}]_k) \op f^{u-1}(y^{u-1}) \op a_i=
\overline{a_i}.
\end{align*}
The number of nonzero values in $S$ are at most $k$, which would change 
at most $k$ of the $d$ points $y_i$ in the base function's input to $\overline{y_i}$
and result in an input to $f$ of the form of $f([y]_k)$. 
Thus for the function $f^u$ we have 
$
f^u([\yv^u]_k)=f([\yv]_k)=\overline{f(\yv)}=\overline{f^u(\yv^u)}
$.

The polynomial degree result holds from the basic definition of recursive amplification
as $\pdeg(f)=\pdeg(\overline{f})$ and this completes the proof.
\end{proof}

\begin{theorem}
\label{th:recres}
Given a function $f$ on $d$ variables with $\pdeg(f)=p<d$ we can obtain a function 
on $g^u$ on $n=d^u$ variables with resiliency $n-n^{\frac{\log p}{\log d}}-1$ such that 
there is a circuit of linear size and logarithmic depth (in $n$) for it.
\ \\
Here $g^u$ is the dual of $f^u$ where $f^u$ is the function on $n^u$ variables obtained by recursively amplifying $f$.
\end{theorem}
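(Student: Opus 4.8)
The plan is to combine the recursive amplification construction of Theorem~\ref{th:ranew} with Theorem~\ref{th:old} (the resiliency--polynomial-degree duality) for the functional specification, and then to argue separately about circuit size and depth. First I would recall that $f^u$ on $n=d^u$ variables satisfies $\pdeg(f^u)=p^u=n^{\frac{\log p}{\log d}}$; taking $g^u = f^u \op \lin_n$ immediately gives, by Theorem~\ref{th:old}, that $g^u$ is $(n - n^{\frac{\log p}{\log d}} - 1)$-resilient. So the resiliency claim needs no new work beyond invoking what is already established. (If the construction used is the modified amplification of Theorem~\ref{th:ranew} rather than plain composition, I would note that the correction terms $f^{i-1}(\yv^{i-1})\op y_j$ are affine in the already-built blocks and do not change $\pdeg(f^u)=p^u$, which the proof of Theorem~\ref{th:ranew} already records.)

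The substantive part is the circuit. I would build the circuit $C_{g^u}$ recursively from a fixed circuit $C_f$ for the $d$-variable base function $f$. Let $S_u$ and $D_u$ denote the size and depth of a circuit for $f^u$. The recursion $f^{i} = f(\,f^{i-1}(\cdot)\op(\text{affine correction}),\ldots\,)$ shows that one layer places $d$ disjoint copies of the $C_{f^{i-1}}$ circuit (on disjoint variable blocks), feeds each through a constant-size affine fix-up (the shared term $f^{i-1}(\yv^{i-1})$ is a \emph{constant} once the input point $\yv$ is fixed, so it costs nothing; $y_j$ contributes one XOR), and then feeds the $d$ outputs into one copy of $C_f$. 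Hence $S_i \le d\, S_{i-1} + O(1)$ and $D_i \le D_{i-1} + O(1)$, with $S_1, D_1 = O(1)$ since $d$ is a constant. Unrolling: $S_u = O(d^u) = O(n)$ and $D_u = O(u) = O(\log_d n) = O(\log n)$. Finally, $g^u = f^u \op \lin_n$ adds one XOR-tree over all $n$ inputs, which is $O(n)$ gates and $O(\log n)$ depth, so the bounds are preserved. I would include Figure~\ref{fig:rec} (the $d=3$, $u=2$ example, nine inputs, four copies of $C_f$) to make the recursion concrete.

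The main obstacle I anticipate is purely bookkeeping rather than conceptual: making sure the ``$O(1)$ per level'' claim is honest. Two points need care. First, the shared subterm $f^{i-1}(\yv^{i-1})$ must genuinely be treated as a hardwired constant bit (determined by $\yv$ and $f$, not an input), otherwise one would be tempted to recompute it and blow up the size; I would state explicitly that the circuit is tailored to the fixed sensitivity point. Second, one must check the depth of $C_f$ itself and of each affine fix-up is bounded by an absolute constant independent of $u$ — true because $d$ and $f$ are fixed before the amplification starts — so the additive-$O(1)$-per-level recurrence really does yield $O(\log n)$ total depth and not, say, $O(\log^2 n)$. With these two observations pinned down, the size and depth recurrences solve immediately and the theorem follows.
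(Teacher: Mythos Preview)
Your proposal is correct and follows essentially the same line as the paper: layer copies of the base circuit $C_f$, obtain size via a geometric sum and depth via an additive-per-level recurrence, then append an $O(n)$-size, $O(\log n)$-depth parity tree for $\lin_n$. The one simplification the paper makes is to use the \emph{plain} recursive amplification $f^{i+1}=f(f^i,\ldots,f^i)$ rather than the modified construction of Theorem~\ref{th:ranew}; since Theorem~\ref{th:recres} only needs $\pdeg(f^u)=p^u$ and not the sensitivity guarantee, the affine correction terms $f^{i-1}(\yv^{i-1})\oplus y_j$ and the hardwiring bookkeeping you carefully flag never arise in the paper's argument.
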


\begin{proof}
We first define $f^u$ as the function obtained recursively amplifying the function $f$, $u$ times,
which gives us a function on $d^u$ variables with $\pdeg(f^u)=n^{\frac{\log p}{\log d}}$.
Let us assume the circuit corresponding to the base function on $d$ variables consists of some $c_d$ gates
and has a depth of $t_d$. This circuit takes in $d$ input variable bits and outputs a single bit. 
Then the circuit corresponding to $f^u$ can be built using the circuits for $f$ in a layered manner 
in the following way. 

\begin{itemize}

\item In the first layer there are total $d^{u-1}$ circuits each taking in $d$ variables each as input bits. 

\item In the $i$-th layer there $d^{u-i-1}$ circuits each taking as input $d$ of the
$d^{u-i}$ output bits from the previous layer.

\item The final layer contains a single circuit, whose output is the output of the final function. 

\end{itemize}

Then the total number of circuit instances of $f$ to be used is 
$
\sum_{i=0}^{u-1} d^i= \frac{d^u-1}{u-1}
$
and the gate count is $c_d \times  \frac{d^u-1}{u-1}= \mathcal{O}(d^u)=\mathcal{O}(n)$.
Moreover, the depth of this circuit is $u \times t_d$ as the circuit for $f$ is set up in $u$ layers, 
which gives as a circuit for $f^u$ with $\mathcal{O}(\log_d n)$ depth.

Now if we XOR the parity of all the input bits to this output we obtain a 
$n-n^{\frac{\log p}{\log d}}-1$ function $g^u$ via the resiliency-polynomial degree connection.
The parity of the input bits can be simply obtained in parallel using $n$ gates and $\log n$ depth, which gives us the result.
\end{proof}

\begin{figure}[H]
\begin{center}
\includegraphics[scale=0.225]{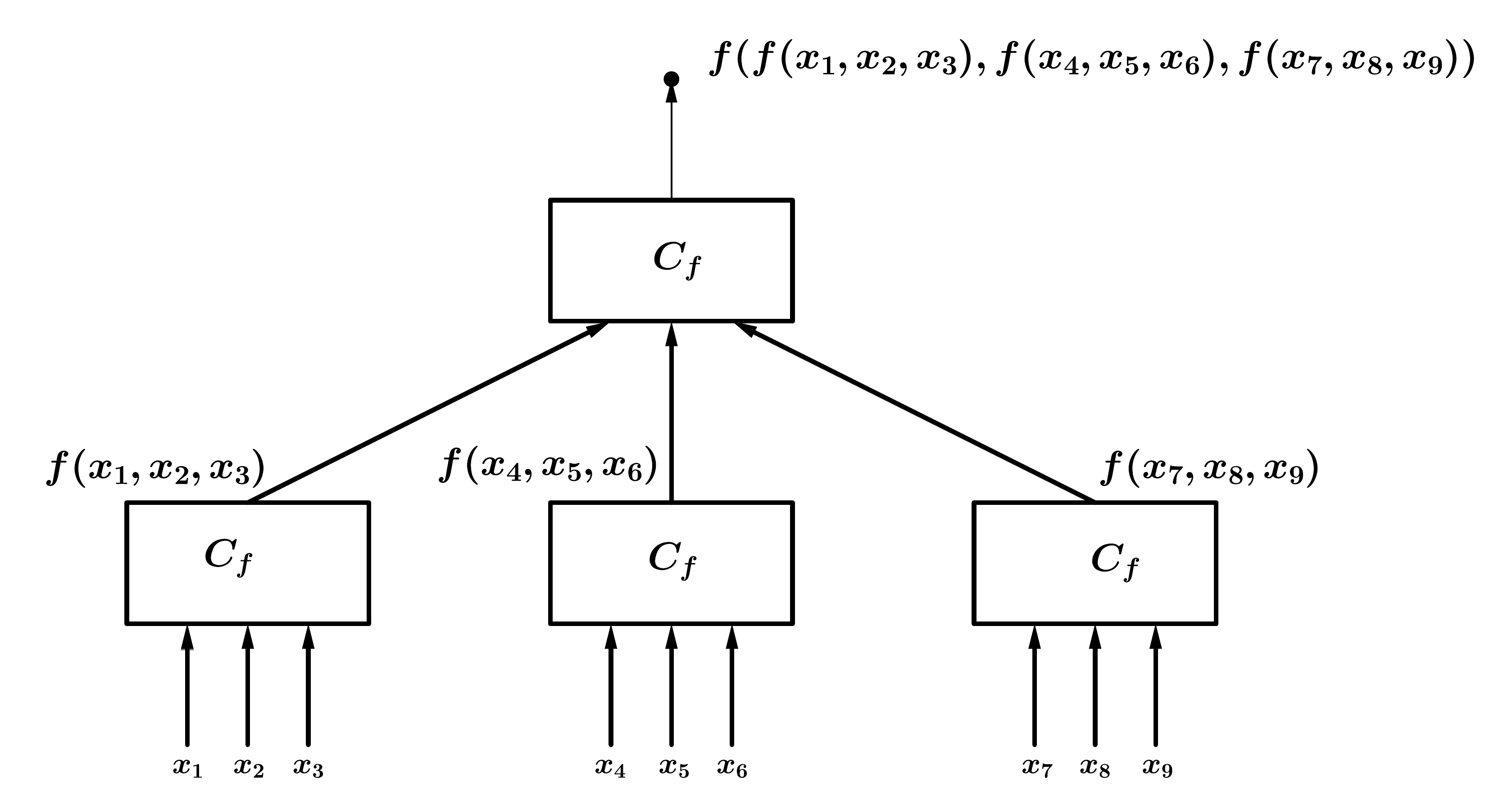}
\caption{Example of a circuit corresponding to recursive amplification.}
\label{fig:rec}
\end{center}
\end{figure}

\section{Cryptographic Properties of The Highly Resilient Functions \label{app:crypto-prop}}

\subsection*{Other Properties:}
Having discussed the efficiency of this method, let us now look into the nonlinearity of such functions,
which is another very important cryptographic property. We consider the following examples:
\begin{itemize}

\item Recursively amplifying a $(3,1,2)$-variable function $f_3$ to get a $9$ variable function 
$f_9(\xv)=f_3\big(f_3(x_1,x_2,x_3) \op a_1, f_3(x_4,x_5,x_6) \op a_2, f_3(x_7,x_8,x_9) \op a_3 \big),~ a_i \in \{0,1\}$
and then adding the all variable linear function to obtain $g_9$, which is $4$ resilient, irrespective of the choice of $a_i$.
However, depending on the choice of $a_i$, the nonlinearity can either be $96$ or $192$.
Here the circuit for $f_3$ needs $5$ XOR gates and $3$ AND gates and thus the circuit for $f_9$ 
only requires $20$ XOR gates and $12$ AND gates, and we can obtain $g_9$ by adding the all variable 
parity function, which requires a further $9$ XOR gates, which makes the total gate count to be only 
$41$.

\item Recursively amplifying a $4$ variable function $f_4$ to get a $16$ variable function $f_{16}$
and then adding the all variable linear function to obtain $g_{16}$. Here we obtain $16$ variable 
$11$-resilient functions with nonlinearity as high as $24576= 2^{15}-2^{13}$ where the best possible
nonlinearity for $16$ variable functions is $2^{15}-2^7$ (bent functions).  

\item We also find $(6,1,3)$-functions in Section~\ref{sec:4-6} 
that require $24$ AND gates and $21$ XOR gates to compute. We can then get 
a $36$ variable $26$ resilient fully dual sensitive function with only
$168$ AND gates and $147$ XOR gates.
\end{itemize}

Let us now look into some results on non trivial lower bounds on the nonlinearity of functions based on the recursive amplification method.
Here it should be noted that the algebraic degree of the function $g^u$ 
is upper bounded by the real polynomial degree of $f^u$, which is $n^{\frac{\pdeg(f)}{d}}$.

Let $f$ be a Boolean function on $s$ variables and
$g_1,g_2,\ldots,g_s$ be $s$ Boolean functions on $u$ variables.
We now define a $us$-variable Boolean function $\hat{f}$ by
\begin{equation}\label{Eq:CascadedBF}
\hat{f}\left(x_1,\ldots, x_{us}\right)=f\left(g_1(x_1,\ldots,x_u), g_2(x_{u+1},\ldots,x_{2u}), \ldots, g_s(x_{us-u+1},\ldots,x_{us})\right).
\end{equation}

\begin{theorem}\label{T:CascadedBF}
Let $\hat{f}$ be the function defined in \eqref{Eq:CascadedBF}, then for any ${\bf w}=({\bf w}_1,\ldots,{\bf w}_s)\in\mathbb{F}_2^{us}$
(where ${\bf w}_i\in\mathbb{F}_2^u$ for any $1\leq i \leq s$), we have 
\begin{eqnarray*}
W_{\hat{f}}({\bf w})&=&2^{-s}\sum_{{\bf v}\in\mathbb{F}_2^s}\left(W_f({\bf v})\prod_{i=1}^sW_{v_ig_i}({\bf w}_i)\right),
\end{eqnarray*}
where ${\bf v}=(v_1,v_2,\ldots,v_s)\in\mathbb{F}_2^s$.
\end{theorem}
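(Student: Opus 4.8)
The plan is to compute the Walsh--Hadamard transform of $\hat{f}$ directly from the definition and expand the inner function values using their own Fourier expansions. Recall that for a Boolean function $h$ on $m$ variables, $(-1)^{h(\xv)} = \sum_{\var{v}} W_h(\var{v}) (-1)^{\var{v}\cdot \xv}$, where $W_h(\var{v}) = 2^{-m}\sum_{\xv}(-1)^{h(\xv)\op \var{v}\cdot\xv}$. First I would write
\[
W_{\hat{f}}(\var{w}) = 2^{-us}\sum_{\xv \in \F^{us}} (-1)^{\hat{f}(\xv)} (-1)^{\var{w}\cdot \xv},
\]
split the sum over $\xv$ into blocks $\xv = (\xv^{(1)},\ldots,\xv^{(s)})$ with $\xv^{(i)}\in\F^u$, and correspondingly $\var{w}\cdot\xv = \sum_i \var{w}_i \cdot \xv^{(i)}$.

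Next I would substitute the Fourier expansion of the outer function: $(-1)^{f(g_1,\ldots,g_s)} = \sum_{\var{v}\in\F^s} W_f(\var{v}) (-1)^{\sum_i v_i g_i(\xv^{(i)})}$. Plugging this in and interchanging the (finite) sums over $\var{v}$ and over $\xv$, the sum over $\xv$ factorizes completely across the $s$ blocks:
\[
\sum_{\xv} (-1)^{\sum_i v_i g_i(\xv^{(i)}) \op \var{w}_i\cdot\xv^{(i)}} = \prod_{i=1}^s \left( \sum_{\xv^{(i)}\in\F^u} (-1)^{v_i g_i(\xv^{(i)}) \op \var{w}_i\cdot\xv^{(i)}} \right) = \prod_{i=1}^s 2^u W_{v_i g_i}(\var{w}_i).
\]
Here $v_i g_i$ denotes the Boolean function that is $g_i$ when $v_i=1$ and the constant $0$ when $v_i=0$; note $W_{0}(\var{w}_i) = [\var{w}_i = \bm{0}]$, which is exactly the convention needed for the formula to hold uniformly. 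Collecting the factors, the prefactor becomes $2^{-us}\cdot 2^{us} \cdot 2^{-s}$ after also accounting for the normalization hidden in writing each block sum as $2^u W_{v_ig_i}$; carefully tracking constants yields $W_{\hat{f}}(\var{w}) = 2^{-s}\sum_{\var{v}\in\F^s} W_f(\var{v}) \prod_{i=1}^s W_{v_i g_i}(\var{w}_i)$, which is the claimed identity.

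The only real subtlety — and the step I would be most careful about — is bookkeeping of the normalization constants and correctly handling the $v_i = 0$ terms, since for those indices the "function" $v_i g_i$ is the zero constant and its Walsh coefficient is a Kronecker delta rather than anything involving $g_i$; one must check this is consistent with the block sum $\sum_{\xv^{(i)}} (-1)^{\var{w}_i\cdot\xv^{(i)}}$ evaluating to $2^u[\var{w}_i=\bm{0}]$. Everything else is a routine interchange of finite sums and the factorization of an exponential sum over a product set, so no genuine obstacle arises beyond careful index management.
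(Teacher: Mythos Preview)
Your approach is the paper's: both substitute for $(-1)^{f(g_1,\ldots,g_s)}$ a sum over $\var{v}\in\F^s$ weighted by $W_f(\var{v})$ and then factorize the remaining $\xv$-sum across the $s$ blocks. The paper phrases the substitution step via the identity $2^{-s}\sum_{\yv}\sum_{\var{v}}(-1)^{\var{v}\cdot(\yv\oplus\var{g})}=1$, which is just Fourier inversion unpacked.

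The one slip is precisely the one you flagged. The paper uses the \emph{unnormalized} transform $W_h(\var{u})=\sum_{\xv}(-1)^{h(\xv)\oplus\var{u}\cdot\xv}$ (see the first line of its proof, and Theorem~\ref{T:CascadedBalacedBF} where $W_0(\bm{0})=2^k$, not $1$). Under your normalized convention the prefactor you actually obtain is $2^{-us}\cdot 2^{us}=1$; nothing in your displayed steps produces the extra $2^{-s}$ you assert. Under the paper's convention, Fourier inversion reads $(-1)^{f(\var{g})}=2^{-s}\sum_{\var{v}}W_f(\var{v})(-1)^{\var{v}\cdot\var{g}}$, each block sum is already $W_{v_ig_i}(\var{w}_i)$ with no factor of $2^u$, and the $2^{-s}$ falls out directly. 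So your argument is correct once you switch to the paper's normalization; the ``carefully tracking constants'' line as written is where it goes wrong.
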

\begin{proof}
We denote by ${\bf x}$ the vector $(x_1, \ldots, x_{us})$ and
${\bf x}_i$ the vector $(x_{ui-u+1},  \ldots,x_{ui})$ for any $1\leq i \leq s$. Then 
we have ${\bf x}=({\bf x}_1,{\bf x}_2,\ldots,{\bf x}_s)$ and $\hat{f}$ 
can be rewritten as  $\hat{f}\left({\bf x}\right)=f\left(g_1({\bf x}_1), g_2({\bf x}_2), \ldots, g_s({\bf x}_s)\right)$.
For any ${\bf x}\in\mathbb{F}_2^{us}$, we define ${\bf g}=(g_1({\bf x}_1),g_2({\bf x}_2),\ldots,g_s({\bf x}_s))\in\mathbb{F}_2^s$.
Then for any ${\bf g}\in\mathbb{F}_2^s$, we have
\begin{eqnarray}\label{eq:linearfun}
\sum_{{\bf y}\in\mathbb{F}_2^s}\sum_{{\bf v}\in\mathbb{F}_2^s}(-1)^{{\bf v}\cdot({\bf y}+{\bf g})}=2^s,
\end{eqnarray}
since $\sum_{{\bf v}\in\mathbb{F}_2^s}(-1)^{{\bf z}\cdot {\bf v}}$ equals 0 if ${\bf z}\neq {\bf 0}$ and equals $2^s$ if ${\bf z}= {\bf 0}$.
By the definition of the Walsh transform, for any ${\bf w}=({\bf w}_1,{\bf w}_2,\ldots,{\bf w}_s)\in\mathbb{F}_2^{us}$ we have  
\begin{eqnarray*}
W_{\hat{f}}({\bf w})&=&\sum_{{\bf x}\in\mathbb{F}_2^{us}}(-1)^{\hat{f}({\bf x})\oplus{\bf w}\cdot{\bf x}}\\
&=&\sum_{{\bf x}\in\mathbb{F}_2^{us}}(-1)^{f({\bf g})\oplus{\bf w}\cdot{\bf x}}\\
&=&2^{-s}\sum_{{\bf x}\in\mathbb{F}_2^{us}}(-1)^{f({\bf y})\oplus{\bf w}\cdot{\bf x}}\sum_{{\bf y}\in\mathbb{F}_2^s}\sum_{{\bf v}\in\mathbb{F}_2^s}(-1)^{{\bf v}\cdot({\bf y}\oplus{\bf g})}~~(\text{by}~\eqref{eq:linearfun})\\
&=&2^{-s}\sum_{{\bf v}\in\mathbb{F}_2^s}\sum_{{\bf y}\in\mathbb{F}_2^s}\sum_{{\bf x}\in\mathbb{F}_2^{us}}(-1)^{f({\bf y})\oplus{\bf v}\cdot{\bf y}\oplus{\bf v}\cdot{\bf g}\oplus{\bf w}\cdot{\bf x}}\\
&=&2^{-s}\sum_{{\bf v}\in\mathbb{F}_2^s}\left[\left(\sum_{{\bf y}\in\mathbb{F}_2^s}(-1)^{f({\bf y})\oplus{\bf v}\cdot{\bf y}}\right)\left(\sum_{{\bf x}\in\mathbb{F}_2^{us}}(-1)^{{\bf v}\cdot{\bf g}\oplus{\bf w}\cdot{\bf x}}\right)\right]\\
&=&2^{-s}\sum_{{\bf v}\in\mathbb{F}_2^s}\left(W_f({\bf v})\sum_{{\bf x}_1,{\bf x}_2,\ldots,{\bf x}_s\in\mathbb{F}_2^u}(-1)^{\bigoplus_{i=1}^s v_ig({\bf x}_i)\oplus\bigoplus_{i=1}^s{\bf w}_i\cdot {\bf x}_i}\right)\\
&=&2^{-s}\sum_{{\bf v}\in\mathbb{F}_2^s}\left(W_f({\bf v})\prod_{i=1}^sW_{v_ig_i}({\bf w}_i)\right).
\end{eqnarray*}
\end{proof}

\begin{theorem}\label{T:CascadedBalacedBF}
Let $\hat{f}$ be the function defined in \eqref{Eq:CascadedBF} by taking $g_1,g_2,\ldots,g_k\in\mathbb{F}_2^k$ to be balanced functions.
Then for any ${\bf w}=({\bf w}_1,{\bf w}_2,\ldots,{\bf w}_s)\in\mathbb{F}_2^{ks}$ (where ${\bf w}_i\in\mathbb{F}_2^k$ for any $1\leq i \leq s$), we have
\begin{eqnarray*}
W_{\hat{f}}({\bf w})&=&2^{-s}W_f({\bf v}')\prod_{i=1}^s\left(\frac{1+(-1)^{v_i'}}{2}\cdot 2^k+\frac{1-(-1)^{v_i'}}{2}\cdot W_{g_i}({\bf w}_i)\right),
\end{eqnarray*}
where ${\bf v}'=(v_1',v_2',\ldots,v_s')\in\mathbb{F}_2^s$ with $v_i'=1$ if and only if ${\bf w}_i\neq {\bf 0}_k$.
Furthermore, we have 
$$NL(\hat{f})\geq 2^{ks-1}-2^{ks-k-s-1}\bigg(2^s-2NL(f)\bigg)\left(2^k-2\min_{1\leq i\leq s}\{NL(g_i)\}\right),$$
where $NL$ denotes the nonlinearity.
\end{theorem}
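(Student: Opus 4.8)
The plan is to obtain both parts as consequences of Theorem~\ref{T:CascadedBF}, which already expresses $W_{\hat f}({\bf w})$ as $2^{-s}\sum_{{\bf v}\in\mathbb{F}_2^s} W_f({\bf v})\prod_{i=1}^s W_{v_i g_i}({\bf w}_i)$. The first step is to evaluate the factors $W_{v_i g_i}({\bf w}_i)$: if $v_i=0$ then $v_i g_i$ is the constant $0$, so $W_{v_i g_i}({\bf w}_i)=2^k$ when ${\bf w}_i={\bf 0}_k$ and $0$ otherwise; if $v_i=1$ then $W_{v_i g_i}({\bf w}_i)=W_{g_i}({\bf w}_i)$, and here balancedness of $g_i$ gives $W_{g_i}({\bf 0}_k)=0$. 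Hence a summand is nonzero only when $v_i=1\iff{\bf w}_i\neq{\bf 0}_k$ holds for every $i$, i.e.\ only when ${\bf v}={\bf v}'$. Substituting ${\bf v}={\bf v}'$ collapses the sum to $W_{\hat f}({\bf w})=2^{-s}W_f({\bf v}')\prod_{i=1}^s F_i$ with $F_i=2^k$ if $v_i'=0$ and $F_i=W_{g_i}({\bf w}_i)$ if $v_i'=1$; since $\tfrac{1+(-1)^{v_i'}}{2}2^k+\tfrac{1-(-1)^{v_i'}}{2}W_{g_i}({\bf w}_i)$ equals precisely these two values, this is the stated formula.

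For the nonlinearity estimate I would use $NL(h)=2^{m-1}-\tfrac12\max_{\bf w}|W_h({\bf w})|$ for an $m$-variable $h$, so that $\max_{\bf v}|W_f({\bf v})|=2^s-2NL(f)$ and $\max_{{\bf w}_i}|W_{g_i}({\bf w}_i)|=2^k-2NL(g_i)\le 2^k-2M$ with $M=\min_{1\le i\le s}NL(g_i)$. Writing $t=wt({\bf v}')$, the closed form gives $|W_{\hat f}({\bf w})|=2^{-s}|W_f({\bf v}')|\,2^{k(s-t)}\prod_{i:\,v_i'=1}|W_{g_i}({\bf w}_i)|$. I would then bound $|W_f({\bf v}')|\le 2^s-2NL(f)$, bound one of the $t$ surviving factors by $2^k-2M$ and each of the remaining $t-1$ of them by the trivial $2^k$; for $t\ge1$ the powers of two recombine as $2^{-s}\cdot2^{k(s-t)}\cdot2^{k(t-1)}=2^{ks-k-s}$, giving $\max|W_{\hat f}|\le 2^{ks-k-s}(2^s-2NL(f))(2^k-2M)$ and hence the advertised lower bound on $NL(\hat f)$.

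The computation is mostly bookkeeping once the key structural fact is in hand: the one genuinely delicate observation is that balancedness of the $g_i$ is exactly what kills every term of the Walsh sum except ${\bf v}={\bf v}'$, and this single-term collapse is what makes both the exact formula and the clean nonlinearity bound possible. The only other point requiring care is the pattern ${\bf v}'={\bf 0}$ (equivalently ${\bf w}={\bf 0}_{ks}$): its contribution is $2^{ks-s}W_f({\bf 0})$, which is dominated by the claimed bound precisely when $f$ is balanced, so I would state this balancedness hypothesis on $f$ explicitly — it is in any case the setting relevant to the resilient constructions of Section~\ref{sec:3} — rather than leave it implicit.
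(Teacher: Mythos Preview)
Your argument is correct and matches the paper's proof essentially line for line: both reduce the sum from Theorem~\ref{T:CascadedBF} to the single term ${\bf v}={\bf v}'$ using exactly the two observations $W_0({\bf w}_i)=0$ for ${\bf w}_i\neq{\bf 0}_k$ and $W_{g_i}({\bf 0}_k)=0$ (balancedness), and both bound the product of the surviving factors by $2^{k(s-1)}\bigl(2^k-2\min_i NL(g_i)\bigr)$, which is the content of the paper's phrase ``by setting the Hamming weight of ${\bf v}'$ to be~$1$''.

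Your closing remark about the case ${\bf v}'={\bf 0}$ is well taken and in fact sharper than the paper. The paper's proof silently restricts to $wt({\bf v}')\ge 1$ when bounding the product, but the nonlinearity involves $\max_{\bf w}|W_{\hat f}({\bf w})|$ over \emph{all} ${\bf w}$, including ${\bf w}={\bf 0}_{ks}$, where one gets $|W_{\hat f}({\bf 0})|=2^{ks-s}|W_f({\bf 0})|$. This is dominated by $2^{ks-k-s}(2^s-2NL(f))(2^k-2M)$ only if $f$ is balanced (or $M=0$); otherwise the stated nonlinearity bound can fail, e.g.\ for constant $f$. So your instinct to add the hypothesis that $f$ be balanced is exactly right, and it is indeed the case used downstream (Lemma~\ref{L:ff} applies the bound to resilient, hence balanced, base functions).
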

\begin{proof}
Note that the value of the Walsh transform at any nonzero point of a constant function is equal to null.
Thus, for any $1\leq i \leq s$, we have $W_{0}({\bf w}_i)=0$ if ${\bf w}_i\ne {\bf 0}_k$.
As $g_i$'s are balanced, we have $W_{g_i}({\bf 0}_k)=0$  for any $1\leq i \leq s$.
So we have $\prod_{i=1}^sW_{v_ig_i}({\bf w}_i)=0$ if $v_i\neq v_i'$.
Then by Theorem~\ref{T:CascadedBF} we have
\begin{eqnarray*}
W_{\hat{f}}({\bf w})&=&2^{-s}\sum_{{\bf v}\in\mathbb{F}_2^s}\left(W_f({\bf v})\prod_{i=1}^sW_{v_ig_i}({\bf w}_i)\right)\\
&=&2^{-s}W_f({\bf v'})\prod_{i=1}^sW_{v_i'g_i}({\bf w}_i).
\end{eqnarray*}
Then our first assertion comes from the fact $W_{v_i'g_i}({\bf w}_i)$ equals $2^k$ if $v_i'=0$ and $W_{g_i}({\bf w}_i)$ if $v_i'=1$. 
Note that $|W_{g_i}({\bf w}_i)|\leq \max_{{\bf w}_i'\in\mathbb{F}_2^k }\{ |W_{g_i}({\bf w}_i')| \}\leq 2^k$ for any $1\leq i \leq s$.
Then we have $|\prod_{i=1}^sW_{v_ig_i}({\bf w}_i)|\leq 2^{ks-k}\cdot \max_{1\leq i \leq s, {\bf w}_i'\in\mathbb{F}_2^k }\{ |W_{g_i}({\bf w}_i') |\}$  
by setting the Hamming weight of ${\bf v'}$ to be $1$. Then we can obtain that the nonlinearity of $\hat{f}$ is at least
$2^{ks-1}-2^{ks-k-s-1}\max_{{\bf u}\in\mathbb{F}_2^s}\{|W_{f}({\bf u})|\}\cdot \max_{1\leq i \leq s, {\bf w}_i'\in\mathbb{F}_2^k }\{ |W_{g_i}({\bf w}_i') |\}$,
which gives our second assertion.
This finishes the proof.
\end{proof}

\begin{lemma}\label{L:ff}
Let $f$ be an $n$-variables function and $\hat{f}$ be an $n^2$-variable function defined
as $\hat{f}(x_1,\ldots,x_{n^2})=f\left(f(x_1,\ldots,x_n)+a_1, \ldots, f(x_{n^2-n+1},\ldots,x_{n^2})+a_n\right)$, where $a_i$'s belong to $\mathbb{F}_2$.
Then the nonlinearity of $\hat{f}$ is at least $2^{n^2-n+1}NL(f)-2^{n^2-2n+1}\left(NL(f)\right)^2$.
Moreover, if $f$ is $m$ resilient, then $\hat{f}$ is $(m+1)^2-1$ resilient; if $w_f({\bf w})=0$ for any 
${\bf w}$ with Hamming weight no less than $m'$, then $f$ is $n^2-(m'-1)^2-1$ resilient.
\end{lemma}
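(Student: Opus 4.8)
The plan is to read off all three assertions from the Walsh–spectrum identities already established in Theorems~\ref{T:CascadedBF} and~\ref{T:CascadedBalacedBF}, after observing that $\hat f$ is exactly the cascaded function of~\eqref{Eq:CascadedBF} with $s=u=n$ and $g_i=f\oplus a_i$ for $1\le i\le n$. The one elementary fact I would record first is that adding a constant only inserts a sign into the Walsh transform: $W_{g_i}(\mathbf{w})=(-1)^{a_i}W_f(\mathbf{w})$ for every $\mathbf{w}\in\F^n$. Consequently $NL(g_i)=NL(f)$, each $g_i$ is balanced exactly when $f$ is (the relevant case, since a resilient $f$ is balanced), and $W_{g_i}$ vanishes on precisely the same set of points as $W_f$.

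For the nonlinearity bound I would substitute $k=s=n$ and $g_i=f\oplus a_i$ into the estimate of Theorem~\ref{T:CascadedBalacedBF}, obtaining $NL(\hat f)\ge 2^{n^2-1}-2^{n^2-2n-1}\bigl(2^n-2NL(f)\bigr)^2$; expanding the square and regrouping powers of two turns the right-hand side into $2^{n^2-n+1}NL(f)-2^{n^2-2n+1}NL(f)^2$. This is a routine algebraic simplification.

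For the resiliency and degree statements I would instead use Theorem~\ref{T:CascadedBF} directly, i.e.\ $W_{\hat f}(\mathbf{w})=2^{-n}\sum_{\mathbf{v}\in\F^n}W_f(\mathbf{v})\prod_{i=1}^n W_{v_ig_i}(\mathbf{w}_i)$ with $\mathbf{w}=(\mathbf{w}_1,\dots,\mathbf{w}_n)$, $\mathbf{w}_i\in\F^n$, and decide when a summand is nonzero. If $v_i=0$ then $v_ig_i$ is the constant $0$, so $W_{v_ig_i}(\mathbf{w}_i)=W_{\mathbf 0}(\mathbf{w}_i)$, which equals $2^n$ when $\mathbf{w}_i=\mathbf 0$ and $0$ otherwise; if $v_i=1$ then $W_{v_ig_i}(\mathbf{w}_i)=(-1)^{a_i}W_f(\mathbf{w}_i)$. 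Hence a nonzero summand forces $W_f(\mathbf{v})\ne 0$ together with $\mathbf{w}_i=\mathbf 0$ for every $i$ with $v_i=0$ and $W_f(\mathbf{w}_i)\ne 0$ for every $i$ in $T:=\{i:v_i=1\}$, so that $wt(\mathbf{w})=\sum_{i\in T}wt(\mathbf{w}_i)$. If $f$ is $m$-resilient, $W_f$ is supported on weights $\ge m+1$, hence $|T|=wt(\mathbf{v})\ge m+1$ and $wt(\mathbf{w}_i)\ge m+1$ for each $i\in T$, whence $wt(\mathbf{w})\ge (m+1)^2$; thus $W_{\hat f}(\mathbf{w})=0$ whenever $wt(\mathbf{w})\le (m+1)^2-1$, i.e.\ $\hat f$ is $((m+1)^2-1)$-resilient. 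If instead $W_f(\mathbf{w})=0$ for all $wt(\mathbf{w})\ge m'$, then a surviving summand needs $|T|\le m'-1$ and $wt(\mathbf{w}_i)\le m'-1$ for $i\in T$, so $wt(\mathbf{w})\le (m'-1)^2$; equivalently $\pdeg(\hat f)\le (m'-1)^2$, which by the polynomial-degree/resiliency correspondence (Theorem~\ref{th:old}) is the same as saying $\hat f\oplus\lin_{n^2}$ is $(n^2-(m'-1)^2-1)$-resilient.

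The step I expect to need the most care is the combinatorial bookkeeping in the last paragraph: one must keep straight that a coordinate with $v_i=0$ contributes the \emph{constant} function rather than $g_i$, and it is exactly the vanishing $W_f(\mathbf 0)=0$ (balancedness of a resilient $f$) together with the fact that $W_{\mathbf 0}(\mathbf{w}_i)$ is nonzero only for $\mathbf{w}_i=\mathbf 0$ that pins $\mathbf{w}_i=\mathbf 0$ off $T$ and rules out the cross-terms that would otherwise leak Hamming weight. The remaining ingredients — the sign identity for $W_{g_i}$, the binomial expansion for the nonlinearity estimate, and the two weight counts — are mechanical.
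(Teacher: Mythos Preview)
Your approach is correct and is essentially the same as the paper's: both derive all three assertions from the Walsh--transform identities of Theorems~\ref{T:CascadedBF} and~\ref{T:CascadedBalacedBF} specialized to $s=u=n$ and $g_i=f\oplus a_i$. The paper's proof is only a two-line pointer to those theorems, whereas you spell out the case analysis (the $v_i=0$ versus $v_i=1$ split, the resulting weight bounds $|T|\,(m{+}1)\ge(m{+}1)^2$ and $|T|\,(m'{-}1)\le(m'{-}1)^2$, and the algebraic expansion of the nonlinearity estimate); your use of the general Theorem~\ref{T:CascadedBF} for the degree statement, rather than the balanced-case Theorem~\ref{T:CascadedBalacedBF}, is in fact cleaner since no balancedness is assumed there.
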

\begin{proof}
The lower bound on nonlinearity of $\hat{f}$ directly follows from Theorem~\ref{T:CascadedBalacedBF}.
By observing the value of $W_f({\bf v'})\prod_{i=1}^sW_{v_i'g_i}({\bf w}_i)$ in the proof of Theorem~\ref{T:CascadedBalacedBF}, then 
we can easily obtain the rest assertions. This completes the proof.
\end{proof}

\begin{corollary}
Let $f$ be an $n$-variables Boolean function and $f_i$ ($i\geq 2$) be the function defined above.
Then we have $NL(f^{i})\geq 2^{n^i-n^{i-1}}NL(f^{i-1})+2^{n^i-n}NL(f)-2^{n^i-n^{i-1}-n+1}NL(f^{i-1})NL(f)$ with $NL(f^1)=NL(f).$
\end{corollary}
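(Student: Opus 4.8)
The plan is to recognize $f^i$ as a special case of the cascaded construction \eqref{Eq:CascadedBF} and then simply specialize the nonlinearity bound of Theorem~\ref{T:CascadedBalacedBF}. Writing out one level of the recursion,
\[
f^i(x_1,\ldots,x_{n^i}) = f\big(f^{i-1}(x_1,\ldots,x_{n^{i-1}})\oplus a_1,\ \ldots,\ f^{i-1}(x_{n^i-n^{i-1}+1},\ldots,x_{n^i})\oplus a_n\big),
\]
which is exactly \eqref{Eq:CascadedBF} with outer function $f$ of arity $s=n$, inner functions $g_j = f^{i-1}\oplus a_j$ each of arity $k=n^{i-1}$, and total arity $ks = n^i$.

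First I would note that adding a constant does not change nonlinearity, so $NL(g_j)=NL(f^{i-1})$ for every $j$, and hence $\min_{1\le j\le s}\{NL(g_j)\}=NL(f^{i-1})$. The balancedness of the $g_j$ required to invoke Theorem~\ref{T:CascadedBalacedBF} is inherited from that of $f^{i-1}$, which holds throughout this section since the functions in question are resilient, hence balanced. Substituting $s=n$, $k=n^{i-1}$, $ks=n^i$ into the displayed inequality of Theorem~\ref{T:CascadedBalacedBF} gives
\[
NL(f^i)\ \ge\ 2^{\,n^i-1}-2^{\,n^i-n^{i-1}-n-1}\big(2^{n}-2NL(f)\big)\big(2^{\,n^{i-1}}-2NL(f^{i-1})\big).
\]

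It then remains to expand the right-hand side. Multiplying out the two binomials yields four terms; distributing the prefactor $2^{\,n^i-n^{i-1}-n-1}$ turns them into $2^{\,n^i-1}$, $2^{\,n^i-n^{i-1}}NL(f^{i-1})$, $2^{\,n^i-n}NL(f)$ and $2^{\,n^i-n^{i-1}-n+1}NL(f)NL(f^{i-1})$ respectively. The first of these cancels the leading $2^{\,n^i-1}$, leaving precisely
\[
NL(f^i)\ \ge\ 2^{\,n^i-n^{i-1}}NL(f^{i-1})+2^{\,n^i-n}NL(f)-2^{\,n^i-n^{i-1}-n+1}NL(f)NL(f^{i-1}),
\]
which is the claimed recursion; the base case $NL(f^1)=NL(f)$ is immediate from $f^1=f$, and one can check that taking $i=2$ recovers the bound in Lemma~\ref{L:ff}. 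The only point requiring any care is the bookkeeping of exponents in this expansion, together with the (conceptually minor) observation that balancedness propagates up the recursion so that Theorem~\ref{T:CascadedBalacedBF} applies at each level; everything else is a direct specialization of results already established.
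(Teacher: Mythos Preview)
Your proposal is correct and is exactly the argument the paper implicitly intends: the corollary is stated without proof immediately after Lemma~\ref{L:ff}, and the intended derivation is precisely to feed $s=n$, $k=n^{i-1}$, $g_j=f^{i-1}\oplus a_j$ into the nonlinearity bound of Theorem~\ref{T:CascadedBalacedBF} and expand. Your bookkeeping of exponents is right, and your remark that balancedness must (and does) propagate through the recursion so that Theorem~\ref{T:CascadedBalacedBF} remains applicable at every level is a point the paper leaves tacit.
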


However, we can also have an algebraic degree-resiliency trade off in this construction if we use two kinds of 
base functions on $d$ variables, one with algebraic degree (and thus polynomial degree) of $d$ and the other having 
lower polynomial degree, and we can add these functions in different stages of the recursion to obtain functions 
with various resiliency and algebraic degree values. Thus we have two problems that we propose for further attention as they are beyond the scope of our current discussion:
\begin{enumerate}

\item How much can we increase the nonlinearity in  $f^u$ when the functions used in recursion are of the form 
$f$ or $\overline{f}$, where $f$ is the base function defined on some $d$ variables?

\item How good cryptographic profiles can we obtain using this method and using multiple functions on $d$ variables 
throughout the recursion process?
\end{enumerate}
 
\begin{remark} 
Thus, in summary, we start with a $d$ variable good resilient function $g$, take its dual $f$ 
and then recursively amplify to obtain $f^u$ on $d^u$ variable. Then we obtain $g^u=f^u \op \lin_{d^u}$
which is the final highly recursive function.

We can also have an algebraic degree-resiliency tradeoff. Algebraic degree is the maximum size of the monomials in the ANF of a function (the maximum number of variables in a product term). Higher algebraic degree is needed to protect a system against algebraic degree attack, where as high resiliency protects the system against correlation attack, two of the most powerful attacks in cryptanalysis of ciphers. 
In this construction we can also control whether the resultant function should be fully
sensitive (using Theorem~\ref{th:ranew}) or should it have lower sensitivity value and this does not affect the possible resiliency and algebraic degree values. 
\end{remark}

\subsection{Superlinear Separation Between $n$ and $\pdeg(f)$ For Constant Order Sensitivity \label{app:const-amp}}

First we observe how high can sensitivity order be for a function with less than $n$ polynomial degree. It is easy to see that if a function is $n$-th order sensitive then its real polynomial degree is also $n$. This is because a function can be 
$n$-th order sensitive if for a point $\var{a} \in \F^n$, $f(\var{a})=0$ and $f(\xv)=1,~\forall\xv \in \F^n \setminus \{\var{a}\}$. 
Then such a function will have an odd number of ones in its truth table and thus will have $\pdeg(f)=n$.

The first separation between $n$ and polynomial degree can be found for $(n-1)$-th order sensitivity when $n$ is odd, and for $(n-2)$-th order, otherwise. 
\begin{lemma}
\label{lem:first}	 
The maximum value of $k$ such that there exists a $k$-th order sensitive function with polynomial degree less than $n$ is $n-1$, where $n$ is odd, and $n-2$, otherwise. 
\end{lemma}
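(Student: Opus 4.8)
My plan is to split the argument into an upper bound (no $k$-th order sensitive function with $\pdeg(f)<n$ exists for $k$ too large) and a matching construction. For the upper bound, I would argue exactly as in the paragraph immediately preceding the statement: if $f$ is $k$-th order sensitive at a point $\xv$, then $f$ takes one value at $\xv$ and the opposite value at all the $\binom{n}{1}+\binom{n}{2}+\cdots+\binom{n}{k}$ points within Hamming distance $k$ of $\xv$. When $k=n$, all $2^n$ points are forced, so the truth table has exactly one entry of the minority value, hence an odd weight, forcing $\pdeg(f)=n$. When $k=n-1$, the only point not forced to the ``flip'' value is the antipodal point $\bar\xv$; its value is free, so the weight is either $1$ or $2$. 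If the weight can be made even ($2$), there is room for $\pdeg(f)<n$. When $k=n-2$ with $n$ even, the free points are $\bar\xv$ together with its $n$ neighbours, i.e.\ a Hamming ball of radius $1$ around $\bar\xv$; counting, the forced part has weight $\sum_{j=1}^{n-2}\binom{n}{j} = 2^n - 2 - n$, and one checks whether the free $n+1$ bits can be chosen to keep the total weight even. The key parity bookkeeping is: $\sum_{j=1}^{k}\binom{n}{j}$ is odd precisely when $k=n$ (weight $2^n-1$) and, for $k=n-1$, the forced weight is $2^n-2$, which is even, so an even total weight is attainable; hence the boundary between ``forced to have odd weight'' and ``can have even weight'' lands at $k=n-1$ when $n$ is odd, but one must be careful because parity of the weight alone controls whether $\pdeg(f)=n$, and the actual value of $\pdeg$ below $n$ still needs a witness.

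For the lower bound (existence), I would exhibit explicit $(n-1)$-th order sensitive functions with $\pdeg(f)<n$ when $n$ is odd and $(n-2)$-th order sensitive ones with $\pdeg(f)<n$ when $n$ is even. The natural candidates: take $\xv=\bm 0_n$, set $f(\bm 0_n)=0$, $f=1$ on the radius-$(n-1)$ ball, and choose $f(\bm 1_n)$ to make the overall weight even; concretely this is (up to complementation/translation) a function that is $1$ everywhere except on $\{\bm 0_n, \bm 1_n\}$, whose truth table weight is $2^n-2$, and whose real polynomial is $1 - \pr_{\bm 0_n}(\xv) - \pr_{\bm 1_n}(\xv) = 1 - \prod_i(1-x_i) - \prod_i x_i$. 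Expanding, the two top-degree terms are $-(-1)^n\prod x_i$ from the first product and $-\prod x_i$ from the second; for $n$ odd these are $+\prod x_i$ and $-\prod x_i$, which cancel, giving $\pdeg(f)\le n-1$. This simultaneously verifies sensitivity order $n-1$ (flipping any $1\le i\le n-1$ bits of $\bm 0_n$ lands in the ball where $f=1$) and the degree drop. For $n$ even the degree-$n$ terms do not cancel, so I would instead free up one more neighbourhood: take $f=1$ except on $\{\bm 0_n\}\cup\{\bm 1_n\}\cup\{\bm 1_n^1\}$ or, more symmetrically, use the $(n-2)$-th order sensitive construction at $\bm 0_n$ and place the free mass on the radius-$1$ ball around $\bm 1_n$ chosen so that the top monomial coefficient vanishes; the parity argument above guarantees an admissible even-weight choice exists, and a short Fourier/ANF computation confirms $\pdeg<n$.

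The main obstacle I anticipate is the $n$-even, $k=n-2$ direction: showing that among the $2^{n+1}$ ways of assigning values to the free Hamming ball of radius $1$ around $\bar\xv$, at least one yields a function whose real polynomial has degree strictly below $n$ \emph{while preserving} $(n-2)$-th order sensitivity at $\xv$ (flipping any $\le n-2$ bits must still flip the output, which constrains which of the ``free'' points can actually be toggled — only $\bar\xv$ and its neighbours are genuinely free, and toggling a neighbour $\bar\xv^i$ is a point at distance $n-1$ from $\xv$, so it does not interfere with $(n-2)$-th order sensitivity). So the sensitivity constraint and the degree constraint are compatible, and the remaining work is the explicit coefficient calculation showing the degree-$n$ Fourier coefficient $W_f(\bm 1_n)$ can be zeroed; since $W_f(\bm 1_n) = \sum_{\xv}(-1)^{f(\xv)}(-1)^{\mathbf 1\cdot\xv}$ changes by $\pm 2(-1)^{wt(\xv)}$ each time we toggle one value, and we have an odd number of togglable points of each parity available around $\bar\xv$, the target value $0$ is reachable. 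I would also remark (mirroring the discussion after Theorem~\ref{th:old}) that by duality this is equivalent to the statement about the highest resiliency order of an $[n,k,m]$-function with $m\ge 1$, which may give a cleaner phrasing of both halves.
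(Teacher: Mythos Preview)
Your existence constructions are essentially right: for $n$ odd, the function equal to $1$ everywhere except at $\bm{0}_n$ and $\bm{1}_n$ is $(n-1)$-th order sensitive at $\bm{0}_n$ and its top monomial cancels exactly as you compute; for $n$ even your plan to zero $W_f(\bm{1}_n)$ by toggling values in the radius-$1$ ball around $\bm{1}_n$ will also go through once written out. The genuine gap is the \emph{upper bound} for even $n$, where you must show that \emph{no} $(n-1)$-th order sensitive $f$ has $\pdeg(f)<n$. You write that ``parity of the weight alone controls whether $\pdeg(f)=n$'', but only one implication holds: odd weight forces $\deg_{\F}(f)=n$ and hence $\pdeg(f)=n$, while even weight does \emph{not} force $\pdeg(f)<n$ (e.g.\ $x_1\oplus x_2$ has weight $2$ and real degree $2$). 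For $k=n-1$ the single free value $f(\bar\xv)$ can always be set to make the weight even, so your parity bookkeeping rules out nothing beyond $k=n$, and you have given no separate argument for the even-$n$ case. The missing step is to compute the actual top real coefficient (equivalently $W_f(\bm{1}_n)$, which you do invoke later): with $\xv=\bm{0}_n$ one gets
\[
\sum_{j=1}^{n-1}(-1)^{n-j}\binom{n}{j}+f(\bm{1}_n)=(-1)^n\bigl(-1-(-1)^n\bigr)+f(\bm{1}_n),
\]
which for even $n$ equals $-2+f(\bm{1}_n)\in\{-2,-1\}$, never zero.

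The paper sidesteps this coefficient computation by passing to the dual: $\pdeg(f)\le n-1$ is exactly $W_f(\bm{1}_n)=0$, i.e.\ the dual $g=f\oplus\lin_n$ is balanced, and $(n-1)$-th order dual sensitivity at some $\yv$ with $g(\yv)=0$ forces $g=0$ on $\yv$ and on every point at odd Hamming distance $\le n-1$. For even $n$ that is $1+\binom{n}{1}+\binom{n}{3}+\cdots+\binom{n}{n-1}=2^{n-1}+1$ zeros, already too many for balancedness regardless of the one free value; for odd $n$ the same count gives exactly $2^{n-1}$, so balancedness is achievable by setting the free value. One count thus settles both directions simultaneously, which is cleaner than tracking real coefficients.
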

\begin{proof}
We prove this result via the resiliency idea. 
The dual of an $(n,k,n-1)$-function is an $[n,k,0]$-function, which is a balanced $k$-th order dual sensitive function.

Let $g$ be an $(n-1)$-th order dual sensitive function with respect to a point $\yv\in \F^n$. Without loss of generality we   assume   that $f(\yv)=0$.
Then, corresponding to all the points that can be obtained by flipping an odd number of bits, the output is $0$, and $1$, otherwise, where the maximum number bits that can be flipped is $n-1$.
That is, $g(\overline{\yv})$ can be both zero or one, which does not affect the sensitivity order, and the output corresponding to all other $2^n-1$ points is fixed. 

{\bf Case $n$  odd:}
The minimum number of points $\xv$ for which 
$f(\xv)$ must be zero is 
\begin{equation*}
1+ \binom{n}{1}+ \binom{n}{3}+ \cdots +\binom{n}{n-2}=2^{n-1}.
\end{equation*}

We can then fix $f(\yv)=1$ and the other $2^{n-1}-1$ points have output $1$, by definition,
which gives us an $(n-1)$-th order dual sensitive $0$-resilient function whose dual $f= g \op \lin_n$ is then an $(n,n-1,n-1)$-function.

{\bf Case $n$ even:}
Here, the minimum number of points $\xv$ for which $f(\xv)=0$ is
\begin{equation*}
1+ \binom{n}{1}+ \binom{n}{3}+ \cdots +\binom{n}{n-1}
=2^{n-1}+1
\end{equation*}
and thus the corresponding function cannot be balanced. 

On the other hand if we try to form an $(n-2)$-th order dual sensitive function, then 
the restriction reduces by $\binom{n}{n-1}$ and becomes $2^{n-1}-n+1$.
Meanwhile, the restriction on the minimum number of points with $f(\xv)=1$ remains the same and then the remaining $n-1$ points can be fixed accordingly to make the function balanced, which gives us an $[n,n-2,0]$-function whose dual is an $(n,n-2,n-1)$-function.
\end{proof} 

Then our result of super-linear separation between $n$ and $\pdeg()$ follows from our 
modified recursive amplification construction. 

\begin{theorem}
Given any constant $k$ there exists a $k$-th order sensitive function $f$ on $n$ variables such that 
$\pdeg(f)= n^{\frac{\log k}{\log k+1}}$, if $k$ is even and 
$\pdeg(f)= n^{\frac{\log k+1}{\log k+2}}$, if $k$ is odd.
\end{theorem}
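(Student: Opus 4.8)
The plan is to obtain the statement as an immediate consequence of two ingredients already in hand: the existence result of Lemma~\ref{lem:first}, which produces near-extremal base functions, and the modified recursive amplification of Theorem~\ref{th:ranew}, which both preserves the order of sensitivity and multiplies the polynomial degree. The only real work is choosing, for a given constant $k$, the smallest arity $d$ on which a $k$-th order sensitive function with $\pdeg<d$ is guaranteed to exist, and then reading off the exponent after $u$-fold amplification.

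First I would handle even $k$. Here $d=k+1$ is odd, so the odd case of Lemma~\ref{lem:first} yields a $(d,d-1,d-1)$-function, i.e. a $(k+1)$-variable function $f$ that is $k$-th order sensitive with $\pdeg(f)=k<k+1$. Applying Theorem~\ref{th:ranew} to $f$ gives $f^u$ on $n=(k+1)^u$ variables, still $k$-th order sensitive (witnessed at $\yv^u$), with $\pdeg(f^u)=k^u$. Substituting $u=\frac{\log n}{\log(k+1)}$ yields $\pdeg(f^u)=k^{\log n/\log(k+1)}=n^{\log k/\log(k+1)}$, which is the claimed bound.

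Next I would treat odd $k$. Now $d=k+1$ is even, and the even case of Lemma~\ref{lem:first} only guarantees sensitivity order up to $d-2=k-1$ with polynomial degree below $d$, so this arity is insufficient. I would therefore pay one extra variable and take $d=k+2$, which is odd: Lemma~\ref{lem:first} then supplies a $(k+2,k+1,k+1)$-function, which is in particular $k$-th order sensitive (since $(k+1)$-th order sensitivity trivially implies $k$-th order sensitivity) and has $\pdeg=k+1<k+2$. Amplifying via Theorem~\ref{th:ranew} produces $f^u$ on $n=(k+2)^u$ variables, $k$-th order sensitive, with $\pdeg(f^u)=(k+1)^u=n^{\log(k+1)/\log(k+2)}$, as required.

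I do not expect any genuine obstacle: the substance is entirely contained in Lemma~\ref{lem:first} and Theorem~\ref{th:ranew}. The one point demanding a moment's care is the parity bookkeeping in the odd-$k$ case, where the natural guess $d=k+1$ fails and one must instead use $d=k+2$. As is standard for recursive amplification, the family of functions produced exists only for $n$ of the form $d^u$, so the statement should be read as asserting the stated polynomial-degree bound along that sequence of arities.
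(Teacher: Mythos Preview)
Your proposal is correct and follows precisely the paper's strategy: produce a base function on $d$ variables with $k$-th order sensitivity and $\pdeg<d$ via Lemma~\ref{lem:first}, then amplify with Theorem~\ref{th:ranew} and read off the exponent. Your parity bookkeeping ($d=k+1$ for even $k$, $d=k+2$ for odd $k$) is the assignment that is consistent with Lemma~\ref{lem:first} and yields the exponents exactly as stated in the theorem; the paper's own write-up phrases the base construction through the dual $[d,k,0]$-function but uses the swapped assignment ($d=k+2$ for even $k$, $d=k+1$ for odd $k$), which appears to be a slip there rather than a flaw in your argument.
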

\begin{proof}

Given any $k$, we form a function $f$ in the following manner:
\begin{itemize}
\item If $k$ is even we form a $(k+2)$-variable balanced $k$-th order dual sensitive
function $g$ which is a $[k+2,k,0]$-function as per Lemma~\ref{lem:first}.
Then we take its dual $f= g \op \lin_n$, which is a $(k+2,k,k+1)$-function.

\item If $k$ is odd we form a $(k+1)$-variable balanced $k$-th order dual sensitive 
function $g$ which is a $[k+1,k,0]$-function.
Then we take its dual $f= g \op \lin_n$, which is a $(k+1,k,k)$-function,
via Lemma~\ref{lem:first}.
\end{itemize}

Next we use the  recursive amplification process described in Theorem~\ref{th:ranew}.
This gives a $\left({(k+2)}^u,k,{(k+1)}^u\right)$-function, when $k$ is even and a $\left({(k+1)}^u,k,{k}^u\right)$-function, 
otherwise, which gives us the desired super-linear advantage.
\end{proof}

\section{Results in Section~\ref{sec:4} \label{app:sec:4} }

\subsection{Sufficient Conditions for $s(f)=n$ \label{app:sec:4:1}}

\begin{lemma}
\label{lemma:1}
Let us denote by $\M_n$ the set of MM type functions with the following 
conditions. 
\begin{enumerate}

\item 
$\phi(\bm{1}_{n_1})=\bm{1}_{n_2}$ 
and 
$g(\bm{1}_{n_1})=0$.

\item 
$wt \left( \phi(\bm{1}_{n_1}^i) \right) \equiv 1 \bmod 2$ 
and 
$g(\bm{1}_{n_1}^i) \equiv n_2  \bmod 2$.
\end{enumerate}
Then for any function $f \in \M_n$ we have $s(f)=n$.
\end{lemma}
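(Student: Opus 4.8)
The plan is to exhibit a single input point at which $f$ is fully sensitive, and the natural candidate is $\xv = \bm{1}_{n_1}$, $\yv = \bm{0}_{n_2}$. First I would evaluate $f$ at this point: since $\phi(\bm{1}_{n_1}) = \bm{1}_{n_2}$ and $g(\bm{1}_{n_1}) = 0$, we get $f(\bm{1}_{n_1}, \bm{0}_{n_2}) = \bm{1}_{n_2} \cdot \bm{0}_{n_2} \oplus 0 = 0$. Then I would show that flipping any one of the $n = n_1 + n_2$ coordinates changes the output to $1$. There are two cases according to whether the flipped bit lies in the $\yv$-block or the $\xv$-block, and each is a direct computation.

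For the $\yv$-block, flipping the $j$-th coordinate of $\yv$ (for $1 \le j \le n_2$) keeps $\xv = \bm{1}_{n_1}$, so $\phi(\xv) = \bm{1}_{n_2}$ and $g(\xv) = 0$ are unchanged; the value becomes $\bm{1}_{n_2} \cdot (\bm{0}_{n_2})^j \oplus 0 = wt\big((\bm{0}_{n_2})^j\big) \bmod 2 = 1$. So all $n_2$ of these flips are sensitive. For the $\xv$-block, flipping the $i$-th coordinate of $\xv$ gives $\xv = \bm{1}_{n_1}^i$ and $\yv = \bm{0}_{n_2}$, so the value is $\phi(\bm{1}_{n_1}^i) \cdot \bm{0}_{n_2} \oplus g(\bm{1}_{n_1}^i) = 0 \oplus g(\bm{1}_{n_1}^i)$. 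Here the linear-part term vanishes because $\yv$ is all zero, so we only need $g(\bm{1}_{n_1}^i) = 1$.

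This is where I would use condition~2 more carefully: it states $g(\bm{1}_{n_1}^i) \equiv n_2 \bmod 2$, which only yields $g(\bm{1}_{n_1}^i) = 1$ when $n_2$ is odd. I expect the resolution is that the relevant evaluation is not at $\yv = \bm{0}_{n_2}$ but at the point where the linear contribution $\phi(\bm{1}_{n_1}^i)\cdot\yv$ combines with $g(\bm{1}_{n_1}^i)$ to give $1$; the natural candidate for the full-sensitivity point should instead be $\xv = \bm{1}_{n_1}$, $\yv = \bm{1}_{n_2}$ (or whichever choice makes $f = 0$ there and forces $\phi(\bm{1}_{n_1}^i)\cdot\bm{1}_{n_2} = wt(\phi(\bm{1}_{n_1}^i)) \equiv 1$, matching condition~2's parity requirement on $wt(\phi(\bm{1}_{n_1}^i))$). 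I would therefore first recheck which of $\bm{0}_{n_2}$ or $\bm{1}_{n_2}$ is the correct $\yv$-coordinate of the sensitive point by demanding consistency of all three conditions, then redo the two case computations above with that point. The main obstacle is exactly this bookkeeping: pinning down the sensitive point so that the parity conditions in item~2 (on $wt(\phi(\bm{1}_{n_1}^i))$ and on $g(\bm{1}_{n_1}^i) \bmod n_2$) line up to force the output to flip for every $\xv$-block coordinate; once the point is correctly identified, both cases are one-line parity computations using Note~\ref{note:2} and the definition of the MM function.
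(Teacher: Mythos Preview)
Your approach is correct and matches the paper's: the sensitive point is indeed $(\bm{1}_{n_1},\bm{1}_{n_2})$ (your second guess), and the two cases are exactly the one-line parity checks you describe, using $\phi(\bm{1}_{n_1}^i)\cdot\bm{1}_{n_2}\equiv wt(\phi(\bm{1}_{n_1}^i))\equiv 1\pmod 2$ together with $g(\bm{1}_{n_1}^i)\equiv n_2\pmod 2$ for the $\xv$-block. Your initial guess $\yv=\bm{0}_{n_2}$ fails for precisely the reason you noted; once you commit to $\yv=\bm{1}_{n_2}$ the proof is complete and identical to the paper's.
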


\begin{proof}
It suffices to show that $s(f,\bm{1}_n)=n$. 
By definition we have $\phi(\bm{1}_{n_1})=\bm{1}_{n_2}$.
Thus for any input of the form $\bm{1}_{n_1}||\yv$ 
we have $f(\bm{1}_{n_1}||\yv)=   \bigoplus_{i=1}^{n_2} y_i$.
Then if 
$y_i=1$, for all $i$, we get $f(\bm{1}_{n_1}||\bm{1}_{n_2})= n_2 \bmod 2$. 
Corresponding to this point, if any one of the components in $\yv$ is flipped, 
then the function evaluates to $n_2+1 \bmod 2$.

If any of the points (say $x_i: 1 \leq i \leq n_1$) in $\xv$ is flipped then the output of the function is 
$f(\bm{1}_{n_1}^i||\bm{1}_{n_2})$. Since the number of variables of the linear functions is odd and $g(\bm{1}_{n_1}^i) \equiv n_2  \bmod 2$ 
we have $f(\bm{1}_{n_1}^i||\bm{1}_{n_2})=n_2 +1 \bmod 2$.
\end{proof}

\subsection{An MM function with $s(f)=n$ and $\pdeg(f) \leq n-1$ \label{app:sec:4:2} }

Let us first denote some notations that we shall use in the proof. 
We denote by  $\bf{X_i^j}$ 
the monomial $\left(   \prod_{k=i}^j x_k \right)$
and by $\bf{X^i}$ the monomial   $\left(   \prod_{k=1}^i x_k \right)$.
 We define the polynomial ${\bf L_{n_2}} =   \prod_{i=1}^{n_2} (1-2y_i).$
Then we can write $\lin_{{\bf 1_{n_2}},0}=\frac{1-\lf}{2}$ 
and $\lin_{{\bf 1_{n_2}},1}=\frac{1+\lf}{2}$.

\begin{theorem}
\label{th:1}
Let $f \in \M_n$ be an MM function defined on $n=n_1+n_2$ variables 
so that $n_1 \leq n_2 \leq n_1+1$  with $n_2 \equiv 0 \bmod 2$.
If $f$ is defined using $\phi(\bm{1}_{n_1-2}00)  = \bm{1}_{n_2}$, $g(\xv)= \prod_{i=1}^{n_1-2}x_i(1-x_{n-1})(1-{x_n})$, 
the sensitivity of $f$ is $n$ and the polynomial degree is at most $n-1$.
\end{theorem}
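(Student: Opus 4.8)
The plan is to verify the two claimed properties of the function $f\in\M_n$ defined by $\phi(\bm{1}_{n_1-2}00)=\bm{1}_{n_2}$ and $g(\xv)=\bm{X^{n_1-2}}(1-x_{n_1-1})(1-x_{n_1})$ separately. For sensitivity, the point is that $f$ already belongs to $\M_n$, so by Lemma~\ref{lemma:1} (the $\M_n$ sufficient conditions) we immediately get $s(f)=n$ — provided we first check that this particular choice of $\phi$ and $g$ genuinely satisfies conditions (1) and (2) of that lemma. So the first step is a bookkeeping check: evaluate $\phi(\bm{1}_{n_1})$ and $g(\bm{1}_{n_1})$, and evaluate $wt(\phi(\bm{1}_{n_1}^i))$ and $g(\bm{1}_{n_1}^i)$ for each $i$, confirming they have the required parities (here $n_2\equiv 0\bmod 2$ is used so that $g(\bm{1}_{n_1}^i)\equiv n_2\bmod 2$ reduces to $g(\bm{1}_{n_1}^i)=0$, which holds because flipping any single bit of $\bm{1}_{n_1}$ kills the monomial defining $g$ except possibly at the last two coordinates — and there we must be slightly careful and check the weight/parity conditions directly). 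Once that is in place, $s(f)=n$ is free.

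The substantive part is the polynomial degree bound. The plan is to use the real polynomial representation from Proposition~\ref{prop:1}:
\begin{equation*}
p(\xv,\yv)=\sum_{\var{a}\in\F^{n_1}}\pr_{\var{a}}(\xv)\,\lin_{(\phi(\var{a}),g(\var{a}))}(\yv).
\end{equation*}
Each $\pr_{\var{a}}(\xv)$ has degree exactly $n_1$, and $\lin_{(\phi(\var{a}),g(\var{a}))}(\yv)$ has degree $wt(\phi(\var{a}))\le n_2$ by Note~\ref{note:2}, so generically $\pdeg(f)\le n_1+n_2=n$; the whole game is to show the degree-$n$ part cancels. The key observation is that only terms with $\phi(\var{a})=\bm{1}_{n_2}$ (equivalently $wt(\phi(\var a))=n_2$) can contribute a monomial of degree $n$, since a monomial of full degree $n$ must use all $n_1$ of the $x$-variables (forcing the $\pr_{\var a}$ factor to be full degree, automatic) and all $n_2$ of the $y$-variables (forcing $wt(\phi(\var a))=n_2$). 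Using the notation $\lf=\prod_{i=1}^{n_2}(1-2y_i)$, we have $\lin_{\bm{1}_{n_2},0}=\frac{1-\lf}{2}$ and $\lin_{\bm{1}_{n_2},1}=\frac{1+\lf}{2}$, and only the $\pm\frac12\lf$ piece carries the degree-$n_2$ monomial $(-2)^{n_2}y_1\cdots y_{n_2}$. Thus the degree-$n$ part of $p$ equals $(-2)^{n_2}y_1\cdots y_{n_2}$ times
\begin{equation*}
\Big(\text{leading monomial }x_1\cdots x_{n_1}\text{ coefficient of }\tfrac12\sum_{\var a:\phi(\var a)=\bm 1_{n_2}}(-1)^{g(\var a)}\pr_{\var a}(\xv)\Big),
\end{equation*}
and one computes that the $x_1\cdots x_{n_1}$-coefficient of $\pr_{\var a}$ is $(-1)^{n_1-wt(\var a)}$, so the degree-$n$ coefficient is proportional to $\sum_{\var a:\phi(\var a)=\bm 1_{n_2}}(-1)^{g(\var a)+n_1-wt(\var a)}$. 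The construction is engineered so that this signed sum vanishes: because $\phi$ is chosen with $\phi(\bm 1_{n_1-2}00)=\bm 1_{n_2}$ — i.e.\ the set $\{\var a:\phi(\var a)=\bm 1_{n_2}\}$ and the values of $g$ on it are rigged to pair up into canceling pairs. I would make this precise by identifying exactly which $\var a$ map to $\bm 1_{n_2}$ under the (implicitly specified) $\phi$ and showing the $(-1)^{g(\var a)-wt(\var a)}$ values cancel in pairs.

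The main obstacle is that $\phi$ is only partially specified in the statement (only $\phi(\bm 1_{n_1-2}00)$ and the $\M_n$-forced values are pinned down), so the proof must either (a) show the degree-$n$ cancellation holds for \emph{every} completion of $\phi$ consistent with membership in $\M_n$ and the stated constraint, or (b) appeal to the combinatorial machinery of Proposition~\ref{prop:2} and Equation~\eqref{eq:4}, reading the construction as: start from a simple base MM function whose polynomial manifestly has degree $<n$ in the relevant monomial, then add correction terms of the form $\big(\prod x_i\big)\big(\prod(1-x_j)\big)\lin_{(\var b,c)}(\yv)$, each of which, by \eqref{eq:4}, expands into $\pr$-terms with \emph{no} full-degree-$n$ contribution because the $y$-degree is controlled. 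I expect route (b) to be the intended one and the cleanest: the whole point of the "interpreting real polynomial terms" subsection is that such additions reshape $\phi,g$ on a subcube without introducing the top-degree monomial. So concretely I would: (i) check $f\in\M_n$ hence $s(f)=n$; (ii) exhibit $f$ as a base MM function plus one correction term $g(\xv)\lin_{(\var b,c)}(\yv)$-type modification via Proposition~\ref{prop:2}; (iii) bound $\pdeg$ of the base function's polynomial, noting its only potential degree-$n$ term comes from the $\lin_{\bm 1_{n_2},\cdot}$ pieces and is killed by the $\phi(\bm 1_{n_1-2}00)=\bm 1_{n_2}$ choice together with the parity of $n_2$; and (iv) conclude $\pdeg(f)\le n-1$. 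The delicate check is (iii) — tracking that no residual $x_1\cdots x_{n_1}y_1\cdots y_{n_2}$ term survives — and that is where the hypotheses $n_1\le n_2\le n_1+1$ and $n_2$ even are actually consumed.
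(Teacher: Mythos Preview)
Your core idea is right and is essentially what the paper does: isolate the summands of $p(\xv,\yv)=\sum_{\var a}\pr_{\var a}(\xv)\lin_{(\phi(\var a),g(\var a))}(\yv)$ that can produce a degree-$n$ monomial (exactly those with $\phi(\var a)=\bm 1_{n_2}$), and show they cancel. Your signed-sum computation $(-1)^{g(\var a)+n_1-wt(\var a)}$ is a valid way to see this.

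Where you diverge from the paper is in how you resolve the ``$\phi$ is only partially specified'' issue. The paper does \emph{not} use Proposition~\ref{prop:2} or Equation~\eqref{eq:4} here; that machinery is introduced \emph{after} this theorem, for the $n-\Theta(\log n)$ result. The paper simply reads the hypothesis as implicitly requiring $\phi(\var a)\ne\bm 1_{n_2}$ for $\var a\notin\{\bm 1_{n_1},\bm 1_{n_1-2}00\}$ (the post-proof commentary confirms this reading), so only those two points contribute, and then does a bare-hands expansion: write $\pr_{\bm 1_{n_1-2}00}(\xv)={\bf X^{n_1-2}}(1-x_{n_1-1})(1-x_{n_1})$, expand $(1-x_{n_1-1})(1-x_{n_1})$, and note that the $x_{n_1-1}x_{n_1}$ piece produces ${\bf X^{n_1}}\lin_{(\bm 1_{n_2},1)}$, which combines with the ${\bf X^{n_1}}\lin_{(\bm 1_{n_2},0)}$ term from $\var a=\bm 1_{n_1}$ to give, via Note~\ref{note:2}, just ${\bf X^{n_1}}$ of degree $n_1$. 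Everything left has degree $\le n-1$. This is shorter than your route~(b), though yours would also succeed.

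Two minor corrections. First, your route~(a) would fail outright: if one allowed $\phi(\var a)=\bm 1_{n_2}$ at a third point, the signed sum would generically be nonzero, so the claim is not true for every completion of $\phi$. Second, the hypothesis $n_2\equiv 0\bmod 2$ is consumed in the \emph{sensitivity} check (it forces $g(\bm 1_{n_1})=0$ to match the $\M_n$ condition), not in the degree bound; and $n_1\le n_2\le n_1+1$ plays no role in the degree argument at all---it is just a normalization so that $n_1\approx n/2$.
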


\begin{proof}
First we have $s(f)=n$, which follows directly  from the constraints described and the subsequent proof of Lemma~\ref{lemma:1}.
As $n_2$ is even, $g(\bm{1}_{n_1}||\bm{1}_{n_1})$ should be $0$, 
which it is, as the only point in $\xv$ for which $g(\xv||\yv)=1$
is $\bm{1}_{n_1-2}00$.
\\

We now show that $\pdeg(f)=n-1$.
Apart from the constraints given, we have two more constraints as $f \in \M_n$, namely, $wt(\phi(\bm{1}_{n_1}^i)) \equiv 1 \mod 2$, $\phi(\bm{1}_{n_1}) = \bm{1}_{n_2}$.
We have defined the real polynomial $p(\xv,\yv)$ that represents $f$ in Proposition~\ref{prop:1}.
The degree of the polynomial is   $\leq \max(wt(\phi))+n_1$.
We know that  
$p(\xv,\yv)=    \sum_{a \in \{0,1\}^{n_1}} \pr_{a}(\xv) \left(\lin_{(\phi(\var{a}),g(\var{a}))} (\yv) \right)$
has monomials of degree $n$ only in the terms $\lin_{b,c}$ where $wt(b)=n_2$, which is the case
only when $b=\bm{1}_{n_1-2}00$ or $\bm{1}_{n_1}$.
Additionally $g(\bm{1}_{n_1-2}00)=1$ and $g(\bm{1}_{n_1})=0$.
Therefore, we can write the polynomial as 
\begin{align*}
& p(\xv,\yv)=p'(\xv,\yv) + 
\pr_{(\bm{1}_{n_1-2}00)}(\xv) \lin_{\phi(\bm{1}_{n_1-2}00),g(\bm{1}_{n_1-2}00)}(\yv)
+
\pr_{(\bm{1}_{n_1})}(\xv) \lin_{\phi(\bm{1}_{n_1}),g(\bm{1}_{n_1})}(\yv) \\
& \text{ where }
p'(\xv,\yv)=
 \sum_{\var{a} \in \F^{n_1} \setminus~ \{ \bm{1}_{n_1},\bm{1}_{n_1-2}00 \} } 
\pr_{\var{a}}(\xv) 
\left(\lin_{(\phi(\var{a}),g(\var{a}))} (\yv) \right).
\end{align*}
Thus in this case $\pdeg(p')=n-1$. Now, expanding $p(\xv,\yv)$ we get
\begin{align*}
& p(\xv,\yv)=
\\&
p'(\xv,\yv) + 
\left(\prod\limits_{i=1}^{n_1-2} \xv_i \right)(1-x_{n_1-1})(1-x_{n_1}) 
\lin_{(\bm{1}_{n_2},1)}(\yv)
+
\left(\prod_{i=1}^{n_1}x_i \right)
\lin_{(\bm{1}_{n_2},0)}(\yv)
\\ & =
p'(\xv,\yv)+ {\bf X^{n_1-2}} (1-x_{n_1-1}-x_{n_1} + x_{n_1-1}x_{n_1}) 
\lin_{(\bm{1}_{n_2},1)}(\yv)
+
{\bf X^{n_1}}\lin_{(\bm{1}_{n_2},0)}(\yv)
\\ & =
p'(\xv,\yv)+
\left({\bf X_1^{n_1-2}} - {\bf X_1^{n_1-2}}x_{n_1-1} - {\bf X_1^{n_1-2}}x_{n_1} \right) \lin_{(\bm{1}_{n_2},1)}(\yv)
\\ & 
+ 
{\bf X^{n_1}} 
\left(
\lin_{(\bm{1}_{n_2},0)}(\yv)
+
\lin_{(\bm{1}_{n_2},1)}(\yv)
\right)
\\ & =
p'(\xv,\yv)+
\left({\bf X_1^{n_1-2}} - {\bf X_1^{n_1-2}}x_{n_1-1} - {\bf X_1^{n_1-2}}x_{n_1} \right) \lin_{(\bm{1}_{n_2},1)}(\yv)
+ 
{\bf X^{n_1}}.
\end{align*}
In this case, the polynomial degree of the terms are:
\begin{itemize}

\item $p'(\xv,\yv)$ has a degree of at most $n-1$;

\item 
$\left({\bf X_1^{n_1-2}} - {\bf X_1^{n_1-2}}x_{n_1-1} - {\bf X_1^{n_1-2}}x_{n_1} \right) \lin_{(\bm{1}_{n_2},1)}(\yv)
$ has a degree of $n-1$.

\item the degree of $X^{n_1}$ is $n_1$.
\end{itemize}

Thus, the polynomial degree of $p(\xv,\yv)$ is at most $n-1$ and the proof is complete.
\end{proof}

In this case, any function in $\M_n$ with the added constraint 
that $\phi(\bm{1}_{n_1-1}00)=\bm{1}_{n_2}$ and 
$g(\xv)= \left( \prod_{i=1}^{n_1-2} x_i \right)(1-x_{n_1-1})(1-x_n)$
would have this property of $s(f)=n$ and $\pdeg(f)=n-1$.  
Except for $\bm{1}_{n_2}00$ and $\bm{1}_{n_2}$, all the points can have any linear function
of weight less than $n_2$, with the only restriction being that $n_1$ points in $\xv$ 
should have odd weighted linear functions attached to them.
This gives us a loose lower bound on the number of functions
that we can recover with this construction technique with the given property.
We can in fact construct $2^{{\sf EXP}(n)}$ functions with such property, 
which we state next. Let us now look at an example function. 

We let $n=7$ and $n_1=3,n_2=4$ and show the structure of the Boolean function. 
Observe that for the inputs $001$, $011$ and $110$ in $\xv$ the corresponding linear function
contains only a single variable and the linear function corresponding to $111$ contains $4$ 
variables, implying that $f$ is in the class $\M_7$. 
Additionally the linear function at $100$ is the complement of the linear function at $111$,
which satisfies the conditions of Theorem~\ref{th:1}.
\begin{table}[H]
\centering
\begin{tabular}{|c|c|}
\hline 
$\xv$ & The linear function on  $\yv$ \\ \hline
$000$ &   $y_2 \op y_3$   \\ \hline
$001$ &   $y_1$    \\ \hline
$010$ &   $y_2$ \\ \hline
$011$ &   $y_1 \op y_2$ \\ \hline
$100$ &   $y_1 \op y_2 \op y_3 \op y_4 \op 1$ \\ \hline
$101$ &   $y_1 \op y_3$ \\ \hline
$110$ &   $y_3$ \\ \hline
$111$ &  $y_1 \op y_2 \op y_3 \op y_4$ \\ \hline

\end{tabular}

\caption{A Boolean function $f$ with $D(f)=7$ and $\pdeg(f)=6$.}
\label{tab:2}
\end{table}

\begin{corollary}
\label{cor:3}
For any $n$, there are at least 
$\Omega \left( 2^{2^{\frac{n}{2}}} \right)$ MM type functions with $D(f)=n$ and $\pdeg(f)=n-1$.
\end{corollary}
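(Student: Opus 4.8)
The plan is to count the degrees of freedom available in the construction of Theorem~\ref{th:1}, keeping the structural constraints fixed but letting the linear functions on the remaining inputs vary. Recall that such a function $f$ is determined by choosing, for each $\var{a} \in \F^{n_1}$, a pair $(\phi(\var{a}), g(\var{a}))$ — equivalently a linear function $\lin_{(\phi(\var{a}),g(\var{a}))}$ on $\yv \in \F^{n_2}$. The construction of Theorem~\ref{th:1} imposes: $\phi(\bm{1}_{n_1-2}00)=\bm{1}_{n_2}$ with $g(\bm{1}_{n_1-2}00)=1$; $\phi(\bm{1}_{n_1})=\bm{1}_{n_2}$ with $g(\bm{1}_{n_1})=0$; and (from membership in $\M_n$, Lemma~\ref{lemma:1}) the $n_1$ inputs $\bm{1}_{n_1}^i$ must each carry a linear function of odd Hamming weight with the appropriate constant $g(\bm{1}_{n_1}^i)\equiv n_2 \bmod 2$. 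Everything else is free, subject only to the degree cap: every remaining $\var{a}$ (there are $2^{n_1} - 2 - n_1$ of them, after removing $\bm{1}_{n_1-2}00$, $\bm{1}_{n_1}$, and the $n_1$ flips $\bm{1}_{n_1}^i$) may be assigned \emph{any} linear function $\lin_{(\var{b},c)}$ with $wt(\var{b}) \leq n_2 - 1$, i.e., $\var{b} \ne \bm{1}_{n_2}$, and any $c \in \F$.

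First I would verify that any such choice still yields $s(f)=n$ and $\pdeg(f)\le n-1$. The sensitivity claim is immediate: Lemma~\ref{lemma:1} only uses the values of $f$ on $\bm{1}_{n_1}\|\yv$ and on $\bm{1}_{n_1}^i\|\bm{1}_{n_2}$, all of which are pinned by the fixed part of the construction. For the degree, the argument in the proof of Theorem~\ref{th:1} shows that a monomial of degree $n$ can only come from a term $\pr_{\var{a}}(\xv)\lin_{(\var{b},c)}(\yv)$ with $wt(\var{b})=n_2$; by our restriction the only two such $\var{a}$ are $\bm{1}_{n_1-2}00$ and $\bm{1}_{n_1}$, and exactly the same cancellation computed in Theorem~\ref{th:1} (the term $\mathbf{X}^{n_1}(\lin_{(\bm{1}_{n_2},0)}+\lin_{(\bm{1}_{n_2},1)}) = \mathbf{X}^{n_1}$) kills the top-degree contribution. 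Hence $\pdeg(f)\le n-1$ regardless of the free choices.

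Next I would count. Among the $2^{n_1}-2-n_1$ free inputs, the simplest lower bound restricts attention to those $\var{a} \ne \bm{1}_{n_1}^i$ and assigns them, say, the constant linear function $\lin_{(\bm{0}_{n_2},0)}=0$ versus $\lin_{(\bm{0}_{n_2},1)}=1$, or more liberally any $\var{b}$ with $wt(\var{b})\le n_2-1$; either way each such input independently contributes a factor of at least $2$ to the count. Since $n_2 \ge n_1 \ge n_1 = n - n_2 \ge \lfloor n/2\rfloor$, we have $2^{n_1} - 2 - n_1 = \Omega(2^{n/2})$ free coordinates, each a binary choice, giving at least $2^{\,\Omega(2^{n/2})} = \Omega\!\left(2^{2^{n/2}}\right)$ distinct functions; distinctness is clear since two assignments that differ on some free input $\var{a}$ give functions differing at any point $(\var{a},\yv)$. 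This establishes the bound. The only mild subtlety — and the place to be careful — is bookkeeping the constraint $n_1 \le n_2 \le n_1+1$ with $n_2$ even, so that one must occasionally take $n_1 = \lceil n/2\rceil$ or $\lfloor n/2\rfloor$ depending on $n \bmod 4$; but in every case $n_1 = \Theta(n)$, so $2^{n_1}$ dominates the additive corrections $2+n_1$ and the stated $2^{2^{n/2}}$ asymptotics go through. I expect this indexing/parity accounting to be the main (minor) obstacle; the structural facts are all inherited directly from Theorem~\ref{th:1} and Lemma~\ref{lemma:1}.
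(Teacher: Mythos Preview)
Your proposal is correct and follows essentially the same approach as the paper: fix the mandatory values of $\phi$ and $g$ at $\bm{1}_{n_1}$, $\bm{1}_{n_1-2}00$, and the $n_1$ points $\bm{1}_{n_1}^i$, then count the free choices of linear functions at the remaining $2^{n_1}-n_1-2$ inputs. If anything, your write-up is slightly more careful than the paper's, since you explicitly impose $wt(\var{b})\le n_2-1$ on the free slots (needed so that the degree-$n$ cancellation in Theorem~\ref{th:1} is not spoiled), whereas the paper only mentions this restriction in the prose preceding the corollary; also note that the paper's displayed count has the base and exponent transposed, though the asymptotic conclusion is unaffected.
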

\begin{proof}
By definition we have $\lceil \frac{n}{2} \rceil \leq n_2 \lceil \frac{n}{2} \rceil +1$.
For simplicity, let us assume that $n$ is even. Then each distinct mapping $\phi$ that 
satisfies the constraint of Theorem~\ref{th:1} will represent an MM type function with $s(f)=n$ and $\pdeg(f)=n-1$.  
We have already defined $\phi$ at the points $\bm{1}_{n_1-2}00$ and $\bm{1}_{n_1}$. 
Let us consider the mapping where $\phi(\bm{1}_{n_1}^i)={0}_{n_2}^i$. 
That is, the function attached corresponding to the $i$-th critical point ($1 \leq i \leq n$)
is $y_i$ and since $n_2 >n_1$, this is feasible.
Therefore, corresponding to the other $2^{\frac{n}{2}}-2-\frac{n}{2}$ points in $\xv$ we can put any of the 
$2^{\frac{n}{2}}$ linear functions defined on $\yv$, and each such function would have the desired  property. 
 
Thus, the number of such functions will be  $\left( 2^{\frac{n}{2}}-2-\frac{n}{2} \right)^{2^{\frac{n}{2}}}$, which we denote by $C(n,n-1)$.
Then for $n>6$ we have 
$$
C(n,n-1)> \left(2^{\frac{n}{2}-1} \right)^{2^{\frac{n}{2}}} = \Omega \left( 2^{2^{\frac{n}{2}}} \right).
$$

\end{proof}

\subsection{Logarithmic separation Between $n$ and $\pdeg(f)$ with $s(f)=n$ \label{log-sep}}

We finally design a method of constructing an MM type function in $\M_n$ 
that has polynomial degree of $n- \Theta(\log n)$.
Before proceeding to the proof, we derive some 
more results related to the structure of different polynomials and their
interaction upon addition. 
In the last subsection we observed how given an MM type 
function $f$, where $\phi$ and $g$ are defined to be zero in many points,
say all points of the form $\bm{1}_{k_1}\bm{0}_{k_2}\var{t} \in \F^{n_1}$
then adding a polynomial term of the form 
$\left( \prod_{i=1}^{k_1}x_i \prod_{j=k_1+1}^{k_2} (1-x_j) \right) \lin_{\var{b},c} (\yv)$
transforms it to another MM type function $f'$ whose map $\phi'$ 
and sub-function on $g'$ differs from $f$ only for the points 
$\bm{1}_{k_1}\bm{00}_{k_2}\var{t} \in \F^{n_1}$.
With these observations in mind we analyze a function $f_1 \in \M_n$ 
such that its polynomial satisfies the constraint of Theorem~\ref{th:1} and 
then further add some constraints so as to be able to 
make modifications as discussed in Proposition~\ref{prop:2}.
We define the corresponding map $\phi_1$ and sub-function 
$g_1$ in the lines of  Theorem~\ref{th:1}:
\begin{equation}
\label{eq:f1}
\begin{split}
 \phi_1(\xv) &=
    \begin{cases}
      \bm{1}_{n_2} & \text{ if $\xv=\bm{1}_{n_1}$ or $\xv= \bm{1}_{n_1-2}00 $}\\
      \bm{0}_{n_2}^i & \text{ if $\xv=\bm{0}_{n_1}^i  $} \\
		0 & \text{ otherwise,}
    \end{cases}\\
  g_1(\xv)&= \left( \prod_{i=1}^{n_1-2}x_i \right)(1-x_{n_1-1})(1-x_{n_1}).         
  \end{split}
\end{equation}
This definition differs from the function in Theorem~\ref{th:1} in 
that the map $\phi_1$ is defined as $0$ in all but $n_1+2$ positions. 
However it satisfies all the constraints of Theorem~\ref{th:1}.
This means that the real polynomial corresponding to $f_0$ can be expressed as 
\begin{align*}
p_1(\xv,\yv)&
=p_1'(\xv,\yv)+
\left({\bf X_1^{n_1-2}} - {\bf X_1^{n_1-2}}x_{n_1-1} - {\bf X_1^{n_1-2}}x_{n_1} \right) \lin_{(\bm{1}_{n_2},1)}(\yv)
+ 
{\bf X^{n_1}} 
\\ &=
p_1'(\xv,\yv)+
\left({\bf X_1^{n_1-2}} - {\bf X_1^{n_1-2}}x_{n_1-1} - {\bf X_1^{n_1-2}}x_{n_1} \right)  \frac{1+ \lf}{2}
+ 
{\bf X^{n_1}},
\end{align*}
where $p_1'(\xv,\yv)$ defined in the same manner as $p'$ in Theorem~\ref{th:1}. Consider the term
$- {\bf X_1^{n_1-2}}x_{n_1-1} \frac{1+ \lf}{2}$ which has a degree of $n-1$.
If we add the polynomial 
${\bf X_1^{n_1-4}} (1-x_{n_1-3})(1-x_{n_1-2})x_{n_1-1}\frac{1-\lf}{2}$
to this term we get 
\begin{align*}
&{\bf X_1^{n_1-4}}(1-x_{n_1-3}-x_{n_1-2}+x_{n_1-3}x_{n_1-2})x_{n_1-1}\frac{1-\lf}{2}
-
{\bf X_1^{n_1-2}}x_{n_1-1} \frac{1+ \lf}{2}
\\&=
{\bf X_1^{n_1-4}}x_{n_1-1}
-{\bf X_1^{n_1-4}}x_{n_1-3}x_{n_1-1}\frac{1-\lf}{2}
-{\bf X_1^{n_1-4}}x_{n_1-2}x_{n_1-1}\frac{1-\lf}{2}
\\&+
{\bf X_1^{n_1-2}}x_{n_1-1}\frac{1-\lf}{2}
-{\bf X_1^{n_1-1}} \frac{1+ \lf}{2}
+{\bf X_1^{n_1-1}} \frac{1-\lf}{2}
\\
&=
\left( {\bf X_1^{n_1-4}}x_{n_1-1}
-{\bf X_1^{n_1-4}}x_{n_1-3}x_{n_1-1}
-{\bf X_1^{n_1-4}}x_{n_1-2}x_{n_1-1}
\right) \frac{1-\lf}{2}
+{\bf X_1^{n_1-1}}.
\end{align*}
This polynomial has degree $n-2$, which reduces the degree of the polynomial 
by $1$ and creates $2$ terms of the same type of degree $n-2$ and one of degree $n-3$.
Similarly, if we could modify the other degree $n-1$ polynomial in the same way 
it would bring down the overall degree of the real polynomial of the function to $n-2$.
This modification is central to our construction method and we now generalize it,
starting with defining two generic polynomial structures:
\begin{enumerate}
\item
$\hat{p}(i,s)= (\prod_{j=1}^i x_j)(1-x_{i+1})(1-x_{i+2}) (\prod_{j=s_i} x_j),
s \subset [n] \setminus [i+2]$,

\item
$
\tilde{p}(i,u)=(\prod_{j=1}^i x_j)(\prod_{j=u_i} x_j),
u \subset [n] \setminus [i].
$
\end{enumerate}
From these definitions we get the following two relations. 
\begin{equation}
\label{eq:t1}
\pdeg(\tilde{p}(i,u))=\pdeg(\hat{p}(i-2,u))= i+\abs{u},
\end{equation}
\begin{equation}
\label{eq:h1}
\tilde{p}(i,u)= \tilde{p}(i-k,u^k) \text{ where }u^k= u \cup \{i,i-1,\ldots i-k+1\}.
\end{equation}

Then corresponding to any $\tilde{p}(i,u)\frac{1+\lf}{2}$,
we have
\begin{align}
\label{eq:th1}
&\tilde{p}(i,u)\frac{1+\lf}{2} + \hat{p}(i-2,u)\frac{1-\lf}{2}
\\=&
\tilde{p}(i-2,u)\frac{1+\lf}{2}-
\tilde{p}(i-2,u)x_{i-1}\frac{1+\lf}{2}-
\tilde{p}(i-2,u)x_i\frac{1+\lf}{2}
+
\tilde{p}(i,u)
\nonumber \\=&
\tilde{p}(i-2,u)\frac{1+\lf}{2}-
\tilde{p}(i-2,u_1)\frac{1+\lf}{2}-
\tilde{p}(i-2,u_2)\frac{1+\lf}{2}
+
\tilde{p}(i,u),
\nonumber
\end{align}
where $\abs{u_1}=\abs{u_2}=\abs{u+1}$.
Then this addition of the polynomial 
$\hat{p}(i-2,u)\frac{1-\lf}{2}$
reduces the degree of the polynomial by $1$ 
and forms two similar terms of degree $1$ less,
one term of degree $2$ less, and one term that is only 
defined on $\xv$ and therefore has degree less than $n_1$.
Next we note another property of $\hat{p}$ before proceeding 
with the final proof. 

\begin{lemma}
\label{lemma:4}
Corresponding to the polynomial  $p_1$ of an MM function $f_1$
where $\phi_1$ is nonzero in only the points 
$\bm{1}_{n_1}$, $\bm{1}_{n_1-2}$, $\bm{0}_{n_1}^i$
such that $f_1 \in \M_n$ ($s(f,\bm{1}_n)=n$)
then adding of any arbitrary number of polynomial terms
$\hat{p}(i,u)(\frac{1 \pm \lf}{2})$ to $p_1$
transforms it to another MM Boolean function in $\M_n$,
provided that for any two added polynomial terms 
$\hat{p}(i_1,u_1)$ and $\hat{p}(i_2,u_2)$ we have $\abs{i_1-i_2} \geq 2$
and $i<n_1-2, ~\forall\,  \hat{p}(i,u)$. 
\end{lemma}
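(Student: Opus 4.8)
The plan is to realise each addition of a term $\hat p(i,u)\left(\frac{1\pm\lf}{2}\right)$ as one application of Proposition~\ref{prop:2}, and then to iterate over all the added terms, using the two stated provisos to keep that proposition's hypotheses valid at every step. To set this up, recall (Note~\ref{note:2} and the notation preceding Theorem~\ref{th:1}) that $\frac{1-\lf}{2}=\lin_{(\bm{1}_{n_2},0)}$ and $\frac{1+\lf}{2}=\lin_{(\bm{1}_{n_2},1)}$, and that $\hat p(i,u)=\left(\prod_{j\in\{1,\dots,i\}\cup u}x_j\right)(1-x_{i+1})(1-x_{i+2})$. Hence the term $\hat p(i,u)\left(\frac{1\pm\lf}{2}\right)$ is exactly $\left(\prod_{j\in A}x_j\right)\left(\prod_{j\in B}(1-x_j)\right)\lin_{(\bm{1}_{n_2},c)}(\yv)$ with $A=\{1,\dots,i\}\cup u$, $B=\{i+1,i+2\}$ (disjoint from $A$ since $u\subseteq[n]\setminus[i+2]$), and $c\in\{0,1\}$. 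Applying Equation~\eqref{eq:4} with the index set $A$ in place of the prefix $\{1,\dots,k_1\}$ and $B$ in place of $\{k_1+1,\dots,k_2\}$, this equals $\sum_{\var{t}}\pr_{\var{a}(\var{t})}(\xv)\,\lin_{(\bm{1}_{n_2},c)}(\yv)$, where $\var{a}(\var{t})$ runs over exactly the set $T(i,u)\subseteq\F^{n_1}$ of points equal to $1$ on $A$, $0$ on $B$, and arbitrary elsewhere. So adding this term performs the operation of Proposition~\ref{prop:2} on the block $T(i,u)$, which is legitimate provided $\phi$ and $g$ both vanish on $T(i,u)$.

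Next I would check that requirement for a single addition. The hypothesis $i<n_1-2$ forces every point of $T(i,u)$ to be $0$ in coordinate $i+1$ with $i+1\le n_1-2$; hence neither $\bm{1}_{n_1}$ nor $\bm{1}_{n_1-2}00$ lies in $T(i,u)$ (both are $1$ on all of $\{1,\dots,n_1-2\}$), and no weight-$(n_1-1)$ vector $\bm{1}_{n_1}^j$ lies in $T(i,u)$ (points of $T(i,u)$ carry the two forced zeros of $B$, while $\bm{1}_{n_1}^j$ has only one zero). The only points at which $\phi_1$ or $g_1$ is nonzero are $\bm{1}_{n_1}$, $\bm{1}_{n_1-2}00$, and the weight-$1$ vectors, and the terms $\hat p(i,u)$ occurring in the construction satisfy $i+|u|\ge 2$, so $T(i,u)$ contains no vector of weight $\le 1$; consequently $\phi_1$ and $g_1$ vanish on $T(i,u)$. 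Thus Proposition~\ref{prop:2} applies and replaces $f_1$ by an MM Boolean function differing from it only on $T(i,u)$, where its $\phi$ and $g$ now take the values $\bm{1}_{n_2}$ and $c$.

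Finally I would iterate over the added terms $\hat p(i_1,u_1),\dots,\hat p(i_r,u_r)$. The proviso $|i_a-i_b|\ge 2$ makes the sets $T(i_a,u_a)$ pairwise disjoint: if $i_a<i_b$ then $i_b\ge i_a+2$, so a point lying in both would have to be $1$ in coordinate $i_a+1$ (since $i_a+1\le i_b$, from the $A$-part of the $b$-th term) and $0$ there (from the $B$-part of the $a$-th term), impossible. Hence after the first $t$ applications the current $\phi,g$ differ from $\phi_1,g_1$ only on $T(i_1,u_1)\cup\dots\cup T(i_t,u_t)$, which is disjoint from $T(i_{t+1},u_{t+1})$; so $\phi,g$ still vanish on $T(i_{t+1},u_{t+1})$, the points controlling $\M_n$-membership remain untouched, and Proposition~\ref{prop:2} applies once more. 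After all $r$ steps we get an MM Boolean function $f'$ whose data $\phi',g'$ agree with $\phi_1,g_1$ at $\bm{1}_{n_1}$ and at every weight-$(n_1-1)$ vector, which are precisely the points on which the defining conditions of $\M_n$ (Lemma~\ref{lemma:1}) are imposed; therefore $f'\in\M_n$, as claimed.

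The part I expect to be the real work is the bookkeeping just outlined: making sure that across the entire sequence of additions no point of $\F^{n_1}$ is modified twice and no modified point is one of the few on which $\M_n$-membership is pinned. That is exactly what the two provisos ($i<n_1-2$ and pairwise $|i_1-i_2|\ge 2$) are designed to buy, so the argument is essentially a disciplined verification that they suffice; the algebraic identities needed follow immediately from Equation~\eqref{eq:4}, Note~\ref{note:2}, and Proposition~\ref{prop:2}.
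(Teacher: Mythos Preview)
Your approach is essentially the same as the paper's: verify that the support $T(i,u)$ of each added term $\hat p(i,u)$ avoids the finitely many points where $\phi_1,g_1$ are already nonzero (so Proposition~\ref{prop:2} applies), and that the supports for distinct added terms are pairwise disjoint (so the applications can be iterated). Your bookkeeping via $T(i,u)$ and the explicit induction over the added terms is more carefully spelled out than the paper's two-term check, but the underlying argument is identical, including the key observations that (i) any point in $T(i,u)$ has at least two zeros at positions $i{+}1,i{+}2\le n_1{-}1$, ruling out $\bm 1_{n_1}$, $\bm 1_{n_1}^j$, and $\bm 1_{n_1-2}00$, and (ii) $|i_a-i_b|\ge 2$ forces a coordinate that must be $1$ for one term and $0$ for the other.

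One small wrinkle: to exclude the weight-$1$ points $\bm 0_{n_1}^i$ listed in the lemma statement you import the side condition $i+|u|\ge 2$, which is not among the lemma's hypotheses (though it is satisfied in the application in Theorem~\ref{thm:3}). The paper's own proof never checks weight-$1$ points; it only argues about ``critical points'' of weight $\ge n_1-1$, consistent with the definition in Equation~\eqref{eq:f1} being intended as $\phi_1(\bm 1_{n_1}^i)=\bm 0_{n_2}^i$ rather than $\phi_1(\bm 0_{n_1}^i)=\bm 0_{n_2}^i$ (otherwise $f_1\notin\M_n$). Under that intended reading your extra side condition is unnecessary and your proof goes through exactly as the paper's does.
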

\begin{proof}
Consider any two 
$\hat{p}(i_1,u_1)$ and $\hat{p}(i_2,u_2)$ with $i_1 \leq i_2-2$.
Then, from Equation~\eqref{eq:4} we have
that adding  
$\hat{p}(i_1,u_1)(\frac{1 \pm \lf}{2})$
and 
$\hat{p}(i_2,u_2)(\frac{1 \pm \lf}{2}$)
to a polynomial corresponding to an MM function
only works (the resultant function is still in $\M_n$) 
if $\phi(\xv)=0$ for all $\xv$ such that
$\hat{p}(i_1,u_1)(\xv)=1$ or $\hat{p}(i_2,u_2)(\xv)=1$
and moreover
$\hat{p}(i_1,u_1)$ and $\hat{p}(i_1,u_2)$ should not both 
return $1$ for any input $\xv$.

The first condition is true as any polynomial of the type $\hat{p}(i,u)$
can only return $1$ if both $x_{i+1}$ and $x_{i+2}$ are set to $0$.
This cannot happen for any critical point as their weight is minimum $n_1-1$.
And since we have defined $i<n_1-2$ none of these polynomials can return $1$ 
for the point $\bm{1}_{n_1-2}00$. 

Also for any two polynomials $p_1=\hat{p}(i_1,u_1)$ and $p_2=\tilde{p}(i_2,u_2)$
with $i_1 \leq i_2-2$ 
they cannot both be $1$ for any input $\var{a} \in \F^{n_1}$ 
for the simple reason that for $p_1(\xv)$ to be $1$,  
$x_{i_1}$ and $x_{i_1+1}$ should be set as $1$  
and for $p_2(\xv)$ to be $1$ both $x_{i_1}$ and $x_{i_2}$
should be set to $0$, by definition. 
This completes the proof.
\end{proof}

Finally, we use the results and equations we have defined so far 
to develop a stepwise construction technique to obtain 
a function with $s(f)=n$ and $\pdeg(f)= n- \Theta(\log n)$.

\begin{theorem}
\label{thm:3}
There exists a  Boolean function $f \in \M_n$ 
with $\pdeg(f)=n -\Theta (\log n)$.
\end{theorem}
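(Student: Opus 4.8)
The plan is to realize the degree reduction of Theorem~\ref{th:1} not once but $\Theta(\log n)$ times, driving the polynomial degree down from $n$ all the way to $n-\Theta(\log n)$, while keeping the function inside $\M_n$ at every stage. I would start from the function $f_1$ of Equation~\eqref{eq:f1}, whose real polynomial $p_1$ already has degree $n-1$ and contains the three ``top'' monomials ${\bf X_1^{n_1-2}}\frac{1+\lf}{2}$, $-{\bf X_1^{n_1-2}}x_{n_1-1}\frac{1+\lf}{2}$, $-{\bf X_1^{n_1-2}}x_{n_1}\frac{1+\lf}{2}$ of degree $n-1$ (plus the harmless ${\bf X^{n_1}}$ term on $\xv$ only). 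The key observation, already packaged in Equation~\eqref{eq:th1}, is that adding a correction term $\hat{p}(i-2,u)\frac{1-\lf}{2}$ to a term $\tilde{p}(i,u)\frac{1+\lf}{2}$ lowers its degree by $1$, at the cost of producing \emph{two} new terms $\tilde{p}(i-2,u_1)\frac{1+\lf}{2}$, $\tilde{p}(i-2,u_2)\frac{1+\lf}{2}$ of degree one smaller, one term of degree two smaller, and one pure-$\xv$ term. So each ``reduction round'' roughly doubles the number of active top-degree terms but decreases their degree by one; after $r$ rounds there are $O(2^r)$ active terms of degree $n-r$, and this is feasible as long as $2^r$ stays below the number of available index slots, i.e. roughly $r = \Theta(\log n_1) = \Theta(\log n)$.

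The steps, in order, would be: (1) Fix the split $n_1\le n_2\le n_1+1$ with $n_2$ even, and take $f_1$ as in~\eqref{eq:f1}; record that $\pdeg(p_1)=n-1$ with its top terms as above. (2) Describe the reduction round abstractly: given a collection of degree-$d$ terms of the form $\tilde{p}(i,u)\frac{1+\lf}{2}$ (equivalently, via~\eqref{eq:h1}, of the form $\tilde{p}(i',u')\frac{1+\lf}{2}$ with $i'$ as small as we like), add to each one a correction $\hat{p}(i-2,u)\frac{1-\lf}{2}$ using~\eqref{eq:th1}; this replaces it by two degree-$(d-1)$ terms, a degree-$(d-2)$ term, and a pure-$\xv$ term. (3) Iterate this for $r$ rounds, $r=c\log_2 n$ for a suitable small constant $c$; after round $t$ the maximal degree among the ``$\tilde p \cdot \frac{1+\lf}{2}$'' terms is $n-1-t$, and the number of such terms is at most $2^t$. (4) Check feasibility: at each round the index $i$ used in every $\hat{p}(i,u)$ must satisfy $i<n_1-2$ and any two simultaneously-added correction terms must have indices differing by at least $2$ — this is exactly the hypothesis of Lemma~\ref{lemma:4}, which then guarantees the running function stays an MM function in $\M_n$, hence $s(f)=n$ throughout. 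The number of distinct $(i,u)$ slots we need after $r$ rounds is $O(2^r)$, and since we have $\approx 2^{n_1}\approx 2^{n/2}$ points in $\xv$ and $\Theta(n)$ choices of the pivot index $i$, choosing $r=\Theta(\log n)$ keeps us comfortably within budget. (5) Conclude that the final function $f\in\M_n$ has $s(f)=n$ and $\pdeg(f)\le n-1-r = n-\Theta(\log n)$; a matching lower bound $\pdeg(f)\ge n-\Theta(\log n)$ follows because the correction terms can only cancel so much — the residual pure-$\xv$ monomials ${\bf X^{n_1}}$ and the surviving $\tilde p\cdot\frac{1+\lf}{2}$ terms of degree $n-1-r$ do not cancel, so the degree cannot drop below $n-O(\log n)$.

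The main obstacle, I expect, is the bookkeeping in step (4): one must choose the index sets $u$ (and the pivot positions $i$) for the doubling family of correction terms so that (a) no two correction terms added in the same round collide in the sense forbidden by Lemma~\ref{lemma:4} (indices at distance $\ge 2$), (b) the pivots never reach $n_1-2$, and (c) distinct correction terms are genuinely distinct monomials so nothing cancels prematurely and destroys a needed reduction or, conversely, resurrects a high-degree term. Making this scheduling explicit — essentially laying out a binary tree of depth $\Theta(\log n)$ of $(i,u)$ labels and verifying the separation constraints level by level — is where the real work lies; everything else is a mechanical application of Proposition~\ref{prop:2}, Lemma~\ref{lemma:4}, and Equations~\eqref{eq:t1}--\eqref{eq:th1}. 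A secondary point to be careful about is the parity/weight conditions defining $\M_n$ (the $wt(\phi(\bm{1}_{n_1}^i))\equiv 1\bmod 2$ and $g(\bm{1}_{n_1}^i)\equiv n_2\bmod 2$ constraints): since every correction term $\hat p(i,u)$ vanishes on all critical points $\bm{1}_{n_1}^i$ and on $\bm{1}_{n_1-2}00$, these conditions are preserved automatically, but this should be stated explicitly when invoking Lemma~\ref{lemma:4}.
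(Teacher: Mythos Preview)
Your proposal follows essentially the same approach as the paper: start from $f_1$ of Equation~\eqref{eq:f1}, iteratively apply the degree-reduction identity~\eqref{eq:th1} to the top-degree $\tilde p(i,u)\frac{1+\lf}{2}$ terms, and invoke Lemma~\ref{lemma:4} to keep the running function inside $\M_n$; the number of rounds is $\Theta(\log n)$ because the total number of correction terms is bounded by the $\Theta(n_1)$ available pivot indices. One small bookkeeping point: you count ``at most $2^t$'' top-degree terms after round $t$, but the degree-$(d-2)$ by-product of~\eqref{eq:th1} also needs treatment once it becomes top-degree, so the honest count is $3^t$ rather than $2^t$ (this is what the paper uses, obtaining $z\le\log_3(n_1-1)$); the asymptotic conclusion $n-\Theta(\log n)$ is of course unaffected, and your identification of the index-scheduling in step~(4) as the real work matches the paper exactly.
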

\begin{proof}

We define the function $f_1$ as in Equation~\eqref{eq:f1}
and the corresponding polynomial is 
$p_1=p_1'(\xv,\yv)+
\left({\bf X_1^{n_1-2}} - {\bf X_1^{n_1-2}}x_{n_1-1} - {\bf X_1^{n_1-2}}x_{n_1} \right)  (\frac{1+ \lf}{2})
+ 
{\bf X_1^{n_1}}$ where $\deg(p')=n_1+1$.

We start with $f_1$ and recursively reduce the polynomial degree 
by adding polynomials of the form $\hat{p}(i,u)$ where $i$ starts with $n_1-3$ and  decreases by $2$ with every new polynomial that we add.
The terms 
${\bf X_1^{n_1-2}}(\frac{1+ \lf}{2})$,
${\bf X_1^{n_1-2}}x_{n_1}(\frac{1+ \lf}{2})$
and
${\bf X_1^{n_1-2}}x_{n_1-1}(\frac{1+ \lf}{2})$
can all be expressed as polynomials $\tilde{p}(i,u_i)(\frac{1 \pm \lf}{2})$ for some $u_i$.
Here $i=n_1-2$.
We also know from Equation~\eqref{eq:t1} that 
any such polynomial can also be represented as 
some $\tilde{p}(i-k,u_i^k)(\frac{1 \pm \lf}{2})$.

Then corresponding to the term $(-1)^c \tilde{p}(i-k,u_i^k)(\frac{1 +(-1)^d \lf}{2})$
we add the polynomial $\hat{p}(i-k-2,u_i^k)(\frac{1 + (-1)^{c+d+1} \lf}{2})$ 
and this results in at most $3$ new polynomial terms of the form $\tilde{p}(i-k-1, {u_i^k}')$
of degree at least one less
(specifically one polynomial of degree $2$ less and two polynomials of degree $1$ less).
For each new polynomial we add, if the last polynomial that we added was $\hat{p}(i,u)(\frac{1 \pm \lf}{2})$, then the next polynomial is $\hat{p}(i-2,u')(\frac{1 \pm \lf}{2})$, 
where $u$ and $u'$ depend on the lower degree polynomials that form. 
Therefore for any defined $n_1$ we can at most add 
$\lfloor \frac{n_1-2}{2} \rfloor$ polynomials. 
This polynomial term addition is valid and the modified function is still in $\M_n$
as we have shown in Lemma~\ref{lemma:4}.

At the first step we have $3$ such polynomial terms to which we add
a $\hat{p}$ polynomial.
Then, at the $i$-th step, if we have $3^i$ polynomials of the form $\tilde{p}$
of degree $n-i$. Corresponding to each such, we 
add polynomials $\hat{p}(k,u)$ where $n_1-3^i \geq  k \geq n_1-3^{i+1}$
for each polynomial $\tilde{p}(k+2,u)$, and this would form 
$3^{i+1}$ polynomials of degree at most $n-(i+1)$. 
This step can at most be continued for $z$ iterations, 
where $\sum_{i=1}^z  2 \times 3^z \leq n_1-1$,
that is, $z \leq \log_3 (n_1-1)$.
If we fix $n_1=\frac{n}{2}$ this gives us a function 
in $\M_n$ 
with polynomial degree $n- \log_3(\frac{n}{2})= n- \Theta(\log n)$.
\end{proof}
It is easy to see that when we have not used some $k$ bits of $\xv$
as the zero bits in $\hat{p}$, our modifications do not affect the 
map $\phi$ in the point with at least any two of those $k$-bits being zero. 
This gives us the following corollary. 

\begin{corollary}
There are $\Omega \left(2^{2^k} \right)$ functions in $\M_n$ with polynomial degree 
$n- \log_3 (\frac{n}{2}) + \log_3(k)$ that we can recover using the 
construction method of Theorem~\textup{\ref{thm:3}}.
\end{corollary}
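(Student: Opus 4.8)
The plan is to rerun the stepwise construction from the proof of Theorem~\ref{thm:3} verbatim, but to withhold $k$ of the $\xv$-variables from ever serving as the ``zero bits'' $x_{i+1},x_{i+2}$ that appear inside the auxiliary polynomials $\hat p(i,u)$. The first step is to account for what this costs in the degree. In the unrestricted construction the reduction proceeds for $z$ levels, the $i$-th level consuming $2\cdot 3^{i}$ fresh index positions and lowering the real polynomial degree by one, so that $z$ is the largest integer with $\sum_{i=1}^{z}2\cdot 3^{i}\le n_{1}-1$; with $n_{1}=\frac{n}{2}$ this gives degree $n-\log_{3}(n/2)$. Reserving $k$ of the available positions shrinks this budget, and a short estimate on the truncated geometric sum shows the recursion now terminates $\log_{3}k$ levels earlier, so the function $f'$ we obtain satisfies $\pdeg(f')=n-\log_{3}(n/2)+\log_{3}k$. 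Because the indices $i$ actually used still lie below $n_{1}-2$ and are pairwise at distance at least two, Lemma~\ref{lemma:4} applies unchanged and $f'$ is genuinely in $\M_{n}$.

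Next I would extract the combinatorial freedom created by the withheld variables. Since none of the polynomial additions uses any of the $k$ reserved coordinates as a zero bit, Proposition~\ref{prop:2} guarantees that no addition alters the value of the map $\phi$ at any point $\var a\in\F^{n_{1}}$ with at least two of those coordinates equal to $0$. Fixing the other $n_{1}-k$ coordinates of $\xv$ to all ones carves out a $k$-dimensional subcube inside $\F^{n_{1}}$; of its $2^{k}$ points, exactly $2^{k}-k-1$ have at least two zeros among the reserved coordinates, and each of them has Hamming weight at most $n_{1}-2$, hence is none of the critical points $\bm 1_{n_{1}},\bm 1_{n_{1}}^{i}$ that $\M_{n}$-membership constrains. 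At each such free point $\var a$ I may therefore attach, in place of the linear piece produced by the construction, any $\lin_{(\var b,g(\var a))}(\yv)$ with $wt(\var a)+wt(\var b)$ not exceeding the target degree $n-\log_{3}(n/2)+\log_{3}k$; for $n$ large this leaves at least $n_{2}$ admissible weights for $\var b$ at every free point (already $\var b=\bm 0$ and all weight-one $\var b$ qualify), and since we only change $\phi$ on non-critical points of weight below $n_{1}-1$ the modified function stays in $\M_{n}$ and its degree does not increase.

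Combining the two parts, making these choices independently over the $2^{k}-k-1\ge 2^{k-1}$ free points yields at least $n_{2}^{\,2^{k}-k-1}=\Omega\!\left(2^{2^{k}}\right)$ pairwise distinct functions, all lying in $\M_{n}$, all with polynomial degree $n-\log_{3}(n/2)+\log_{3}k$, and all obtained through the construction of Theorem~\ref{thm:3}. I expect the first step --- pinning down that withholding $k$ bit positions costs exactly $\log_{3}k$ in the degree --- to be the main obstacle, since it needs careful tracking of which index range each recursion level consumes (the $i$-th level uses indices in a window of length $2\cdot 3^{i}$) and of how the geometric sums telescope; the counting half is comfortable, as we have exponentially many admissible linear functions available at each free point and hence ample slack in the $\Omega(2^{2^{k}})$ bound.
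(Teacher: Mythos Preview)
Your proposal follows exactly the route the paper sketches: reserve $k$ of the $\xv$-coordinates from ever serving as the zero positions $x_{i+1},x_{i+2}$ in the auxiliary polynomials $\hat p(i,u)$, observe that the map $\phi$ is then unconstrained at every point having at least two of those reserved coordinates equal to zero, and count the resulting freedom. The paper itself supplies only the one-sentence remark preceding the corollary and does not work out either the degree accounting or the enumeration of free points, so your write-up is a faithful and considerably more detailed expansion of the same idea; your flagging of the degree bookkeeping as the delicate step is also apt, since the paper leaves that entirely implicit.
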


\section{Super-constant sensitivity with MM functions \label{mm:sup}}

\subsection{Proof of Lemma~\ref{lem:sym} \label{app:lem:sym}}
\begin{proof}
We have ${\sf sym}^k_m(\bm{0}_m)=0$ by definition. 
Now it suffices to see for any input point $\var{b} \in \F^m$  that can be obtained by flipping $i \leq k$ 
bits of $\bm{0}_m$ returns true for an odd number of monomials of ${\sf sym}^k_m$, in which case we will 
have ${\sf sym}^k_m(\var{b})=1$. This can be verified as follows.

Any input point obtained by flipping $i$ points in $\bm{0}_m$ has exactly $i$ variables set as one 
and the rest as zero. These $i$ values will return 1 for $i$ monomials of degree $1$, 
${i \choose 2}$ monomials of degree $2$, and so on. Thus the total number of monomials for which it will return $1$ is 
$\sum_{j=1}^i{i \choose j}= 2^i-1$, which is odd for all values of $i>0$. This completes the proof.

\end{proof}

\subsection{Proofs of Constructions~\ref{lem:korder:c} and \ref{th:f10:c} \label{mm:sup1} }

\begin{lemma}
\label{lem:korder}
Any function $f: \F^{n_1+n_2} \rightarrow \F$ with the  algebraic normal form
$\displaystyle f(\xv,\yv)=\bigoplus_{\var{a} \in \F^{n_1}} \left( ac_{\var{a}}(\xv) \cdot g_{\var{a}}(\yv) \right),$
where $g_{\bm{1}_{n_1}}= {\sf sym}^k_{n_2}$ 
and $g_{\var{a}}= 1$ for all $\var{a} \in \F^{n_1} : n_1 >wt(\var{a}) \geq n_1-k$
is $k$-th order sensitive.
\end{lemma}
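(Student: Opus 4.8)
The plan is to exhibit the single input point witnessing $k$-th order sensitivity, namely $(\bm{1}_{n_1},\bm{0}_{n_2}) \in \F^{n_1+n_2}$, and then verify the defining condition by a two-way case split according to where the flipped bits lie. First I would record the elementary observation that $ac_{\var{a}}(\xv)$ is the indicator of the event $\xv=\var{a}$ (the $\F$-analogue of $\pr_{\var{a}}$), so the given algebraic normal form collapses to $f(\xv,\yv)=g_{\xv}(\yv)$. In particular $f(\bm{1}_{n_1},\bm{0}_{n_2})=g_{\bm{1}_{n_1}}(\bm{0}_{n_2})={\sf sym}^k_{n_2}(\bm{0}_{n_2})=0$, using that the symmetric function ${\sf sym}^k_{n_2}$ has no constant term.

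Next I would fix an arbitrary $S\subseteq[n_1+n_2]$ with $1\le\abs{S}\le k$ and write $S=S_1\cup S_2$, where $S_1$ indexes bits of the $\xv$-block and $S_2$ indexes bits of the $\yv$-block; set $j=\abs{S_1}$. If $j=0$, the perturbed point is $(\bm{1}_{n_1},\bm{0}_{n_2}^{(S_2)})$, where $\bm{0}_{n_2}^{(S_2)}$ has Hamming weight $\abs{S_2}$ with $1\le\abs{S_2}\le k$; then $f$ evaluates there to ${\sf sym}^k_{n_2}(\bm{0}_{n_2}^{(S_2)})=1$ by Lemma~\ref{lem:sym}, which states that ${\sf sym}^k_{n_2}$ is $k$-th order sensitive at $\bm{0}_{n_2}$. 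If $j\ge 1$, the $\xv$-block becomes a vector $\xv'$ of weight exactly $n_1-j$ (we flip $j$ of the $n_1$ ones to zeros), and since $1\le j\le\abs{S}\le k$ we get $n_1> wt(\xv')=n_1-j\ge n_1-k$; the hypothesis $g_{\var{a}}\equiv 1$ for all $\var{a}$ with $n_1>wt(\var{a})\ge n_1-k$ then forces $f(\xv',\cdot)\equiv 1$, regardless of the $\yv$-block. In either case the output differs from $f(\bm{1}_{n_1},\bm{0}_{n_2})=0$, which is precisely $k$-th order sensitivity at $(\bm{1}_{n_1},\bm{0}_{n_2})$.

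The argument is essentially bookkeeping, so I do not expect a serious obstacle; the one point that needs care is the weight arithmetic in the second case, i.e.\ checking that the perturbed first block always lands in the band $[\,n_1-k,\ n_1-1\,]$ on which $g$ is forced to be constant $1$. This is exactly what $1\le j\le k$ guarantees, and it also covers the extreme case $j=n_1$ (possible only when $k\ge n_1$), since then $\bm{0}_{n_1}$ still lies in that band. I would also note in passing that one should assume $k\le n_2$ so that ${\sf sym}^k_{n_2}$ is well defined, consistent with the convention $k\le n$ in its definition, and that Lemma~\ref{lem:sym} is the only nontrivial input needed.
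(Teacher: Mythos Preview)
Your proposal is correct and follows essentially the same approach as the paper: exhibit the witness point $(\bm{1}_{n_1},\bm{0}_{n_2})$, check that $f$ evaluates to $0$ there, and then split into the two cases ``no bits of $\xv$ flipped'' (handled by Lemma~\ref{lem:sym}) versus ``at least one bit of $\xv$ flipped'' (handled by the constant-$1$ hypothesis on $g_{\var{a}}$ for $n_1>wt(\var{a})\ge n_1-k$). Your write-up is slightly more careful about the weight arithmetic and edge cases than the paper's, but the underlying argument is identical.
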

\begin{proof}
We show that the function is $k$-th order sensitive in the point $(\var{a},\var{b})=(\bm{1}_{n_1},\bm{0}_{n_2})$.
We have $f(\bm{1}_{n_1},\bm{0}_{n_2})={\sf sym}^k_{n_2}(\bm{0}_{n_2})=0$.
Now, we consider any input that can be obtained by flipping at most $k$ bits of $(\bm{1}_{n_1},\bm{0}_{n_2})$.
There can be two cases.

{\bf All bits are flipped in $\yv$:}
In this case the input point is of the form 
$(\xv',\yv')=\left(\bm{1}_{n_1},\left[ \bm{0}_{n_2} \right]_k \right)$
for which the output is of the form 
$f(\xv',\yv')={\sf sym}^k_{n_2} \left(\left[ \bm{0}_{n_2} \right]_k \right)$
which is $1$ from Lemma~\ref{lem:sym}.

{\bf At least one bit is flipped in $\xv$:}
Any such point is of the form $(\xv',\yv')=
\left( \left[ \bm{1}_{n_1} \right]_k,\left[ \bm{0}_{n_2} \right]_{k-1} \right)$.
Then the output of the function is of the form 
$
f(\xv',\yv')
=g_{\left[ \bm{1}_{n_1} \right]_k}(\yv)
=1.
$

This shows that the output of the function is flipped if any $1 \leq i \leq k$ of the input bits are 
flipped at the point $(\bm{1}_{n_1},\bm{0}_{n_2})$.

\end{proof}


Finally we extend the technique of Section~\ref{log-sep} to obtain 
non-constant separation between number of variables and real polynomial 
degree in functions with super-constant order of sensitivity.

\begin{theorem}
\label{th:f10}
There exists a $k$-th order sensitive function in $\M_n^k$ with $n- \frac{\log \left( \frac{n}{2} -k \right)- \log{k}}{k}$
real polynomial degree.
\end{theorem}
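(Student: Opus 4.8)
The plan is to combine Construction~\ref{lem:korder:c} (which gives $k$-th order sensitivity) with the degree-reduction bookkeeping of Theorem~\ref{thm:3} (the logarithmic-separation construction), applied now to the nonlinear-symmetric block $\sym{k}{n_2}$ attached at the critical point $\bm{1}_{n_1}$. First I would fix the split $n = n_1 + n_2$ with $n_1 \approx n_2 \approx n/2$ and start from the function $f$ of Construction~\ref{lem:korder:c}: at $\xv = \bm{1}_{n_1}$ we attach $g_{\bm{1}_{n_1}} = \sym{k}{n_2}$, at every $\xv$ with $n_1 > wt(\xv) \geq n_1 - k$ we attach the constant $1$, and at the remaining points $\xv$ we are free to attach whatever keeps the function in the sensitivity class. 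By Construction~\ref{lem:korder:c} this $f$ is $k$-th order sensitive at $(\bm{1}_{n_1}, \bm{0}_{n_2})$, so the only task is to bound $\pdeg(f)$, and the freedom at the non-constrained points is exactly what we will spend on degree reduction.

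Next I would write down the real polynomial via the MM expansion $p(\xv,\yv) = \sum_{\var{a} \in \F^{n_1}} \pr_{\var{a}}(\xv)\,\hat{g}_{\var{a}}(\yv)$. The high-degree contributions come from the top-weight monomials of $\pr_{\bm{1}_{n_1}}(\xv)$ (degree $n_1$) multiplied by high-degree monomials of $\hat{\sym{k}{n_2}}(\yv)$. Since $\sym{k}{n_2}$ has ANF degree $k$, its real polynomial has degree at most $k$ as well, so naively $\pdeg(f) \le n_1 + k$; the point is that the top-degree pieces can be rewritten. The key step is to iterate the identity of Equation~\eqref{eq:th1}, which trades a term $\tilde{p}(i,u)\frac{1+\lf}{2}$ for lower-degree terms by adding a $\hat{p}$-type polynomial, exactly as in the proof of Theorem~\ref{thm:3}. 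Here the $\lf$ block is replaced by the real polynomial of $\sym{k}{n_2}$, and each monomial of that block of degree $j \le k$ behaves like a linear-function term of weight $j$; the degree-reduction recursion $3^i$ branching at step $i$ is then run, but now we must also account for the factor $k$ coming from the number of distinct symmetric monomials being pushed around. Counting carefully: after $z$ steps we have accumulated $\Theta(3^z)$ terms but the $x$-variable budget is $n_1 \approx n/2$, and incorporating the $k$ monomials of $\sym{k}{n_2}$ and the $k$ constrained weight-levels changes the bound from $\log_3(n/2)$ to roughly $\frac{\log(\frac{n}{2}-k) - \log k}{k}$ — i.e., the savings are divided by $k$ because each of the $k$ symmetric monomials must be reduced in its own independent chain, and each chain only has about $(n_1)/k$ fresh $x$-variables to use.

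I expect the main obstacle to be the interaction bookkeeping: verifying that adding all these $\hat{p}$-type correction polynomials simultaneously for the $k$ different symmetric monomials does not violate the disjointness conditions of Lemma~\ref{lemma:4} (no two added polynomials returning $1$ on a common $\xv$, and none activating a critical point), while also not destroying the $k$-th order sensitivity certificate at $(\bm{1}_{n_1}, \bm{0}_{n_2})$ established by Construction~\ref{lem:korder:c}. In particular one must check that the modified $\phi'$ and $g'$ at the critical points and their weight-$(n_1-k)$ neighbours are untouched, which forces the index $i$ in every $\hat{p}(i,u)$ to satisfy $i < n_1 - k$ (rather than $i < n_1 - 2$ as in Theorem~\ref{thm:3}); this is precisely where the $-k$ term and the $\frac{1}{k}$ scaling enter the final degree formula. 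Once that consistency is in place, the degree count is the same geometric-series argument as in Theorem~\ref{thm:3}, and the sensitivity follows for free, completing the proof.
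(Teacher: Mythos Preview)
Your overall plan---start from Construction~\ref{lem:korder:c} and then run the degree-reduction machinery of Theorem~\ref{thm:3}---is exactly what the paper does. But your account of \emph{why} the saving drops from $\Theta(\log n)$ to $\frac{\log(n/2-k)-\log k}{k}$ is wrong, and the discrepancy matters because it is precisely the new idea in this proof.

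You claim the $1/k$ comes from splitting the work across ``$k$ symmetric monomials'' of $\sym{k}{n_2}$, each getting its own chain with $n_1/k$ fresh $x$-variables. First, $\sym{k}{n_2}$ has $\sum_{j=1}^{k}\binom{n_2}{j}$ monomials, not $k$. Second, even granting $k$ chains of length $n_1/k$ with branching $3$, the resulting depth would be $\log_3(n_1/k)=\log_3 n_1-\log_3 k$, not $(\log n_1-\log k)/k$; your mechanism does not produce the formula you are aiming for.

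What the paper actually does is keep $\sym{k}{n_2}$ as a single block, using $g(\yv)\in\{\widehat{\sf sym}^{k}_{n_2}(\yv),\,1-\widehat{\sf sym}^{k}_{n_2}(\yv)\}$ exactly as $\frac{1\pm\lf}{2}$ were used before. The change is in the shape of the $\hat p$-correction: instead of two zero factors $(1-x_{i+1})(1-x_{i+2})$ one now needs a product $\prod_{\ell=1}^{k+1}(1-x_{i+\ell})$ of roughly $k$ zero factors, because the critical region for $k$-th order sensitivity is all of $\{wt(\xv)\ge n_1-k\}$, and two zeros no longer suffice to keep the correction out of it (your observation $i<n_1-k$ alone does not guarantee this). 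Expanding that product gives $\sim 2^k$ terms, so the branching factor per reduction step becomes $2^k-1$ instead of $3$, and each step consumes $k$ of the $x$-variables instead of $2$. The budget equation is therefore $k\cdot(2^k-1)^p\le \tfrac{n}{2}-k$, and taking logarithms gives $p\gtrsim\frac{\log(n/2-k)-\log k}{k}$; the $1/k$ is $\log(2^k-1)\approx k$ in the denominator, not a partition of the variable budget.
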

\begin{proof}
We construct the corresponding function analogous to the result for first order order sensitive functions defined 
in Theorem~\ref{thm:3}.

We start with partial description of an MM type function $f:\F^{n_1+n_2}\to \F$ with nonlinear 
functions in $\F^{n_2}$ attached to the points in $\F^{n_1}$. 
As we have discussed, this function can be written as
$f(\xv,\yv)= \bigoplus_{\var{a} \in \F^{n_1}} \left( ac_{\var{a}}(\xv) \cdot g_{\var{a}}(\yv) \right)$,
 where $ g_{\var{a}}: \F^{n_2} \rightarrow \F, ~\forall \var{a},
$
and the corresponding real polynomial is written as 
$
p(\xv,\yv)= 
\displaystyle\sum_{\var{a} \in \F^{n_1}}
\pr_{\var{a}}(\xv) \hat{g}_{\var{a}}(\yv)
$
where $\hat{g}_{\var{a}}(\yv): \F^{n_2} \rightarrow \mathbb{R}$ is the real polynomial corresponding to $g_{\var{a}}(\yv)$.

The function $f$ has $g_{\var{a}}$ defined for all points with $wt(\var{a})\geq n-k$, which consists of 
$\sum_{i=0}^k {k \choose i}$ points, with $g_{\bm{1}_{n_2}}={\sf sym^n_k}$.
Let us denote by $\sym{n}{k}(\yv)$ the corresponding real polynomial.
Then the real polynomial corresponding to $\overline{\sf sym^n_k}$
is $1-\sym{n}{k}(\yv)$.

From this point our construction is analogous to that of Theorem~\ref{thm:3}.
In case of first order sensitivity, the polynomial terms we added to reduce the 
degree was of the form 
$\left( \prod_{i=1}^{j_1}(x_i)(1-x_{j_1+1})(1-x_{j_1+2})\prod_{j \in S}x_j \right) g(\yv)$, 
where $g(\yv) \in \{ \frac{1+\lf}{2}, \frac{1-\lf}{2} \}$ and $S \subseteq [n_1]-[j_1+2]$
so that high degree polynomial terms cancel out.  

In case of $k$-th order sensitive function, any new polynomial term that 
we add will be of form 
\begin{equation*}
T(j,S)=
\left( 
\prod_{i_1=1}^{j}(x_{i_1}) 
\prod_{i_2=j+1}^{j+1+k}\left(1-x_{i_2} \right)
\prod_{i_3 \in S} x_{i_3} \right) 
g(\yv), 
\end{equation*}
where $g(\yv) \in \{ \sym{n}{k}(\yv), 1-\sym{n}{k}(\yv) \}$ and $S \subseteq [n_1]-[j+k+1]$.

Let us suppose the total degree of the constructed polynomials at some point 
is $n'$. If we add a term of the form $T(j,S)$ that cancels out a 
polynomial term of degree $n'$, then it creates $k \choose {i-1}$ terms of degree 
$n'-i$ where $0<i<k$. Moreover with a new polynomial term that is added, 
the value of $j$ reduces by $k$ and as we know $j\geq 0$.
Then we have the following lower bound on how much can we reduce the polynomial 
degree of the function depending on the value of~$k$. 

Let us assume the polynomial degree of the function is $n-p$.
To reduce the polynomial degree by $p$ we have to cancel 
$(2^k-1)^p$ terms, and adding such a term will reduce the value of $j$ by 
$k$. Let us assume that $n_1=\frac{n}{2}$. 
Then we have 
\begin{align*}
&k \cdot (2^k-1)^p = \frac{n}{2}-k
& \implies &  \log(k) + p\log(2^k-1)= \log \left( \frac{n}{2} -k \right)
\\
 \implies &  \log(k) + pk \geq \log \left( \frac{n}{2} -k \right)
 &\implies &  p \geq \frac{\log \left( \frac{n}{2} -k \right)- \log(k)}{k}.
\end{align*}
Thus, even if we have $k=o(\log n)$ we get a non-constant separation 
between number of variables $n$ and the real polynomial degree~$n-p$.

\end{proof}

\end{document}